\renewcommand{\braket}[2]{\left\langle #1 , #2 \right\rangle}
\newcommand{\sbra}[1]{\langle #1 |}
\newcommand{\sket}[1]{| #1 \rangle}
\newcommand{\snorm}[1]{\| #1 \|}
\newcommand{\bAR}{\mathbf{A}^{\!R}}
\newcommand{\HNR}{H_{N,R}}
\newcommand{\hc}{\mathrm{h.c.}}
\renewcommand{\d}[1]{\mathrm{d}#1}
\def\HS{\mathrm{HS}}
\def\op{\mathrm{op}}
\newcommand{\one}{\mathbbm{1}}
\def\ii{\mathrm{i}}
\def\im{\mathrm{i}}
\def\bA{\mathbf{A}}
\def\bD{\mathbf{D}}
\def\bbR{\mathbb{R}}
\def\R{\mathbb{R}}
\theoremstyle{plain} 
\newtheorem{theorem}{Theorem}[section]
\newtheorem{assumption}[theorem]{Assumption}
\newtheorem{proposition}[theorem]{Proposition}
\newtheorem{lemma}[theorem]{Lemma}
\newtheorem{corollary}[theorem]{Corollary}
\theoremstyle{remark} 
\newtheorem{remark}[theorem]{Remark}
\theoremstyle{definition} 
\newtheorem{definition}{Definition}
\numberwithin{equation}{section}
\def\b{|}
\newcommand{\rateR}{(\log N)^{-\frac12 + \varepsilon}}
\title{
Derivation of the Chern--Simons--Schr\"odinger equation\\
from the dynamics of an almost-bosonic-anyon gas
}
\author{
Th\'eotime Girardot%
\thanks{GSSI, Via communale, L’Aquila, Italy
	\href{mailto:theotime.girardot@gssi.it}{theotime.girardot@gssi.it}}
\quad
Jinyeop Lee%
\thanks{Department of Applied Mathematics,
	Kyung Hee University,
	1732 Deogyeong-daero, Giheung-gu,
	Yongin-si, Gyeonggi-do, South Korea
	\href{mailto:jinyeop.lee@khu.ac.kr}{jinyeop.lee@khu.ac.kr}}
}
\date{\today}
\begin{document}

\maketitle

\begin{abstract}
	We study the time evolution of an initial product state in a system of almost-bosonic-extended-anyons in the large-particle limit. 
	We show that the dynamics of this system can be well approximated, in finite time, by a product state evolving under the effective Chern--Simons--Schr\"odinger equation. 
	Furthermore, we provide a convergence rate for the approximation in terms of the radius $R = \rateR$ of the extended anyons. 
	These results establish a rigorous connection between the microscopic dynamics of almost-bosonic-anyon gases and the emergent macroscopic behavior described by the Chern--Simons--Schr\"odinger equation.
\end{abstract}


\section{Introduction}
\subsection{General Introduction}

Anyons are two-dimensional quasi-particles interacting through an Aharonov--Bohm magnetic flux. By gauging this interaction directly into the wave function, they can be regarded as free particles with a new symmetry that differs from the bosonic or fermionic cases. Anyons possess a statistical parameter, a real number denoted by $\alpha \in [0,1]$, which interpolates between bosons ($\alpha = 0$) and fermions ($\alpha = 1$). We refer to the survey \cite{Lun_24} for a broader overview of various mathematical developments concerning anyon gases.

This paper focuses on the dynamics of an almost-bosonic anyon gas, where the statistical parameter $\alpha$ tends to zero as the number of particles becomes large. The ground state energy per particle for such systems has been previously studied in \cite{CorDubLunRou-19, CorLunRou-16, Gir_20, LunRou,Vis_25,AEGGLN_26,Ata_Lun_Gir_25}. Also note that anyons can be based on fermionic wave functions giving rise to the study of almost-fermionic anyons \cite{Rou_Lun_Lev_26,Rou_Gir_21,Lam_Lun_Rou_23,Lun_17,Lun_Rou_16}.  
It has been shown that the ground-state energy per particle can be approximated by a one-body functional.
The associated ground state of the $N$-body system then converges to a product state based on the minimizer of the one-body functional. 
The aim of the present paper is to show that, for a sufficiently large number of particles, such a product state will remain approximately in a product form over time, and to present an effective governing equation for this behavior. 

The dynamical properties of a product state can be analyzed using various methods, including the BBGKY hierarchy \cite{Erdos2010Derivation-of-the-GP,ErdYau-01,GinVel-79,GinVel-79b,Hepp1974,Spohn1980}, semi-classical analysis \cite{Am_Ni_08,Am_Ni_09,Am_N_11,Gol_Mou_Paul_16,Go_Pa_19,Gol_Paul_22} and the coherent state approach \cite{Benedikter2015,Boccato2018CompleteBEC,BocCenSch-15,ChenLeeLee2018,ChenLeeSchlein2011,Rodnianski2009}. 
In this paper, neither of these two strategies will be followed. We will apply a third strategy inspired by Fock space methods consisting of counting the number of particles in the condensate state, which has been developed in \cite{KnoPic-10,MitrouskasPetratPickl2019,Pickl2011}. This approach will allow us to demonstrate the convergence of a product state $\varphi_0^{\otimes N}$ to a product state $\varphi_t^{\otimes N}$, where the evolution of $\varphi_t$ is governed by the Chern--Simons--Schr\"odinger (CSS) equation with initial data $\varphi_0$.
This is in agreement with what has been shown by physicists in the early 1990s, i.e., that many-particle anyon systems can be described using the CSS equation \cite{IL1992PR,JP1990PRD,JP1990PRL}.
In the following sections, we introduce the Hamiltonian describing the anyon gas, define the dynamical CSS equation and present the main result followed by discussions of its limitations and implications.

\subsection{The Anyon Gas}
\subsubsection{Point-Like Anyons}
To describe anyons, the wave function of the system has to formally behave as
\begin{equation}\label{def:sym}
	\tilde{\Psi}(x_{1},\dots,x_{j},\dots,x_{k},\dots,x_{N})=e^{\im\alpha\pi}\tilde{\Psi}(x_{1},\dots,x_{k},\dots,x_{j},\dots,x_{N})
\end{equation}
when we exchange particles $x_j$ and $x_k$. If we apply the same exchange twice, following the rule of \eqref{def:sym}, we observe that $\tilde{\Psi}$ is multi-valued. To overcome this issue, we obtain this symmetry by writing
\[
\tilde{\Psi}(x_{1},\dots,x_{N})=\prod_{j<k}e^{\im\alpha\phi_{jk}}\Psi(x_{1},\dots,x_{N}) \quad\text{where}\quad
\phi_{jk}=\arg\frac{x_{j}-x_{k}}{|x_{j}-x_{k}|}
\]
with $\Psi$ a bosonic wave function, i.e., symmetric under particle exchange. In that way, we can encode the symmetry \eqref{def:sym} into a product playing the role of the phase of a bosonic wave function. Acting on $\tilde{\Psi}$ with the Laplacian is then equivalent to acting on $\Psi$ with a magnetic Laplacian in which an $\alpha$-dependent magnetic potential appears. This is called the gauge picture of the anyonic problem and consists in working with the Hamiltonian
\begin{equation}\label{def:HN}
	H_{N}=\sum_{j=1}^{N}(-\im\nabla_j+\alpha \bA_j)^{2}\quad\text{with}\quad \bA_j=\sum_{k\neq j}\frac{(x_{j}-x_{k})^{\perp}}{|x_{j}-x_{k}|^{2}}
\end{equation}
acting on $L^2_{\mathrm{sym}}(\R^{2N})$. Here, we have denoted $(x,y)^{\perp}=(-y,x)$.

The operator $\bA_j(x_1, x_2,\dots,x_N)=\bA_j$ is the statistical gauge vector potential felt by particle $j$ due to the influence of all the other particles. 
In this picture, anyons are described as bosons, each of them carrying an Aharonov--Bohm magnetic flux of strength $\alpha$.

\subsubsection{Extended Anyons}
The Hamiltonian defined in \eqref{def:HN} is too singular to be an operator acting on a pure tensor product state $u^{\otimes N} \in{}\bigotimes^N_{\mathrm{sym}} L^2(\R^2)$. In order to circumvent this problem, we introduce a length $R$ over which the equivalent magnetic flux is smeared. In our approach, $R=f(N)$ for some function $f(N)\to 0$ when $N\to\infty$. In that limit, we will recover the point-like anyon system.
We consider the 2D Coulomb potential generated by a unit charge smeared over the disc of radius $R$:
\begin{equation}\label{wr}
	w_{R}(x)=\left(w_0\ast \frac{\one_{B(0,R)}}{\pi R^2}\right)(x),\:\:\text{with }\:\:w_{0}(x)=\log |x|.
\end{equation}
We remark that the Aharonov--Bohm magnetic
potential and field may be recovered as
\begin{align*}
	\nabla^{\perp}w_{0}(x)=\frac{x^{\perp}}{|x|^{2}},\quad\:\text{and}\quad\:\mathbf{B}_{0}(x)=\nabla^{\perp}\cdot\nabla^{\perp}w_{0}=\Delta w_{0}=2\pi\delta_{0}.
\end{align*}
We can then define the regularized Hamiltonian
\begin{equation}
	\HNR=\sum_{j=1}^{N}(-\im \nabla_j+\alpha \bAR_j)^{2}\quad\text{where}\quad  \bAR(x_{j}):=\sum_{k\neq j}\nabla^{\perp}w_{R}(x_{j}-x_{k})
	\label{def:HRN}
\end{equation}
acting on $L^{2}_{\mathrm{sym}}(\mathbb{R}^{2N})$ as a self-adjoint operator with the same form domain as the non-interacting bosonic Hamiltonian where $\bAR=0$ due to the boundedness of the interaction $\bAR$ for $R>0$; see \cite{AvrHerSim,ReedSimonII1975}. We may also use the formulation
\begin{equation}
	\nabla^{\perp}w_{R}(x_{j}-x_{k})=\frac{(x_{j}-x_{k})^{\perp}}{|x_{j}-x_{k}|_R^2}\quad\text{where}\quad|x|_R=\max(R,|x|).
\end{equation}
Notice that in this picture, $R$ can be interpreted as the radius of the particles. In the limit $R=f(N)\to 0$, any specific choice of $f$ may a priori define a specific type of extended anyons \cite{Co_Fer_20,Am_Cam_Bak_95}. These different gases may, a priori, have different mean-field limits \cite{Vis_25,AEGGLN_26,Ata_Lun_Gir_25}. An interesting threshold is $R={N}^{-\frac12}$ which corresponds to the average distance between anyons in the plane. In the present paper, we consider a slower convergence rate $R=\rateR$. See Remark~\ref{rem:R} for more details on that point. 

\subsubsection{The Mean-Field Limit of Almost-Bosonic Anyons}

By expanding \eqref{def:HRN}, we interpret our anyons as an interacting bosonic system, described by the following Hamiltonian, which includes a two-body and a three-body term:
\begin{align}
	\HNR=&\sum_{j=1}^{N}-\Delta_{j}  \;\tag{Kinetic term}\nonumber\\
	&+\alpha\sum_{j\neq k}\left((-\im\nabla_{j})\cdot\nabla^{\perp}w_{R}(x_{j}-x_{k})+\nabla^{\perp}w_{R}(x_{j}-x_{k})\cdot(-\im\nabla_{j})\right)\;\tag{Mixed two-body term}\nonumber\\
	&+\alpha^{2}\sum_{j\neq k\neq l}\nabla^{\perp}w_{R}(x_{j}-x_{k})\cdot\nabla^{\perp}w_{R}(x_{j}-x_{l})\;\tag{Three-body term}\nonumber\\
	&+\alpha^{2}\sum_{j\neq k}\left|\nabla^{\perp}w_{R}(x_{j}-x_{k})\right|^{2}\;\tag{Singular two-body term}.\\
	\label{expanded_H}
\end{align}

Notice here that our gauge formulation of the anyon gas presents a two-body term mixing momentum and position.
The mean-field limit is defined by taking 
\begin{equation}
	\alpha:=\frac{\beta}{N-1}\nonumber
\end{equation}
where $\beta$ is a real number. 
As shown in \cite[Theorem 1.1]{Gir_20}, for some trapping potential $V := \sum_{j=1}^N v(x_j)$ and for all $0 < \eta < 1/4$ such that $R = N^{-\eta}$, the minimizer of $\HNR + V$ is well approximated by the state $u_{\mathrm{af}}^{\otimes N}$, where the function $u_{\mathrm{af}} \in L^2(\R^2)$ minimizes the average-field energy
\begin{equation}\label{def:Eaf}
	\mathcal{E}_R^{\mathrm{af}}[u]:=\int_{\R^2}|(-\im\nabla +\beta \bAR[|u|^2])u|^2 +\int_{\R^2}v|u|^2
\end{equation}
under the constraint $\norm{u_{\mathrm{af}}}_{L^2(\R^2)}=1$ and where $\bAR[|u|^2]:=\nabla^{\perp}w_R\ast |u|^2$. In the above, we omitted the measure. We will omit it whenever it coincides with the Lebesgue measure.
Moreover, it holds that 
\[
N^{-1}\inf_{L^{2}_{\mathrm{sym}}(\mathbb{R}^{2N})} (\HNR+V)\to \inf_{\norm{u}_{L^2(\R^2)}=1}\mathcal{E}_0^{\mathrm{af}}[u]
\quad\text{as}\quad
N\to +\infty.
\]
This is comparable to Hartree theory but with a self-generated magnetic field $\bAR[|u|^2]$ instead of a smeared two-body electric field $V_e\ast |u|^2$.

\subsection{The Dynamics}
\subsubsection{The Hartree Dynamics}

If the potential $V$ is removed at $t = 0$, letting the system evolve, the dynamics of the $N$ almost-bosonic-extended anyons are governed by the Schrödinger equation
\begin{equation}\label{def:schro}
	\ii \partial_t \Psi_N(t) = \HNR \Psi_N(t).
\end{equation}
If $\Psi_N(0)\in L^2(\R^{2N})$ is the initial state of the system at time $t=0$, we can write
\begin{equation}
	\Psi_N(t)=e^{-\im t \HNR }\Psi_N(0).\nonumber
\end{equation}
Let $\sket{f}\sbra{f}$ denote the orthogonal projection onto $\operatorname{span}\{f\}$. The Schrödinger equation can also be written in terms of the density matrix and takes the form
\begin{align}\label{def:Schro2}
	\im \partial_{t}\gamma_{N}(t)=\left [\HNR,\gamma_{N}(t)\right ]\quad \text{where}\quad\gamma_N(t)=|\Psi_N(t)\rangle\langle\Psi_N(t)|
\end{align}
with the associated $k$-particle reduced density matrices defined as
\begin{equation}
	\gamma_{N}^{(k)}(t):=\Tr_{k+1,\dots, N} \left [\gamma_{N}(t)\right ].\nonumber
\end{equation}
Our aim is to show that if the initial datum of \eqref{def:schro} is a product state $\Psi_N(0)=\varphi_0^{\otimes N}$ for a given $\varphi_0\in L^2(\R^2)$, at later time, the state $\Psi_N(t)=e^{-\im \HNR t}\Psi_N(0)$ can be approximated by an uncorrelated state $\varphi_t^{\otimes N}$ where $\varphi_t$ is driven by a one-body nonlinear equation with initial data $\varphi_0$.

To infer what could be the one-body effective equation, we can write the BBGKY hierarchy of the nonlinear system, i.e., expand \eqref{def:Schro2} in terms of $\gamma_N^{(k)}(t)$ and formally take the limit 
\[
N\to \infty ,\quad\alpha=\beta(N-1)^{-1}\to 0,\quad R\to 0.
\]
Under the assumption that, in this limit, $\gamma_N^{(k)}(t)\to |\varphi(t)\rangle\langle\varphi(t)|^{\otimes k}$ we obtain that $\varphi_t$ is driven by the Chern--Simons--Schr\"odinger equation defined in Definition~\ref{def:css}. A detailed formal calculation of the BBGKY hierarchy is presented in Appendix~\ref{app:BBGKY}.
In our case, showing the existence of a solution $\gamma_{\infty}(t) \simeq \varphi_t^{\otimes (N=\infty)}$ without an explicit convergence rate to control the inverse powers of $R$ is a challenging task that we cannot accomplish using only the general BBGKY approach.

\subsubsection{The Chern--Simons--Schr\"odinger Equation}

Another way to guess what should be the correct effective equation is to compute
\begin{align}
	\im \partial_{t}u &=\left.\frac{\delta \mathcal{E}_R^{\mathrm{af}}[v,\overline{v}]}{\delta\overline{v}}\right|_{u}. \nonumber
\end{align}
In the case of the electric Hartree equation, the calculation gives
\begin{align}
	\im \partial_{t}u &=\left.\frac{\delta (\int |\nabla v|^2 +(V_e\ast |v|^2) |v|^2)}{\delta\overline{v}}\right|_{u}=-\Delta u +2(V_e\ast |u|^2) u. \nonumber
\end{align}
In the present magnetic case, the nonlinearity of $\bAR[|u|^2]$ in \eqref{def:Eaf} is squared.
Carrying out the calculation, we derive the following equation.
\begin{definition}[\textbf{Chern--Simons--Schr\"odinger equation}]\label{def:css}
	We denote by $\text{CSS}(R,u_0)$ the differential problem whose unknown is     a function $u:(\R_+\times \R^2)\mapsto \mathbb{C}$ satisfying 
	\begin{align}\label{eq:pilotu}
		\im \partial_{t}u 
		&=\left (-\im\nabla +\beta \bAR\left [| u|^{2}\right ]\right )^{2}u\nonumber\\
		&\quad- \beta\left [\nabla^{\perp}w_{R}\ast\left (2\beta \bAR\left [| u|^{2}\right ]| u|^{2}+\im\left (u\nabla\overline{u}-\overline{u}\nabla u \right ) \right )\right ]u
	\end{align}
	with initial condition $u(0,x)=u_0(x)\in H^2(\R^{2})$.
	We define $\text{CSS}(u_0):=\text{CSS}(R=0,u_0)$.
\end{definition}

The well-posedness of $\text{CSS}(R,u_0)$ is discussed in Section~\ref{sec:css}. General references on this topic can be found in \cite{Ber_Bou_Sau_95,Liu_Smi_13,Liu_Smi_Tat_14,Oh_Pus_13}, 
with particular attention given to its connection with the fractional quantum Hall effect in \cite{Esp_18,Raj_Sig_20}.
For later convenience, we denote $\mathbf{J}\left [u\right]:=\im(u\nabla \overline{u} -\overline{u}\nabla u)$.

\begin{remark}
	Note that $\text{CSS}(u_0)$ is sometimes written in a gauge-covariant form:
	\begin{equation}\label{eq:CSSform}
		\left\{
		\begin{aligned}
			\bD_{t} u &=\; \ii \bD_{\ell} \bD_{\ell} u + \ii g \abs{u}^{2} u, \\
			F_{12} &= - \frac{1}{2} \abs{u}^{2}, \\
			F_{01} &=  - \frac{\ii}{2} ( u \overline{\bD_{2} u} - (\bD_{2} u) \overline{u} ), \\
			F_{02} &=\; \frac{\ii}{2} ( u \overline{\bD_{1} u} - (\bD_{1} u) \overline{u} )
		\end{aligned}
		\right.
	\end{equation}
	except that in our case $g=0$.
	Here, $A = A_{0} \d x^{0} + A_{1} \d x^{1} + A_{2} \d x^{2}$ is a real-valued 1-form (gauge potential), $F_{\mu\nu} = \partial_{\mu} A_{\nu} - \partial_{\nu} A_{\mu}$ (curvature 2-form), $\bD_{\mu} = \partial_{\mu} + \ii A_{\mu}$ (covariant derivative), and $g \in \bbR$. 
	Here, by choosing the Coulomb gauge, $\partial_1 A_1 + \partial_2 A_2 = 0$, Eq. \eqref{eq:CSSform} can be reformulated as
	\begin{equation}\label{eq:CSS-Coulomb-gauge}
		\left\{
		\begin{aligned}
			\ii \partial_t u &=  - \Delta u + A_0 u - 2\ii \sum_{\ell=1}^{2} A_\ell \partial_\ell u + \sum_{\ell=1}^{2} A_\ell A_\ell u - g |u|^2 u, \\
			\Delta A_1 &= \frac{1}{2} \partial_2 |u|^2, \\
			\Delta A_2 &= -\frac{1}{2} \partial_1 |u|^2, \\
			\Delta A_0 &= \frac{\ii}{2} \partial_1 (u \overline{\bD_2 u} - (\bD_2 u) \overline{u}) - \frac{\ii}{2} \partial_2 (u \overline{\bD_1 u} - (\bD_1 u) \overline{u}).
		\end{aligned}
		\right. \,
	\end{equation}
	By noting that the fundamental solution of the Laplacian in $\R^2$ is the Coulomb potential, we see that \eqref{eq:CSS-Coulomb-gauge} is equivalent to \eqref{eq:pilotu} with $\beta=1$, $R=0$, and the extra nonlinear term $-g|u|^2u$. We refer to \cite{Ata_24,Lim_Min_17} for recent results considering $g\neq 0$. We also notice that the Lagrangian formulation of this equation corresponds to a usual Chern--Simons equation coupled to the density $|u|^2$, see for instance \cite[Eq. (1.4)]{JP1990PRD}.
\end{remark}

\subsection{Main Theorem}
Before stating the main theorem of the paper, we explicitly outline the following important assumption.
\begin{assumption}\label{assumption}
	Throughout the paper, we assume that the initial data satisfies $\varphi_0 \in H^2(\mathbb{R}^2)$ and that the parameter $|\beta|$ is bounded by a constant $c > 0$, which depends on $\|\varphi_0\|_{H^1}$ but not on $\|\varphi_0\|_{H^2}$. These conditions will ensure that $\|\varphi_t\|_{H^2}\leq C(t)$ which is a crucial quantity our method cannot avoid.
\end{assumption}

\begin{theorem}[\textbf{Convergence from the many-body Schr\"odinger dynamics to CSS}.]\mbox{}
\label{thm:main}
	Let $\varphi_t$ be the solution of $\mathrm{CSS}(\varphi_0)$ defined in \eqref{eq:pilotu} with initial data $\varphi_0 \in H^2(\mathbb{R}^2)$. 
	Let $\Psi_N(t)$ denote the solution of the Schrödinger equation \eqref{def:schro} with Hamiltonian $\HNR$ for any $R\to 0$ such that $R\gtrsim (\log N)^{-\frac{1}{2}+\varepsilon}$ and $\Psi_N(0) = \varphi_0^{\otimes N}$. 
	Choose any $k \in \mathbb{N}$, and let $\gamma_N^{(k)}$ denote the $k$-particle reduced density matrix associated with $\Psi_N(t)$. 
	Then, there exist constants $c, T>0$ depending on $\|\varphi_0\|_{H^1}$ and $C > 0$  depending on $\|\varphi_0\|_{H^2}$ and $\|\varphi_0\|_{H^1}$ such that for any $|\beta| \leq c$ and for all time $0 \leq t \leq T$, we have, for sufficiently large $N$,
	\begin{equation}\label{eq:main}
		\Tr\left|\gamma_N^{(k)}(t) - \sket{\varphi_t^{\otimes k}} \sbra{\varphi_t^{\otimes k}}\right| 
		\leq C (\log N)^{-\frac{1}{2}+\varepsilon}
	\end{equation}
	for any choice of $\varepsilon >0$.
\end{theorem}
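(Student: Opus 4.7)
The plan is to apply the counting method of Pickl and collaborators \cite{Pickl2011,KnoPic-10,MitrouskasPetratPickl2019}, rather than the BBGKY or coherent-state approaches. Introduce the one-body projectors $p_j^t := \sket{\varphi_t}\sbra{\varphi_t}_j$, $q_j^t := \one - p_j^t$, and the spectral projectors $P_k^t$ onto the subspace where exactly $k$ out of the $N$ coordinates lie in the range of $q^t$. For a suitable weight $m : \{0,\dots,N\} \to [0,1]$ (for instance $m(k) = \sqrt{k/N}$), define the counting functional
\begin{equation*}
    \alpha_N(t) := \braket{\Psi_N(t)}{\widehat{m}^t\, \Psi_N(t)}, \qquad \widehat{m}^t := \sum_{k=0}^{N} m(k)\, P_k^t .
\end{equation*}
It vanishes at $t=0$ because $\Psi_N(0)=\varphi_0^{\otimes N}$ lies entirely in the range of $P_0^0$, and, by the standard inequalities of the framework, $\alpha_N(t)$ controls $\Tr|\gamma_N^{(k)}(t) - \sket{\varphi_t^{\otimes k}}\sbra{\varphi_t^{\otimes k}}|$ up to a constant depending only on $k$. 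It therefore suffices to bound $\alpha_N(t)$.

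The central step is to derive a Gr\"onwall-type inequality
\begin{equation*}
    |\dot{\alpha}_N(t)| \;\leq\; C(t)\bigl(\alpha_N(t) + (\log N)^{-1+2\varepsilon}\bigr),
\end{equation*}
with $C(t)$ depending on $\|\varphi_t\|_{H^2}$, which stays finite on $[0,T]$ by the $H^2$ well-posedness of $\mathrm{CSS}(\varphi_0)$ discussed in Section~\ref{sec:css}. Differentiating $\alpha_N$ using both the Schr\"odinger equation and the CSS equation produces the commutator $[\HNR - \sum_{j=1}^N h^{\mathrm{CSS}}_j(t),\, \widehat{m}^t]$, where $h^{\mathrm{CSS}}_j(t)$ is the one-body operator read off from Definition~\ref{def:css}. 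Expanding $\HNR$ as in \eqref{expanded_H}, each of the four contributions is compared to its mean-field counterpart: one inserts the resolution $\one = p^t + q^t$ on the free coordinates and applies the standard Pickl-calculus manipulations to move projections past $\widehat{m}$, closing the estimates by Cauchy--Schwarz. The mean-field-compatible pieces produce a contribution proportional to $\alpha_N$, whereas the remainders—arising from replacing microscopic quantities such as $\nabla^\perp w_R(x_j - x_k)$ by the self-generated field $\bAR[|\varphi_t|^2](x_j)$—produce the explicit $R$-dependent error.

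The main difficulty will be the mixed momentum--position term $\alpha \sum_{j\neq k}(-\im\nabla_j \cdot \nabla^\perp w_R(x_j-x_k) + \hc)$, together with the singular two-body term $\alpha^{2}\sum_{j\neq k} |\nabla^\perp w_R(x_j-x_k)|^{2}$, both involving the vector field $\nabla^\perp w_R$, which is bounded only by $R^{-1}$ in $L^\infty$ and fails to be square-integrable. For the mixed term a direct operator-norm bound would be far too crude; one has to integrate by parts and place the momentum $\nabla_j$ either onto $\Psi_N$ (where it is controlled by the conserved kinetic energy of $\HNR$) or onto $\varphi_t$ (where it is controlled by $\|\varphi_t\|_{H^2}$), while the remaining factor of $\nabla^\perp w_R$ pairs with a $q_j$ or $q_k$ projection and contributes only through $\alpha_N$ by Cauchy--Schwarz. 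The three-body and singular two-body terms are handled in the same spirit, comparing $|\nabla^\perp w_R|^{2}$ to its mean-field surrogate $|\bAR[|\varphi_t|^2]|^{2}$. Balancing the resulting errors—the worst of which diverge like $|\log R|$—against the mean-field prefactor $\alpha=\beta/(N-1)$ is precisely what pins the regularization scale to $R=\rateR$. Gr\"onwall on $[0,T]$ followed by the standard conversion from the counting functional to the trace norm of reduced density matrices then yields \eqref{eq:main}.
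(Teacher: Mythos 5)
Your plan correctly identifies the Pickl counting framework that the paper uses, but the sketch glosses over the two structural features that account for most of the difficulty and that pin the rate to $(\log N)^{-1/2+\varepsilon}$. First, when you say that the momentum $\nabla_j$ can be placed ``onto $\Psi_N$ (where it is controlled by the conserved kinetic energy of $\HNR$),'' you are assuming something that must itself be proven: one needs the kinetic energy of the \emph{non-condensed} part, $\|\nabla_1 q_1 \Psi_N\|^2$, and this does not follow from energy conservation alone. In the paper this requires a second, coupled Gr\"onwall argument on $M_N(t) := \braket{\Psi_N(t)}{\hat m(\tfrac12)\Psi_N(t)}$ (Lemma~\ref{lem:kinetic} and Theorem~\ref{thm:K}), together with energy conservation for both the many-body dynamics and the pilot equation (Theorem~\ref{Thm:cons}). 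In that step one must show nonnegativity of an expression of the form $(1-\kappa)(-\Delta_{x_1})+\beta v(x_1,x_2)+\beta^2 w(\cdot)$ (see \eqref{eq:posterm}); because the mixed term $v$ is \emph{not} a positive operator and $\Psi_N'$ lacks full permutation symmetry, one is forced into the crude bound $\sup|\nabla^\perp w_R|\lesssim R^{-1}$, producing an $R^{-2}$ (Remarks~\ref{rem:positivity} and \ref{rem:posproof}). This $R^{-2}$, not the $|\log R|$ factors you point to, is the actual bottleneck: if the errors were truly only of size $|\log R|$ one could take polynomial $R=N^{-\eta}$, as the authors note. Your proposed ``balancing $|\log R|$ against $\alpha=\beta/(N-1)$'' therefore misdiagnoses where the rate $R=(\log N)^{-1/2+\varepsilon}$ comes from: it is the $e^{CT|\log R|/R^2}$ growth factor in the $M_N$-Gr\"onwall that forces $R^{-2}\lesssim \log N$.

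Second, the proposal omits the comparison between $\mathrm{CSS}(R,\varphi_0)$ and $\mathrm{CSS}(\varphi_0)$. The whole Gr\"onwall machinery is run against the $R$-dependent Hartree Hamiltonian, so the condensate built is $\varphi_t^R$ (a solution of $\mathrm{CSS}(R,\varphi_0)$), while Theorem~\ref{thm:main} is stated with $\varphi_t$ solving the limiting equation $\mathrm{CSS}(\varphi_0)$. Closing the gap requires Lemma~\ref{lem:Conv_pilot}, which gives $\|\varphi_t-\varphi_t^R\|_2\lesssim R$ and—crucially—this is what produces the dominant error $(\log N)^{-1/2+\varepsilon}$ in \eqref{eq:main}. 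Without this step the argument is circular (you would be using $\varphi_t$ in the projectors $p^t,q^t$ but comparing against $H_{N,R}^{\mathrm H}$ built from $\varphi_t$ rather than $\varphi_t^R$, which destroys the mean-field cancellation). Minor further issues: the singular two-body term $\alpha^2\sum|\nabla^\perp w_R|^2$ has no mean-field counterpart to compare against—it is subleading in $\alpha$ and enters only as a small error (Lemma~\ref{lem:X})—so ``comparing it to $|\bAR[|\varphi_t|^2]|^2$'' is not what happens; that surrogate arises instead from the \emph{three-body} term. And your claimed error $(\log N)^{-1+2\varepsilon}$ in the Gr\"onwall inequality is not what one obtains; the actual $b$-term is $N^{-1/2}R^{-4}|\log R|\,e^{CT|\log R|/R^2}$, and the final rate $(\log N)^{-1/2+\varepsilon}$ is controlled by the pilot-equation comparison, not by that error.
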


The result is new in several aspects. This is, as far as we know, the first time that the CSS equation is derived from a many-body problem and more generally, that the dynamics of anyonic particles is studied. Although similar results already exist for bosons with three-body interaction, for instance in \cite{ChenPavlovic2011JFA,XChen201ARMA,Lee_21,NS2020CMP}, our treatment of the non-positive mixed two-body term is entirely new. We also bring a modest improvement to well-known bounds for anyonic terms, such as those in Corollary~\ref{cor:nabwRLp} and Lemma~\ref{lem:A-infty}, \ref{lem:sing2bd} and \ref{lem:mix2bd}. We discuss more specific points in the following remarks.

\begin{remark}
	In \cite[Theorem 2.1]{Ber_Bou_Sau_95}, the well-posedness of the CSS equation is established in $H^2(\R^2)$ for times up to some $T'$ due to the existence of blow-up, \cite[Theorem 3.1]{Ber_Bou_Sau_95}. 
	In our case, $g = 0$, and blow-up might not occur. 
	Nevertheless, in our approach, we need to control $\snorm{\varphi_t}_{H^2}$ and only manage to bound it uniformly in $R$ on finite time intervals or must allow it to diverge in $R$ on unbounded time intervals; see also the next Remark~\ref{rem:beta}. 
	The CSS equation is also known to be globally well-posed for small $H^1(\R^2)$ initial data, \cite[Theorem 2.3]{Ber_Bou_Sau_95} or for zero energy initial data as shown in \cite[Theorem 1.2]{Ata_24}. 
\end{remark}

\begin{remark}\label{rem:beta}
	The assumption on $\beta$ is necessary to obtain the uniform control $\snorm{\varphi_t^R}_{H^2} \leq C$ in Theorem~\ref{thm:wellpose} for all $R$. 
	It is important to note that this uniform control in $R$ only holds for finite times $0 \leq t < T$. 
	Without this assumption, when considering all $\beta \in \R$, the best control we can achieve is $\snorm{\varphi_t^R}_{H^2} \leq R^{-Ct}$, as shown in Lemma~\ref{thm:u-GWP}, which is valid for all $t \geq 0$. 
	However, this bound diverges (as $R\to 0$) too rapidly for our proof to apply. 
	The proof could still work with a control of the form $\snorm{\varphi_t^R}_{H^2} \leq C|\log R|$. 
	It is worth mentioning that, although Lemmas~\ref{lem:Vm1/2} and \ref{lem:kinetic} are stated with the constraint $|\beta| \leq c$, their proofs could be adapted to avoid any dependence on $\snorm{\varphi_t^R}_{H^2}$ and remain valid for all $\beta$. 
	The only additional effort would come from the estimate \eqref{ine:v'}, where $\snorm{\varphi_t^R}_{H^2}$ could be replaced by $|\log R|$. 
	The crucial obstacle to relaxing the assumption on $\beta$ is located in the proof of Lemma~\ref{lem:Conv_pilot} that we cannot obtain without using $\snorm{\varphi_t^R}_{H^2}\leq C$.
\end{remark}

\begin{remark}
	The initial data factorization assumption $\Psi_N(0)=\varphi_0^{\otimes N}$ could be relaxed to include a class of almost condensed states $\Psi_N(0)\to\varphi_0^{\otimes N}$ as $N\to \infty$ in trace norm. Our choice was to simplify the exposition. 
\end{remark}

\begin{remark}
	We could certainly include a repulsive two-body interaction. This potential extension is discussed in Appendix~\ref{sec:twobody}.
\end{remark}

\begin{remark}
	It is likely that we could also include an external magnetic interaction by deriving bounds similar to those in \cite[Lemma 2.3]{Gir_20} to replace the estimates in Lemma~\ref{lem:mix2bd}. 
	The key challenge would then be located in Lemma~\ref{lem:kinetic}, where it would be necessary to control the evolution of the projection of the magnetic gradient operator, rather than the pure gradient.
\end{remark}

\subsection{Strategy of the Proof}
\subsubsection{Structure of the Paper}\label{sec:strategy}
The idea of the proof follows the techniques developed in \cite{KnoPic-10,Pickl2011}. We will stick to the notation introduced in \cite{KnoPic-10}, where the indicators of the particles outside the condensate are measured using the quantities
\begin{equation}\label{def:EandR}
	\mathcal N_+^{(k)}(t):=1-\braket{\varphi_t^{\otimes k}}{\gamma_N^{(k)}(t)\varphi_t^{\otimes k}}\quad\text{and}\quad R_N^{(k)}(t):=\Tr\left|\gamma_N^{(k)}(t)-\sket{\varphi_t^{\otimes k}}\sbra{\varphi_t^{\otimes k}}\right|.
\end{equation} 
The method relies on controlling $R_N^{(k)}(t)$ using that
\begin{equation}\label{ine:EAB}
	(R_N^{(k)}(t))^2\lesssim \mathcal N_+^{(k)}(t)\lesssim k\mathcal N_+(t)
\end{equation}
as shown in \cite[Lemma 2.1 and Lemma 2.3]{KnoPic-10} where we have denoted $\mathcal N_+(t):=\mathcal N_+^{(1)}(t)$\footnote{Note that we can also write $\mathcal N_+(t)=N^{-1}(N-\langle a^*_{\varphi_t}a_{\varphi_t}\rangle_{\Psi_N(t)})$ where  $a^*$ and $a$ are the usual creation and annihilation operators.}.
The control of $\mathcal N_+(t)$ relies on Gr\"onwall's lemma~\ref{lem:Gron} providing
\begin{equation}\label{Gron1}
	\mathcal N_+(t)\lesssim \mathcal N_+(0)+\mathcal K_{+}(t)+o_N(1)
\end{equation}
where $\mathcal K_{+}(t)$ is the kinetic energy of the excited particles properly defined and controlled in Lemma \ref{lem:kinetic}. In this lemma, the control of $\mathcal K_{+}(t)$ is made through
\begin{equation}\label{kin}
	\mathcal K_+(t)\lesssim \sqrt{ \mathcal N_+}+o_N(1)
\end{equation}
where $\sqrt{ \mathcal N_+}(t)$ is defined in \eqref{def:N+}. We conclude the proof by controlling $\sqrt{ \mathcal N_+}(t)$ in Theorem \ref{thm:K}
\begin{equation}\label{Gron2}
	\sqrt{\mathcal N_+}(t)\lesssim \sqrt{\mathcal N_+}(0)+\mathcal K_{+}(t)+o_N(1)
\end{equation}
which can be closed using of \eqref{kin}. Using that $\mathcal N_+(t)\leq \sqrt{\mathcal N_+}(t)$ we realize that the first Gr\"onwall of \eqref{Gron1} is superfluous and the proof only consists of proving \eqref{Gron2} and \eqref{kin}.

To make the one-body time evolution meaningful, the time $t$ will be bounded by a given $T$ until which $\text{CSS}(\varphi_0)$ stays well-posed as shown in Theorem~\ref{thm:wellpose}. 
A difficult part of the calculation consists in controlling the divergences in the parameter $R$ appearing everywhere. In Section~\ref{sec:bounds}, we derive a collection of estimates to control the different operators of $\HNR$ defined in \eqref{expanded_H}. The bounds are Hardy-type inequalities and provide a control in terms of $|\log R|(1-\Delta)$, i.e., a slightly diverging kinetic energy plus a constant. That section also contains bounds of several quantities appearing later in the calculation. In Section~\ref{sec:css}, we establish the well-posedness of $\text{CSS}(R,\varphi_0)$ as well as the conservation of the energy. This allows us to control $\norm{\varphi_t}_{H^1}$ in terms of $\norm{\varphi_0}_{H^1}$ and to control $\norm{\varphi_t}_{H^2}$ in two different ways,     by a constant $C$ independent of $R$ on finite time intervals in Lemma~\ref{thm:wellpose} and by Theorem~\ref{thm:u-GWP}. 
The derivation of \eqref{ine:EAB} is carried out in Section~\ref{sec:pickl}. In that calculation, most of the $-\nabla $ can be moved to act on a $\varphi_t$ and provide an $\norm{\varphi_t}_{H^1}\leq C$ via the bounds of Section \ref{sec:bounds}. Some do not, which requires to estimate the time growth of the kinetic energy of the non-condensed part of $\Psi_N(t)$, $\mathcal K_+(t)$. This is the content of Section~\ref{sec:kinetic} and is achieved via the second Gr\"onwall-type lemma. In the last section, Section~\ref{sec:con}, we gather all the intermediate results and conclude by proving Theorem~\ref{thm:main}.

\subsubsection{Notations and Basic Inequalities}
The main Hilbert space in our paper is $L^2_{\mathrm{sym}}(\R^{2N})$. For that reason, we will always mean $\norm{\Psi}_{L^2_{\mathrm{sym}}(\R^{2N})}$ when writing $\norm{\Psi}$ and the $L^p_{\mathrm{sym}}(\R^{2})$ norms will be denoted by $\norm{\varphi}_p$. For the Sobolev norms $H^p_{\mathrm{sym}}(\R^{2})$, we write $H^p$ in the subscript. The symbol $C$ will denote any finite strictly positive constant possibly depending on fixed parameters such as $\beta$, $\norm{\varphi_0}_{H^1}$ or $\norm{\varphi_0}_{H^2}$ but never on $R$, $N$, or quantities depending on them. It can be different from line to line. We will also use the notation
\begin{equation}
	\braket{\Psi_N}{A\Psi_N}:=\left\langle A\right\rangle_{\Psi_N}\nonumber
\end{equation}
for self-adjoint operators $A$. As above, we often drop the $t$ variable in $\Psi_N(t)$ and $\gamma^{(k)}_N(t)$ or even $\varphi_t$.
As we have already used to describe \eqref{def:Schro2} and the main theorem, we use the bra-ket representation $\sket{f}\sbra{f}$ of the projection operator onto a function $f$.

For an operator $A$, we denote by $\norm{A}_\op$ and $\norm{A}_\HS$ the operator norm and the Hilbert-Schmidt norm, respectively.

\bigskip

All along the paper, we will make use of well-known inequalities 
that can be found in the literature, e.g. \cite[Section 8.1, Eq. (3)]{LieLos-01} and \cite[Theorem 4.3]{LieLos-01}, respectively.
\begin{lemma}[\textbf{Sobolev inequality}]
	Let $f\in H^1(\R^2)$. Then, for any $2\leq q <\infty$, we have that
	\begin{equation}\label{eq:S}\tag{S}
		C_q^{-1}\norm{f}_q^2\leq \norm{f}_2^2 +\norm{\nabla f}_2^2
	\end{equation}
	where
	\begin{equation}
		C^{-1}_q\leq\left(q^{1-\frac{2}{q}}(q-1)^{\frac{1}{q}-1}\left(\frac{q-2}{8\pi}\right)^{\frac{1}{2}-\frac{1}{q}}\right)^{2}.\nonumber
	\end{equation}
	Note that, for large $q$, we have
	\begin{equation}\label{ine:Cq}
		C_q \leq 2\pi q^{1-\frac{2}{q}}.
	\end{equation}
\end{lemma}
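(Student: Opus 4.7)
The result is a quantitative form of the 2D Sobolev embedding $H^1(\R^2)\hookrightarrow L^q(\R^2)$ with explicit tracking of the constant $C_q$ and its behavior as $q\to\infty$. The natural starting point is the Gagliardo--Nirenberg--Sobolev inequality in two dimensions, which is precisely what is recalled in \cite[Section 8.1, Eq.~(3)]{LieLos-01} with explicit constants, and is itself a consequence of the sharp Sobolev inequality in \cite[Theorem 4.3]{LieLos-01}. Specifically, for any $f\in H^1(\R^2)$ and $2\le q<\infty$, one has an inequality of the form
\begin{equation*}
\|f\|_q \le G(q)\,\|\nabla f\|_2^{1-2/q}\,\|f\|_2^{2/q},
\end{equation*}
where $G(q)$ is the explicit constant appearing in \cite{LieLos-01}. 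Squaring yields $\|f\|_q^{2}\le G(q)^{2}\,\|\nabla f\|_2^{2-4/q}\,\|f\|_2^{4/q}$, and the main task reduces to converting this multiplicative bound into the additive form of the lemma and then extracting the $q\to\infty$ asymptotics.

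The conversion is a direct application of Young's inequality in the weighted form $x^{1-\theta}y^{\theta}\le (1-\theta)x+\theta y$, applied with $x=\|\nabla f\|_2^{2}$, $y=\|f\|_2^{2}$ and $\theta = 2/q\in(0,1]$. Since both weights are at most $1$, this gives
\begin{equation*}
\|\nabla f\|_2^{2-4/q}\,\|f\|_2^{4/q}\le \Bigl(1-\tfrac{2}{q}\Bigr)\|\nabla f\|_2^{2}+\tfrac{2}{q}\|f\|_2^{2}\le \|f\|_2^{2}+\|\nabla f\|_2^{2}.
\end{equation*}
Combined with the previous step, this produces $\|f\|_q^{2}\le G(q)^{2}\bigl(\|f\|_2^{2}+\|\nabla f\|_2^{2}\bigr)$, which is exactly \eqref{eq:S} with $C_q^{-1} = G(q)^{-2}$ (equivalently, the constant in the lemma is the square of the Gagliardo--Nirenberg constant). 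Substituting the value of $G(q)$ from \cite[Thm.~4.3]{LieLos-01}, written in the form
\[
G(q)=q^{1-\frac{2}{q}}(q-1)^{\frac{1}{q}-1}\Bigl(\tfrac{q-2}{8\pi}\Bigr)^{\frac{1}{2}-\frac{1}{q}},
\]
reproduces the displayed expression for $C_q^{-1}$.

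For the large-$q$ asymptotics \eqref{ine:Cq}, the plan is elementary: write out $C_q = G(q)^{-2}$ and estimate each factor as $q\to\infty$. One has $(q-1)^{1-\frac{1}{q}}=q^{1-\frac{1}{q}}(1+o(1))$ and $\bigl(\tfrac{q-2}{8\pi}\bigr)^{\frac{1}{q}-\frac{1}{2}}=(8\pi)^{\frac{1}{2}-\frac{1}{q}}q^{\frac{1}{q}-\frac{1}{2}}(1+o(1))$, so after grouping the powers of $q$ and tracking the numerical prefactors one obtains $C_q\le 2\pi q^{1-2/q}$ for all sufficiently large $q$, with the $2\pi$ absorbing the remaining $o(1)$ factors coming from the $(1-1/q)$-th and $(q-1)/(q-2)$-type corrections. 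The only mildly delicate step is verifying that these subleading corrections stay below the constant $2\pi$ rather than, say, $4\pi$; this can be done by a one-variable monotonicity check on the explicit prefactor, which I do not expect to present any conceptual obstacle. The main point of the lemma is thus essentially a bookkeeping exercise built on top of the sharp Gagliardo--Nirenberg inequality together with a single application of Young's inequality.
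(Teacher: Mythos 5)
The paper never proves this lemma: it is imported from the literature (the reference the authors attach to it is the two‑dimensional Sobolev inequality of Lieb--Loss, their ``Section 8.1, Eq.~(3)''; Theorem 4.3 of Lieb--Loss is what they cite for the weak Young inequality, so your attribution of the sharp constant to Theorem 4.3 is off, though harmless). So there is no in‑paper argument to compare with, and your route --- the multiplicative 2D Gagliardo--Nirenberg bound $\norm{f}_q\le G(q)\norm{\nabla f}_2^{1-2/q}\norm{f}_2^{2/q}$, squared and combined with the weighted Young inequality $x^{1-\theta}y^{\theta}\le(1-\theta)x+\theta y$ with $\theta=2/q$ --- is a perfectly standard and correct way to obtain the additive form: it yields $\norm{f}_q^2\le G(q)^2\big(\norm{f}_2^2+\norm{\nabla f}_2^2\big)$, i.e.\ \eqref{eq:S} with $C_q=G(q)^2$.

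The one thing to fix is the bookkeeping of which quantity is $C_q$ and which is $C_q^{-1}$, where your write‑up is internally inconsistent. Your derivation gives $C_q=G(q)^2$ (equivalently $C_q^{-1}=G(q)^{-2}$), and since $G(q)^2\sim q/(8\pi)$ as $q\to\infty$, the bound \eqref{ine:Cq}, $C_q\le 2\pi q^{1-2/q}\sim 2\pi q$, follows with an enormous margin (a factor of about $16\pi^2$), so the ``delicate'' monotonicity check you worry about is not needed. However, in your asymptotics paragraph you instead set $C_q=G(q)^{-2}\sim 8\pi/q$; that contradicts your own derivation, and a constant tending to zero would make \eqref{eq:S} false, so the inequality \eqref{ine:Cq} you then verify is true only vacuously and for the wrong object. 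The confusion is partly induced by the paper itself: as printed, the display ``$C_q^{-1}\le(\cdots)^2$'' is an inversion typo, since only the reading $C_q\le(\cdots)^2$ is consistent with \eqref{ine:Cq} and with every later use of $C_q$ (e.g.\ in Lemma~\ref{lem:A-infty} and Lemma~\ref{lem:sing2bd}, where $C_q$ with $q\sim|\log R|$ must be $O(q)$). So you should not claim that your computation ``reproduces the displayed expression for $C_q^{-1}$'' literally; state instead that it reproduces the bound on $C_q$ (the square of the Gagliardo--Nirenberg constant taken from the cited literature), and run the large‑$q$ estimate on $G(q)^2$, not on its reciprocal.
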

This inequality will be used systematically to control $\snorm{\varphi_t}_q$ and $\snorm{\nabla\varphi_t}_q$, when $2<q<\infty$. We will sometimes consider $q\simeq |\log R|$ and will then have to keep track of $C_q$.

\begin{lemma}[\textbf{Sobolev inequality for $q=\infty$}]\label{lem:sobinfty}
	Let $f\in H^{s}(\R^2)$ with $s>1$, then there exist a constant $C$ depending on $s$ but not $f$ such that
	\begin{equation}
		\snorm{f}_\infty \leq C \snorm{f}_{H^s}^{\frac{1}{s}} \snorm{f}_2^{1-\frac{1}{s}}\leq C \snorm{f}_{H^s} .
	\end{equation}
\end{lemma}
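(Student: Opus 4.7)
The plan is to prove the pointwise bound via Fourier inversion, splitting the Fourier integral dyadically in frequency and optimizing the split point. The second inequality in the statement is immediate from $\snorm{f}_2 \leq \snorm{f}_{H^s}$, so the real content is the first.

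First I would write, for $f$ in the Schwartz class (the general case following by density),
\begin{equation*}
|f(x)| \leq \frac{1}{2\pi} \int_{\R^2} |\hat f(\xi)| \, \d\xi,
\end{equation*}
and then, for a parameter $M>0$ to be chosen, decompose the integral into the low-frequency region $\{|\xi|\leq M\}$ and the high-frequency region $\{|\xi|>M\}$. To the low-frequency piece I apply Cauchy--Schwarz directly, using that the measure of the disk of radius $M$ in $\R^2$ is $\pi M^2$, to obtain a bound of the form $C M \snorm{f}_2$. To the high-frequency piece I apply Cauchy--Schwarz with the weight $(1+|\xi|^2)^{s/2}$, producing
\begin{equation*}
\int_{|\xi|>M} |\hat f(\xi)|\,\d\xi
\;\leq\; \left(\int_{|\xi|>M} (1+|\xi|^2)^{-s} \, \d\xi\right)^{1/2} \snorm{f}_{H^s}.
\end{equation*}
Here the assumption $s>1$ is exactly what is needed for the weight integral on $\R^2$ to converge, and an elementary polar-coordinates computation gives it as $C(s)\,M^{2-2s}$, leading to a high-frequency contribution bounded by $C(s)\,M^{1-s}\snorm{f}_{H^s}$.

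Combining the two estimates,
\begin{equation*}
\snorm{f}_\infty \;\leq\; C\left( M \snorm{f}_2 + M^{1-s} \snorm{f}_{H^s}\right),
\end{equation*}
with $C$ depending only on $s$. Choosing $M = \bigl(\snorm{f}_{H^s}/\snorm{f}_2\bigr)^{1/s}$ balances the two terms and yields the claimed interpolation bound $\snorm{f}_\infty \leq C \snorm{f}_{H^s}^{1/s}\snorm{f}_2^{1-1/s}$. (If $\snorm{f}_2 = 0$ the statement is trivial, and one can recover the bound $\snorm{f}_\infty \leq C\snorm{f}_{H^s}$ directly by taking $M=1$ in the decomposition.) There is no real obstacle here; the only point that requires a little care is making sure the dependence of the constant on $s$ — which comes entirely from the weight integral $\int_{|\xi|>M}(1+|\xi|^2)^{-s}\d\xi$ and from the $(s-1)^{\pm 1/s}$ factors produced by the optimization in $M$ — is isolated and does not contaminate the $f$-dependence.
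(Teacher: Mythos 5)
Your argument is correct. The paper does not actually supply a proof of this lemma: it simply cites \cite[Theorems 5.8, 5.9]{AF2003Sobolev}, so your Fourier-splitting argument is a genuinely different, self-contained route, and it has the advantage of producing exactly the multiplicative interpolation form $\snorm{f}_\infty \leq C\snorm{f}_{H^s}^{1/s}\snorm{f}_2^{1-1/s}$ stated in the lemma (the textbook embedding gives $\snorm{f}_\infty\leq C\snorm{f}_{H^s}$ directly, from which the interpolated form usually requires an extra scaling or splitting step anyway). All the steps check out: by Plancherel the low-frequency piece is $\leq \sqrt{\pi}\,M\snorm{f}_2$; for the high-frequency piece the crude bound $(1+|\xi|^2)^{-s}\leq |\xi|^{-2s}$ gives $\int_{|\xi|>M}(1+|\xi|^2)^{-s}\,\d{\xi}\leq \frac{\pi}{s-1}M^{2-2s}$ for \emph{every} $M>0$ (worth saying explicitly, so the optimization is unconstrained); the choice $M=(\snorm{f}_{H^s}/\snorm{f}_2)^{1/s}$ equalizes the two terms, and since $\snorm{f}_{H^s}\geq\snorm{f}_2$ this $M$ is automatically $\geq 1$. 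The constant degenerates like $(s-1)^{-1/2}$ as $s\to 1^+$, which is consistent with the lemma's statement that $C$ depends on $s$, and the passage from Schwartz functions to general $f\in H^s$ by density is routine (the approximating sequence is Cauchy in $L^\infty$ by the very bound you prove, so the $L^2$-limit has a bounded continuous representative satisfying the estimate). The second inequality of the lemma is, as you say, immediate from $\snorm{f}_2\leq\snorm{f}_{H^s}$.
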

The proof can be found in \cite[Theorem 5.8, 5.9]{AF2003Sobolev}.
We also need to define the weak $L^p$ norms denoted $L^{p}_{w}(\R^2)$ as
\begin{equation}
	\snorm{f}_{p,w}:=\sup_{t>0}\left[t \left(\mu \{x\in \R^2 : |f(x)|>t \} \right)^{\frac{1}{p}}\right]
\end{equation}
where $\mu\{ X\}$ is the Lebesgue measure of the measurable set $X\subset \R^2$.

\begin{lemma}[\textbf{Weak-Young inequality}]
	Let $f\in L^p(\R^2)$ and $g\in L^{r}_{w}(\R^2)$, we have for any ${1< p,q,r <\infty}$ that
	\begin{equation}\label{eq:WY}\tag{WY}
		\snorm{g*f}_q\leq C_{p,q,r}\snorm{g}_{r,w}\snorm{f}_p
	\end{equation}
	provided that
	\[
	1+\frac{1}{q}=\frac{1}{p}+\frac{1}{r}.
	\] 
\end{lemma}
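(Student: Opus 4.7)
The plan is the classical O'Neil argument: split the kernel $g$ into its small-values and large-values parts, apply the ordinary Young convolution inequality on each piece separately, and combine the two distributional estimates to recover the target $L^{q}$ bound on $g*f$. For each level $\lambda>0$, write $g = g^{\flat}_{\lambda} + g^{\sharp}_{\lambda}$ with $g^{\flat}_{\lambda} := g\,\mathbf{1}_{\{|g|\leq \lambda\}}$ and $g^{\sharp}_{\lambda} := g - g^{\flat}_{\lambda}$. Using the layer-cake formula together with the defining estimate $\mu\{|g|>\tau\}\leq \snorm{g}_{r,w}^{r}\,\tau^{-r}$, a direct integration gives the strong bounds $\snorm{g^{\flat}_{\lambda}}_{s}^{s} \leq C_{s,r}\,\snorm{g}_{r,w}^{r}\,\lambda^{s-r}$ for every $s>r$, and $\snorm{g^{\sharp}_{\lambda}}_{t}^{t} \leq C_{t,r}\,\snorm{g}_{r,w}^{r}\,\lambda^{t-r}$ for every $t<r$. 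So the truncation moves $g$ into two genuine strong Lebesgue spaces bracketing the endpoint $r$.

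Next, fix $s$ slightly above $r$ and $t$ slightly below $r$, and define $q_{\flat}, q_{\sharp}$ by the Young scalings $1+1/q_{\flat} = 1/p + 1/s$ and $1+1/q_{\sharp} = 1/p + 1/t$, so that $q_{\flat} < q < q_{\sharp}$. The classical Young convolution inequality then delivers $\snorm{g^{\flat}_{\lambda} * f}_{q_{\flat}}\leq \snorm{g^{\flat}_{\lambda}}_{s}\snorm{f}_{p}$ and the analogue for $g^{\sharp}_{\lambda}$. Since $\{|g*f|>2\alpha\}\subset\{|g^{\flat}_{\lambda}\!*\!f|>\alpha\}\cup\{|g^{\sharp}_{\lambda}\!*\!f|>\alpha\}$, Chebyshev's inequality produces
\[
\mu\{|g*f|>2\alpha\}\leq \alpha^{-q_{\flat}}\snorm{g^{\flat}_{\lambda}}_{s}^{q_{\flat}}\snorm{f}_{p}^{q_{\flat}} + \alpha^{-q_{\sharp}}\snorm{g^{\sharp}_{\lambda}}_{t}^{q_{\sharp}}\snorm{f}_{p}^{q_{\sharp}}.
\]
Choosing $\lambda=\lambda(\alpha)$ as the power of $\alpha$ that makes the two right-hand contributions scale identically in $\alpha$ yields the weak-type estimate $\mu\{|g*f|>\alpha\}\leq C\,\alpha^{-q}\snorm{g}_{r,w}^{q}\snorm{f}_{p}^{q}$, i.e.\ $g*f \in L^{q}_{w}$ with the expected homogeneity in $\snorm{g}_{r,w}$ and $\snorm{f}_{p}$.

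To promote weak-$L^{q}$ to strong $L^{q}$, I invoke Marcinkiewicz interpolation: the weak-type bound just derived is valid at every admissible pair $(p,r)$ satisfying $1+1/q = 1/p + 1/r$, so applying it at two neighbouring pairs $(p_{0},r_{0})$ and $(p_{1},r_{1})$ straddling the target pair and interpolating at the shared exponent $q$ provides the desired strong bound with a finite constant $C_{p,q,r}$. The hypothesis $1<p,q,r<\infty$ is precisely what keeps both endpoints of the interpolation away from the degenerate cases.

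The only mildly delicate point is the calibration in the second paragraph: $\lambda(\alpha)$ must be tuned so the two distributional contributions have matching $\alpha$-scaling, and one then has to track how the Marcinkiewicz constant degenerates as $q$ approaches $1$ or $\infty$. Everything else is bookkeeping of layer-cake integrals and invocations of the classical Young convolution inequality. The statement is a direct quotation of the O'Neil inequality already cited in the text as \cite[Theorem 4.3]{LieLos-01}, and nothing in the argument is specific to the two-dimensional setting.
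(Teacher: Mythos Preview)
The paper does not supply its own proof of this lemma; it merely records the statement as a standard inequality and cites \cite[Theorem~4.3]{LieLos-01}. Your O'Neil-type argument is the correct classical route and goes well beyond what the paper actually does.

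One small imprecision worth flagging: in the Marcinkiewicz step you speak of ``two neighbouring pairs $(p_0,r_0)$ and $(p_1,r_1)$ \dots\ interpolating at the shared exponent $q$''. Since $g$ lies in a single fixed $L^{r}_{w}$, you cannot vary $r$. The clean formulation is to fix $r$, vary $p$ (and hence $q$ via the scaling relation $1+1/q=1/p+1/r$), obtain weak-type $(p_0,q_0)$ and $(p_1,q_1)$ bounds for the linear operator $T_g:f\mapsto g*f$, and then Marcinkiewicz-interpolate to the target $(p,q)$. With that adjustment your argument is complete.
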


Note that Young's convolution inequality is identical to \eqref{eq:WY} except that the former allows the endpoint valuess $1\leq p,q,r \leq\infty$. Also note that 
\begin{equation}
	\snorm{\nabla^{\perp}w_R}_{2,w}\leq\snorm{\nabla^{\perp}w_0}_{2,w}\leq C.
\end{equation}
For that reason, throughout the paper, $\snorm{\nabla^{\perp}w_R}_{2,w}$ will be considered as a constant. As already mentioned, we will make use of Gr\"onwall's lemma.

\begin{lemma}[\textbf{Gr\"onwall Lemma}]\label{lem:Gron}
	Let $f$ be a function satisfying $|\partial_t f(t)|\leq C(t)f(t) + \varepsilon(t)$, then
	\begin{align}\label{eq:Gron}
		f(t)\leq f(0)\,e^{\int_0^t C(\tau)\dd \tau} + \int_0^t \varepsilon(s) \, e^{\int_s^t C(\tau) \dd \tau} \dd s.
	\end{align}
\end{lemma}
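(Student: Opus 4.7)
The plan is to apply the classical integrating-factor trick. Since the hypothesis bounds $|\partial_t f(t)|$ from above, it in particular gives the one-sided differential inequality
\[
\partial_t f(t) \leq C(t) f(t) + \varepsilon(t),
\]
which is all we will actually use. Set $F(t) := \exp\!\left(-\int_0^t C(\tau)\,\dd\tau\right)$, so that $F'(t) = -C(t) F(t)$, and define the auxiliary function $g(t) := f(t) F(t)$.

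Next I would differentiate $g$ and substitute the hypothesis. Using $F \geq 0$, the product rule yields
\[
g'(t) = f'(t) F(t) - C(t) f(t) F(t) \leq \bigl(C(t) f(t) + \varepsilon(t)\bigr) F(t) - C(t) f(t) F(t) = \varepsilon(t) F(t).
\]
This is the key estimate: the $C(t) f(t)$ contribution has been absorbed by the integrating factor, leaving a simple bound involving only $\varepsilon$.

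To conclude, I would integrate from $0$ to $t$, giving
\[
f(t) F(t) - f(0) = g(t) - g(0) \leq \int_0^t \varepsilon(s) F(s)\,\dd s,
\]
and then multiply through by $F(t)^{-1} = \exp\!\left(\int_0^t C(\tau)\,\dd\tau\right)$. Using $F(s)/F(t) = \exp\!\left(\int_s^t C(\tau)\,\dd\tau\right)$, this rearranges directly to the claimed bound \eqref{eq:Gron}. There is no real obstacle here; the only minor point to keep in mind is that $f$ must be at least absolutely continuous on $[0,t]$ for the fundamental theorem of calculus to apply, which is implicit in the differentiability hypothesis on $f$.
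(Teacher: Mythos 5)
Your integrating-factor argument is correct: the one-sided inequality $\partial_t f \leq C(t)f + \varepsilon(t)$ (which follows from the stated hypothesis on $|\partial_t f|$) is all that is needed, the auxiliary function $g(t)=f(t)\exp\bigl(-\int_0^t C(\tau)\,\dd\tau\bigr)$ satisfies $g'\leq \varepsilon F$ since the exponential factor is strictly positive, and integrating and dividing by $F(t)>0$ gives exactly \eqref{eq:Gron}. The paper states this lemma as a standard fact without proof, so there is nothing to compare against; your proof is the classical one and is complete, with the only cosmetic remark that $F$ is strictly positive (not merely nonnegative), which is what justifies both preserving the inequality and the final division.
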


\section{Useful Properties}\label{sec:bounds}

\subsection{Operator Bounds for the Interaction Terms}

We gather some intermediate estimates in the following lemmas. 
The main feature of this section is the appearance of the $|\log R|$ divergences appearing in Lemmas~\ref{lem:sing2bd},  \ref{lem:mix2bd} and Corollary \ref{cor:nabwRLp}. These results provide modest improvements of \cite[Lemmas 2.1, 2.2, and 2.3]{LunRou}. Additional results, such as Lemmas~\ref{lem:estnew} and \ref{lem:A-infty},  will be needed when discussing the well-posedness of the $\mathrm{CSS}$ equation in Section~\ref{sec:css}. 
In the remainder of this section, we bound several quantities appearing in the core of the proof in Section~\ref{sec:pickl} and \ref{sec:kinetic}. 

\begin{lemma}\label{lem:estnew}
	For any $u,v \in H^{1}(\bbR^2)$ and any $R\geq 0$, the following bounds hold
	\begin{equation}\label{eq:N4}
		\norm{ \bAR\left [| u|^{2}\right ]}_p\leq C
	\end{equation}
	as well as
	\begin{equation}\label{eq:diff4}
		\norm{(\bA\left [| u|^{2}\right ]-\bAR\left [| v|^{2}\right ])}_p\leq CR +C\norm{u-v}_2
	\end{equation}
	for any $2< p <\infty$. Note that the above $C$ depend on $\norm{u}_{H^{1}}$ and $\norm{v}_{H^{1}}$.
\end{lemma}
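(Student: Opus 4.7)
The plan rests on three ingredients already available in the paper: (i) the weak Young inequality \eqref{eq:WY} applied with $r=2$, combined with the uniform-in-$R$ bound $\snorm{\nabla^{\perp}w_R}_{2,w}\leq C$ already noted; (ii) the 2D Sobolev embedding $H^{j}(\R^2)\hookrightarrow L^{q}(\R^2)$ valid for every finite $q$ whenever $j\geq 1$; and, for the difference bound only, (iii) the explicit $L^1$ smallness
\[
\snorm{\nabla^{\perp}w_0-\nabla^{\perp}w_R}_{L^1(\R^2)}\leq CR,
\]
which follows from the defining formulas: the difference is supported in $B(0,R)$ and equals $x^\perp(|x|^{-2}-R^{-2})$ there, so a direct computation in polar coordinates yields an $O(R)$ bound.

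\textbf{Proof of \eqref{eq:N4}.} Since convolution commutes with differentiation, $\partial^{s}\bAR[|u|^2]=\nabla^{\perp}w_R*\partial^{s}(|u|^2)$. Choose the Young exponent $p'=2p/(p+2)\in(1,2)$ so that $1+1/p=1/p'+1/2$, and apply (i) to obtain $\snorm{\partial^{s}\bAR[|u|^2]}_{p}\leq C\snorm{\partial^{s}(|u|^2)}_{p'}$. Expanding $\partial^{s}(|u|^2)$ by Leibniz produces a sum of products $\partial^{k}u\cdot\partial^{s-k}\bar u$ with $0\leq k\leq s$; since $s+1-k\geq 1$, each factor lies in every $L^{q}$ with $q<\infty$ with norm controlled by $C\snorm{u}_{H^{s+1}}$ by (ii). A Hölder estimate in $L^{p'}$ then closes the argument.

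\textbf{Proof of \eqref{eq:diff4}.} Decompose
\[
\bA[|u|^2]-\bAR[|v|^2]=(\nabla^{\perp}w_0-\nabla^{\perp}w_R)*|u|^2+\nabla^{\perp}w_R*(|u|^2-|v|^2).
\]
For the first summand, apply ordinary Young's inequality using (iii), which reduces matters to bounding $\snorm{\partial^{s}|u|^2}_{p}$ by the same Leibniz/Sobolev argument as above; this yields the $CR$ contribution. For the second summand, use the identity $|u|^2-|v|^2=(u-v)\bar u+v(\bar u-\bar v)$ to expose a $u-v$ factor, apply (i) with $r=2$, and then Hölder with exponents matched to $1/p'=1/2+1/p$, placing $u-v$ in $L^2$ and the companion factor in $L^{p}$. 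Since $\snorm{u}_{p}+\snorm{v}_{p}\leq C(\snorm{u}_{H^1}+\snorm{v}_{H^1})$ by (ii), absorbed into the constant, this gives the $C\snorm{u-v}_{2}$ contribution when $s=0$.

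\textbf{Main obstacle.} The delicate point is to push the argument for \eqref{eq:diff4} to general $s\geq 1$ without generating derivatives of $u-v$ on the right-hand side. A naive Leibniz expansion of $\partial^{s}(|u|^2-|v|^2)$ produces terms involving $\partial^{k}(u-v)$ for $k\geq 1$, which cannot be absorbed into $\snorm{u-v}_{2}$. The fix is to redistribute the $\partial^{s}$ across the convolution, moving the extra derivatives onto $\nabla^{\perp}w_R$ and invoking a Calder\'on--Zygmund estimate for the resulting homogeneous kernel of degree $-2$, so that the $u-v$ factor is never differentiated and the surviving $\snorm{u}_{H^{s+1}}$, $\snorm{v}_{H^{s+1}}$ norms are merely absorbed into the constant $C$.
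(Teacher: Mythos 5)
Your argument for \eqref{eq:N4}, and for \eqref{eq:diff4} in the case $s=0$, is essentially the paper's proof: decompose the difference into a ``small-kernel'' piece $(\nabla^{\perp}w_0-\nabla^{\perp}w_R)\ast(\cdot)$ controlled by the $L^1$ smallness $\|\nabla^{\perp}w_0-\nabla^{\perp}w_R\|_1\leq CR$, plus a ``density-difference'' piece controlled by the weak Young inequality with $p'=2p/(p+2)$ and Sobolev. The cosmetic difference (you put $|u|^2$ in the first term and use $\nabla^{\perp}w_R$ in the second, the paper does the opposite) is immaterial.

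The proposed fix for $s\geq 1$ does not work. A kernel of scaling degree $-2$ in $\R^2$ (such as $\partial\nabla^{\perp}w_0$, a double-Riesz-type kernel) is Calder\'on--Zygmund and hence bounded $L^p\to L^p$, but it does \emph{not} gain integrability; the smoothing $L^{p'}\to L^p$ that lets you place the factor $u-v$ alone in $L^2$ comes precisely from the weak-$L^2$ decay of the degree-$(-1)$ kernel $\nabla^{\perp}w_R$ via weak Young. Once you move a derivative onto the kernel, you need $\|(u-v)\bar u\|_p$ for $p>2$, and H\"older then forces $u-v$ into some $L^{p_1}$ with $p_1\geq p>2$; this cannot be reduced to $\|u-v\|_2$ by Sobolev without spending at least one derivative on $u-v$. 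Iterating with higher powers of $\partial$ on the kernel only worsens the scaling: $\partial^k\nabla^{\perp}w_0$ for $k\geq 1$ has degree $-1-k<-2$ and is not a bounded convolution operator on $L^p$ at all. In fact, the paper's own proof does not establish \eqref{eq:diff4} as literally stated for $s\geq 1$ either — its last line produces $C\sqrt{p}\,(\|u\|_{H^{s+1}}+\|v\|_{H^{s+1}})\|u-v\|_{H^{s}}$, not $\|u-v\|_2$. The paper sidesteps this by noting immediately after the statement that the lemma is used ``exclusively in the case $s=0$,'' where $H^s=L^2$. So the honest conclusion is that the $\|u-v\|_2$ form of \eqref{eq:diff4} holds for $s=0$ (the only case used), and for $s\geq 1$ the natural bound that the shared method yields is $\|u-v\|_{H^s}$; neither the CZ maneuver you suggest nor the paper's computation improves that to $\|u-v\|_2$.
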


We note that throughout this paper, Lemma~\ref{lem:estnew} is used exclusively in the case  $s = 0$. However, for possible future applications, we present the lemma in a more general form than is strictly required here. 

\begin{proof}
	We first have via weak Young inequality \eqref{eq:WY}, that for $p>2$:
	\begin{equation}
		\norm{ \bAR\left [|u|^{2}\right ]}_p=\norm{\nabla^{\perp}w_R \ast \vert u\vert^2}_p\leq C\norm{\nabla^{\perp}w_0}_{2,w}\norm{ |u|^2}_{\frac{2p}{p+2}}\leq C\norm{u}_{H^{1}(\bbR^2)}
	\end{equation}
	using Sobolev inequality (S).
	For the second norm we write
	\begin{align}
		\bA\left [| u|^{2}\right ]-\bAR\left [| v|^{2}\right ]&=(\nabla^{\perp}w_0 -\nabla^{\perp}w_R)\ast\vert v\vert^2+\nabla^{\perp}w_0\ast(\vert u\vert^2 -\vert v\vert^2)\nonumber\\
		&=(\nabla^{\perp}w_0 -\nabla^{\perp}w_R)\ast\vert v\vert^2+\nabla^{\perp}w_0\ast((u-v)\overline{u} +(\overline{u}-\overline{v})v).
	\end{align}
	We apply the triangle inequality and Young inequality \eqref{eq:WY} for the convolution to get
	\begin{align}
		&\norm{(\bA\left [| u|^{2}\right ]-\bAR\left [| v|^{2}\right ])}_p\nonumber\\ 
		&\leq \norm{\left [(\nabla^{\perp}w_0 -\nabla^{\perp}w_R)\ast\vert v\vert^2\right ] }_p\nonumber\\
		&\quad +\norm{(\nabla^{\perp}w_0\ast (u-v)\overline{u})}_p+\norm{(\nabla^{\perp}w_0\ast(\overline{u}-\overline{v})v)}_p \nonumber\\
		&\leq \norm{\nabla^{\perp}w_0 -\nabla^{\perp}w_R }_1\norm{|v|^2}_p \nonumber\\
		&\quad +\norm{\nabla^{\perp}w_0}_{2,w}(\norm{(\overline{u}-\overline{v})v}_{\frac{2p}{p+2}}+\norm{(u-v)\overline{u}}_{\frac{2p}{p+2}})\nonumber\\
		&\leq \norm{|v|^2}_{p}\int_{B(0,R)}|x|^{-1}\mathrm{d}x +\norm{\nabla^{\perp}w_0}_{2,w}\norm{u-v}_{2}(\norm{v}_p+\norm{u}_{p})\nonumber\\
		&= C\, \norm{v}_{H^{1}}R+C\, (\norm{u}_{H^{1}}+\norm{v}_{H^{1}})\norm{u-v}_{2}.
	\end{align}
	This gives the desired lemma.
\end{proof}

\begin{lemma}[\textbf{The smeared Coulomb potential}]\label{lem:smCp}
	Let $w_R$ be defined as in \eqref{wr}. There is a constant $C > 0$ such that
	\begin{equation}\label{eq:sup-nabwR}
		\sup_{\bbR^2} |\nabla w_R| \leq \frac{C}{R}. \quad
	\end{equation}
	Moreover, for any $2<p<\infty$,
	\begin{equation}\label{eq:nabwRLp}
		\norm{\nabla w_R}_{p} \leq \left(\frac{4\pi p}{p^2-4}\right)^{\frac1p} R^{\frac{2}{p}-1}.
	\end{equation}
\end{lemma}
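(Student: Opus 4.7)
The plan is to exploit the fact that $w_R$ is the potential generated by a uniform disc of unit charge in 2D, so by radial symmetry and the 2D version of Gauss's law, $\nabla w_R$ admits an explicit piecewise form. Concretely, from the identity $\nabla^{\perp} w_R(x) = x^{\perp}/|x|_R^2$ stated in the paper, one reads off
\begin{equation*}
    |\nabla w_R(x)| \;=\; \begin{cases} |x|/R^2 & \text{if } |x|\leq R, \\[2pt] 1/|x| & \text{if } |x|> R.\end{cases}
\end{equation*}
(One can also verify this directly: inside the disc, solve $\Delta w_R = 2/R^2$ with radial ansatz to get $w_R(x) = |x|^2/(2R^2) + \text{const}$; outside, $w_R$ agrees with $\log|x|$ up to a constant since the total charge is $1$.)

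With this formula in hand, the first assertion \eqref{eq:sup-nabwR} is immediate: both branches are bounded by $1/R$, with equality precisely at $|x|=R$, so $\sup_{\bbR^2}|\nabla w_R| = 1/R$.

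For \eqref{eq:nabwRLp}, I would split the integral according to the two regions and pass to polar coordinates:
\begin{equation*}
    \norm{\nabla w_R}_p^p \;=\; 2\pi\!\int_0^R \frac{r^{p+1}}{R^{2p}}\,\dd r \;+\; 2\pi\!\int_R^{\infty} r^{1-p}\,\dd r.
\end{equation*}
Both integrals are elementary; the first gives $\tfrac{2\pi}{p+2}R^{2-p}$ and the second $\tfrac{2\pi}{p-2}R^{2-p}$, valid for $p>2$. Summing yields $\tfrac{4\pi p}{p^2-4}R^{2-p}$, and taking the $p$-th root produces exactly the stated bound.

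There is no real obstacle here — the lemma is a bookkeeping computation. The only mild care needed is to keep the constant in the sharp form $\bigl(\tfrac{4\pi p}{p^2-4}\bigr)^{1/p}$, because later in the paper one applies this with $p\simeq|\log R|$, in which case the factor $(4\pi p/(p^2-4))^{1/p}$ tends to $1$ and must not be absorbed prematurely into a generic constant. Likewise, the restriction $p>2$ is essential: the outer integral diverges at $p=2$ (reflecting the familiar $L^2$-critical logarithmic divergence of $\nabla w_0$), and the weak-$L^2$ control used elsewhere is precisely the replacement for this endpoint.
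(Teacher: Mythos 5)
Your proof is correct and follows essentially the same route as the paper: both rely on the explicit piecewise form $|\nabla w_R(x)| = |x|/R^2$ inside $B(0,R)$ and $1/|x|$ outside (from Newton's theorem), then compute the $L^p$ norm by a polar-coordinate integral split at radius $R$, arriving at the identical expression $\tfrac{4\pi p}{p^2-4}R^{2-p}$. The only cosmetic difference is that you derive the piecewise form and the bound $\sup|\nabla w_R| \leq 1/R$ directly, whereas the paper cites \cite[Lemma 2.1]{LunRou} for \eqref{eq:sup-nabwR}; your concluding remarks about tracking the sharp constant for $p\simeq|\log R|$ and about the $p>2$ restriction are both correct and well observed.
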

\begin{proof}
	Eq. \eqref{eq:sup-nabwR} follows from Newton's theorem as established in \cite[Lemma 2.1]{LunRou}.
	To obtain \eqref{eq:nabwRLp}, we compute
	\begin{align*}
		\norm{\nabla w_R}_{p}^p &= 2\pi \int_0^R \frac{r^p}{R^{2p}} r \, \d{r} + 2\pi \int_R^\infty r^{-p} r \, \d{r} \\
		&= 2\pi \left( \frac{1}{R^{2p}} \frac{1}{2+p}R^{2+p} + \frac{1}{p-2} R^{2-p}\right)
		= \frac{4\pi p}{p^2 - 4} R^{2-p}.
	\end{align*}
\end{proof}

\begin{corollary}\label{cor:nabwRLp}
	For $p(R)=2+{|\log R|}^{-1}$ with $0<R<e^{-1}$, we have
	\begin{equation}\label{eq:log-nabwRLp}
		\norm{\nabla w_R}_{p(R)} \leq C \sqrt{ |\log R| }.
	\end{equation}
\end{corollary}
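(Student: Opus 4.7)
The plan is to substitute $p = 2 + \delta$ with $\delta := |\log R|^{-1}$ directly into the bound
\[
\norm{\nabla w_R}_p \leq \left(\frac{4\pi p}{p^2-4}\right)^{1/p} R^{2/p - 1}
\]
from Lemma~\ref{lem:smCp} and verify that both factors are $O(\sqrt{|\log R|})$.

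First I would analyze the exponent of $R$. Writing $2/p - 1 = -\delta/(2+\delta)$, we have
\[
R^{2/p - 1} = \exp\!\left(\frac{|\log R|\,\delta}{2+\delta}\right) = \exp\!\left(\frac{1}{2+\delta}\right),
\]
using the key cancellation $\delta|\log R| = 1$. Since $\delta > 0$, this is bounded by $e^{1/2}$, so this factor contributes only a harmless constant.

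Next I would bound the prefactor. Using $p^2 - 4 = (p-2)(p+2) = \delta(4+\delta)$, we get
\[
\frac{4\pi p}{p^2 - 4} = \frac{4\pi(2+\delta)}{\delta(4+\delta)} \leq \frac{C}{\delta} = C|\log R|
\]
for $0 < \delta < 1$ (which holds by the assumption $R < e^{-1}$). Since $p > 2$ and, for $R$ small enough, $C|\log R| \geq 1$, we have
\[
\left(\frac{4\pi p}{p^2-4}\right)^{1/p} \leq (C|\log R|)^{1/p} \leq (C|\log R|)^{1/2} = C\sqrt{|\log R|}.
\]

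Combining the two estimates yields $\norm{\nabla w_R}_p \leq C\sqrt{|\log R|}$. There is no real obstacle here; the only thing to check carefully is the cancellation $\delta|\log R| = 1$, which is exactly the point of choosing $p = 2 + |\log R|^{-1}$: the borderline singularity $R^{2/p - 1}$ (which would blow up for any fixed $p > 2$) is tamed to a constant by letting $p$ approach $2$ at precisely the logarithmic rate, while the price paid in the Lebesgue constant $(p^2-4)^{-1/p}$ is only $\sqrt{|\log R|}$.
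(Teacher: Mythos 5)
Your proof is correct and follows essentially the same route as the paper: substitute $p = 2 + |\log R|^{-1}$ into the explicit bound from Lemma~\ref{lem:smCp}, observe that the exponent factor $R^{2/p-1}$ is tamed to a bounded constant by the cancellation $\delta|\log R| = 1$ (the paper phrases this as $x^{(\log x)^{-1}}=e$), and bound the prefactor $(4\pi p/(p^2-4))^{1/p} \leq (C|\log R|)^{1/p} \leq (C|\log R|)^{1/2}$ using monotonicity. Your version just spells out the algebra that the paper's terse proof leaves implicit.
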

\begin{proof}
	The proof follows by substituting $p(R) = 2 + |\log R|^{-1}$ into \eqref{eq:nabwRLp}, using the monotonicity of the logarithmic function, and noting that \(x^{(\log x)^{-1}} = e\) for all \(x > 0\). Additionally, the monotonically increasing property of \(a^x\) with respect to \(x\) for all \(a > 1\) is applied.
\end{proof}

\begin{lemma}\label{lem:A-infty}
	For any $u\in H^1(\R^2)$ and any $0 < R < e^{-1}$, the following holds
	\begin{align}
		\norm{\bAR[|u|^2]}_{\infty}
		&\leq C \sqrt{|\log R|}\norm{u}_{H^1}^2.
	\end{align}
\end{lemma}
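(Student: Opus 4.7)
\smallskip

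\noindent\textbf{Proof plan for Lemma~\ref{lem:A-infty}.} The plan is to estimate the convolution $\bAR[|u|^2](x) = (\nabla^{\perp} w_R \ast |u|^2)(x)$ pointwise by H\"older's inequality and then balance the two factors so that the $\sqrt{|\log R|}$ growth in $\|\nabla w_R\|_p$ from Corollary~\ref{cor:nabwRLp} is matched against a Sobolev control of $\||u|^2\|_{p'}$ with $p'$ close to $2$.

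\smallskip

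\noindent\textbf{Main steps.} First, for every $x \in \R^2$, I would apply H\"older's inequality in the form
\begin{equation}
|\bAR[|u|^2](x)| \;=\; \left| \int_{\R^2} \nabla^{\perp} w_R(x-y)\, |u(y)|^2 \, \d y \right| \;\leq\; \|\nabla w_R\|_p \, \||u|^2\|_{p'},
\end{equation}
with $\tfrac1p + \tfrac1{p'} = 1$. The only choice that makes Corollary~\ref{cor:nabwRLp} directly applicable is $p = 2 + |\log R|^{-1}$, which gives $\|\nabla w_R\|_p \leq C\sqrt{|\log R|}$ (valid since $0 < R < e^{-1}$). Second, I would compute the conjugate exponent explicitly, $p' = (2|\log R|+1)/(|\log R|+1)$, and observe that $p' \in (1,2]$ is bounded above by $2$, so that $2p' \leq 4$ is bounded uniformly in $R$. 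Thus $\||u|^2\|_{p'} = \|u\|_{2p'}^2$ and the Sobolev inequality~\eqref{eq:S} yields $\|u\|_{2p'}^2 \leq C_{2p'}\|u\|_{H^1}^2 \leq C\|u\|_{H^1}^2$, with a constant $C$ independent of $R$ since $2p'$ stays in a bounded range. Combining the two estimates and taking the supremum in $x$ gives
\begin{equation}
\|\bAR[|u|^2]\|_\infty \;\leq\; \|\nabla w_R\|_p \, \|u\|_{2p'}^2 \;\leq\; C\sqrt{|\log R|}\,\|u\|_{H^1}^2,
\end{equation}
which is the claimed bound.

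\smallskip

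\noindent\textbf{Where the difficulty lies.} There is essentially no obstacle here; the lemma is a careful exponent-balancing exercise built directly on the two preceding results. The only mildly delicate point is verifying that the conjugate exponent $p'$ associated with $p = 2 + |\log R|^{-1}$ remains bounded (uniformly in $R$) so that the Sobolev constant $C_{2p'}$ does not contribute any additional $R$-dependence; this is why the choice $p = 2 + |\log R|^{-1}$ (rather than, say, a $p$ tending to $2$ faster) is essential. No additional input beyond Corollary~\ref{cor:nabwRLp} and the standard Sobolev embedding $H^1(\R^2) \hookrightarrow L^q(\R^2)$ for finite $q$ is needed.
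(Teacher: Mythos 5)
Your proof is correct and follows essentially the same route as the paper: the paper bounds $\|\nabla^{\perp}w_R\ast|u|^2\|_\infty \leq \|\nabla^{\perp}w_R\|_{2+\varepsilon}\,\||u|^2\|_{(2+\varepsilon)/(1+\varepsilon)}$ with $\varepsilon = |\log R|^{-1}$ (calling it Young's inequality for the convolution, which at the endpoint $L^\infty$ is exactly the pointwise H\"older estimate you wrote), then uses Corollary~\ref{cor:nabwRLp} and Sobolev embedding exactly as you do. Your extra remark that the conjugate exponent $p'$ stays in $(1,2]$ so that $2p'$ is uniformly bounded — and hence the Sobolev constant contributes no $R$-dependence — is the same point the paper addresses by invoking the explicit constant bound~\eqref{ine:Cq}.
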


\begin{proof}
	By using Young inequality with $\varepsilon=|\log R|^{-1}$, combined with the bound \eqref{ine:Cq} for the constant of Sobolev embedding, and Corollary \eqref{cor:nabwRLp}, one gets
	\begin{align}
		\norm{\bAR[|u|^2]}_{\infty}&=\norm{\nabla^{\perp}w_R\ast|u|^2}_{\infty}
		\leq \norm{\nabla^{\perp}w_R}_{2+\varepsilon}\norm{|u|^2}_{\frac{2+\varepsilon}{1+\varepsilon}}
		\leq C \sqrt{|\log R|} \norm{u}_{H^1}^2.
	\end{align}
\end{proof}

\begin{lemma}[\textbf{Singular two-body term}]\label{lem:sing2bd}
	We have that, as operators on $L^2(\bbR^4)$ or $L^2_{\mathrm{sym}}(\bbR^4)$,
	\begin{equation}\label{eq:sig2bd}
		|\nabla w_R(x-y)|^2 \leq C |\log R|^2 \left( 1 - \Delta_x \right)
	\end{equation}
	for any $0<R<e^{-1}$.
\end{lemma}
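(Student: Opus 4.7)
The plan is to reduce the claim to a weighted $L^2$ inequality in the single variable $x$ with $y$ treated as a parameter, and then use the $L^p$ estimate on $\nabla w_R$ from Corollary~\ref{cor:nabwRLp} together with the sharp Sobolev constant recorded in \eqref{ine:Cq}, balancing the two exponents so that the resulting $|\log R|$ factors multiply to $|\log R|^2$.

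Concretely, fix $\Psi\in H^1_x L^2_y(\bbR^4)$ and, for each fixed $y$, set $\psi(x):=\Psi(x,y)$. It suffices to show that
\begin{equation*}
\int_{\bbR^2}|\nabla w_R(x-y)|^{2}|\psi(x)|^{2}\,\d{x}\ \leq\ C|\log R|^{2}\bigl(\snorm{\psi}_{2}^{2}+\snorm{\nabla\psi}_{2}^{2}\bigr),
\end{equation*}
uniformly in $y$; integrating in $y$ then yields the stated operator bound on $L^2(\bbR^4)$, and restriction to $L^{2}_{\mathrm{sym}}(\bbR^4)$ is automatic. To prove the one-variable bound, I would first translate so that $y=0$ (which preserves all norms), and then apply Hölder's inequality with conjugate exponents $p,p'$:
\begin{equation*}
\int |\nabla w_R(x)|^{2}|\psi(x)|^{2}\,\d{x}\ \leq\ \snorm{\nabla w_R}_{2p}^{2}\,\snorm{\psi}_{2p'}^{2}.
\end{equation*}

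The key step is the choice $2p=2+|\log R|^{-1}$, which is exactly the exponent of Corollary~\ref{cor:nabwRLp} and gives $\snorm{\nabla w_R}_{2p}^{2}\leq C|\log R|$. The conjugate then satisfies $2p'=4|\log R|+2$, and Sobolev's inequality \eqref{eq:S} together with the large-$q$ estimate \eqref{ine:Cq} yields
\begin{equation*}
\snorm{\psi}_{2p'}^{2}\ \leq\ C_{2p'}\bigl(\snorm{\psi}_{2}^{2}+\snorm{\nabla\psi}_{2}^{2}\bigr)\ \leq\ C|\log R|\bigl(\snorm{\psi}_{2}^{2}+\snorm{\nabla\psi}_{2}^{2}\bigr),
\end{equation*}
since $(4|\log R|+2)^{1-2/(4|\log R|+2)}\leq C|\log R|$ for $0<R<e^{-1}$. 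Multiplying the two estimates produces the claimed factor $|\log R|^{2}$.

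The only real obstacle is checking that the chosen exponent actually yields both factors of $|\log R|$ with the correct sign, i.e.\ that one does not overshoot on one side and lose on the other; this is the reason for taking the specific pairing $2p=2+|\log R|^{-1}$ and $2p'\sim 4|\log R|$, which by design equalizes the two contributions. Everything else is a routine application of Hölder, the weak-Young based bound of Corollary~\ref{cor:nabwRLp}, and Sobolev's inequality, after which Fubini in the $y$ variable promotes the pointwise (in $y$) estimate to the operator inequality on $L^{2}(\bbR^4)$ against $1-\Delta_{x}$.
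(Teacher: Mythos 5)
Your proposal is correct and follows essentially the same route as the paper: you slice in $y$, apply H\"older to pass $|\nabla w_R|^2$ onto an $L^p$ norm with $p=1+\tfrac{1}{2|\log R|}$ (so $2p=2+|\log R|^{-1}$, exactly the exponent of Corollary~\ref{cor:nabwRLp}), and then invoke Sobolev's inequality \eqref{eq:S} with the explicit constant \eqref{ine:Cq} for the conjugate exponent $2p'\sim 4|\log R|$, obtaining one factor of $|\log R|$ from each. This is precisely the paper's argument, differing only in that you fully reduce to a single-variable estimate before applying H\"older, whereas the paper applies H\"older to the joint integral and then Sobolev slice-by-slice.
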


\begin{proof}
	Here we repeat the proof of \cite[Lemma 2.2]{LunRou} and only modify the last step. We start with an application of Hölder's and Sobolev's inequalities.
	For any $W : \bbR^2 \to \bbR$ and $f \in C_c^\infty(\bbR^4)$
	\begin{align*}
		\langle f , W(x - y) f \rangle 
		&= \int_{\bbR^2 \times \bbR^2} \overline{f(x, y)} W(x - y) f(x, y) \, \d{x} \d{y} \\
		&\leq \|W\|_{p} \int_{\bbR^2} \left( \int_{\bbR^2} |f(x, y)|^{2q} \d{x} \right)^{1/q} \d{y} \\
		&\leq C_q \|W\|_{p} \int\int_{\bbR^2 \times \bbR^2} \left( |f(x, y)|^2 + |\nabla_x f(x, y)|^2 \right) \d{x} \d{y} \\
		&= C_q \|W\|_{p} \langle f | (1-\Delta_x) \otimes {\bf 1} | f \rangle
	\end{align*}
	where we take any $p > 1, \, q = \frac{p}{p-1} \in (1, +\infty)$.
	Finally, we plug $W=|\nabla w_R|^2$, $p=1+\varepsilon$, $q=\frac{p}{p-1}=\frac{1+\varepsilon}{\varepsilon }$ with  $\varepsilon=\frac{1}{2}|\log R|^{-1}$, and we use \eqref{ine:Cq} for $C_q$ and \eqref{eq:log-nabwRLp} for $\|W\|_p$ to obtain
	\begin{align}
		C_{q}\|W\|_{p}&
		\leq C_{q}\|\nabla w_{R}\|_{2p}^{2}
		\leq C |\log R|^2.
	\end{align}
\end{proof}

\begin{remark}
	Lemma \ref{lem:sing2bd} is optimal. The counter example is in Appendix \ref{app:counterexample}.
\end{remark}

\begin{lemma}[\textbf{Mixed two-body term}]\label{lem:mix2bd}
	For any $0< R < e^{-1}$, we have that
	\begin{align}
		\left( \nabla_x \cdot \nabla^\perp w_R(x-y) + \nabla^\perp w_R(x-y)\cdot \nabla_x \right)^2
		&\leq  \left\{
		\begin{array}{ll}
			C {|\log R|}^2  (1-\Delta_x)^2 \\
			C {|\log R|}^2 (1 -\Delta_x)(1-\Delta_y) \label{eq:mix2-2}
		\end{array}
		\right. 
	\end{align}
	as operators on $L^2_{\mathrm{sym}}(\bbR^2 \times \bbR^2)$.
\end{lemma}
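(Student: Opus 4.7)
The plan is to reduce both inequalities to Lemma~\ref{lem:sing2bd} by a single pointwise Cauchy--Schwarz, exploiting the divergence-free nature of $\bA(x,y):=\nabla^\perp w_R(x-y)$ and, for the second bound, the bosonic symmetry of $f$. The opening observation is that $\nabla_x\cdot \nabla_x^\perp\equiv 0$, so the field $x\mapsto \bA(x,y)$ has vanishing $x$-divergence for every fixed $y$. Consequently, as operators,
\[
T \,:=\, \nabla_x \cdot \bA \,+\, \bA \cdot \nabla_x \,=\, 2\,\bA \cdot \nabla_x,
\]
and I read the ``square'' on the left of the stated inequalities as a bound for $\|T f\|^2=\langle f,T^*T f\rangle$ (since $T$ is anti--self-adjoint the literal $T^2$ is already non-positive, so this is the only non-trivial reading). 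A pointwise Cauchy--Schwarz then gives
\[
\|T f\|^2 \,=\, 4\!\int |\bA(x,y)\cdot \nabla_x f(x,y)|^2\,\d x\,\d y \,\leq\, 4\!\int |\bA(x,y)|^2\,|\nabla_x f(x,y)|^2\,\d x\,\d y .
\]

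For the first inequality I apply Lemma~\ref{lem:sing2bd} directly with the test function $g=\nabla_x f\in L^2(\R^4)$:
\[
\int|\bA|^2|\nabla_x f|^2 \,=\, \langle \nabla_x f,|\bA|^2\nabla_x f\rangle \,\leq\, C|\log R|^2\,\langle \nabla_x f,(1-\Delta_x)\nabla_x f\rangle \,\leq\, C|\log R|^2\,\langle f,(1-\Delta_x)^2 f\rangle ,
\]
using at the last step that $-\Delta_x\leq 1-\Delta_x$ and that $\Delta_x$ commutes with itself. This yields the first bound.

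For the second inequality, which genuinely requires $f\in L^2_{\mathrm{sym}}$, I would symmetrize before applying Lemma~\ref{lem:sing2bd}. Using $f(x,y)=f(y,x)$ together with the antisymmetry $\bA(y,x)=-\bA(x,y)$, a short computation shows $(Tf)(y,x)=-2\,\bA(x,y)\cdot\nabla_y f(x,y)$; swapping variables in half of the integral then gives
\[
\|T f\|^2 \,=\, \tfrac12\!\int\!\bigl(|Tf(x,y)|^2+|Tf(y,x)|^2\bigr)\,\d x\,\d y \,\leq\, 2\!\int |\bA|^2\bigl(|\nabla_x f|^2+|\nabla_y f|^2\bigr)\,\d x\,\d y .
\]
I would then invoke Lemma~\ref{lem:sing2bd} twice: once with $(1-\Delta_y)$ on the first summand (valid because $|\nabla w_R(x-y)|^2$ is symmetric under $x\leftrightarrow y$) and once with $(1-\Delta_x)$ on the second, producing the operator bound $C|\log R|^2\,\langle f,-\Delta_x(1-\Delta_y)-\Delta_y(1-\Delta_x)\,f\rangle$. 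The final step is the elementary inequality
\[
-\Delta_x(1-\Delta_y)-\Delta_y(1-\Delta_x)\,\leq\, 2(1-\Delta_x)(1-\Delta_y),
\]
whose difference equals $(1-\Delta_x)+(1-\Delta_y)\geq 0$ on the Fourier side.

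The only mildly delicate step is the symmetrization producing the $|\nabla_y f|^2$ term: it is what makes the cross bound $(1-\Delta_x)(1-\Delta_y)$ available and it genuinely fails on arbitrary (non-symmetric) $f$. Everything else is Cauchy--Schwarz plus one application of Lemma~\ref{lem:sing2bd}. This mixed form is what will later allow the Grönwall argument in Section~\ref{sec:kinetic} to distribute kinetic regularity across two particles instead of concentrating it on one.
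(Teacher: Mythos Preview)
Your argument is correct and, up to the pointwise Cauchy--Schwarz step, identical to the paper's. For the first inequality you do exactly what the paper (deferring to \cite[Lemma~2.5]{LunRou}) does: apply Lemma~\ref{lem:sing2bd} in the $x$-variable to the test function $\nabla_x f$.

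For the second inequality your route diverges from the paper's, and in fact you work harder than necessary. The paper does \emph{not} symmetrize in $f$: after reaching $\|Tf\|^2\le 4\int|\nabla^\perp w_R(x-y)|^2|\nabla_x f|^2$, it simply applies Lemma~\ref{lem:sing2bd} once \emph{in the $y$-variable} (legitimate because $|\nabla^\perp w_R(x-y)|=|\nabla^\perp w_R(y-x)|$, an observation you also record), obtaining directly
\[
\|Tf\|^2\le C|\log R|^2\!\int\big(|\nabla_x f|^2+|\nabla_y\nabla_x f|^2\big)\le C|\log R|^2\langle f,(1-\Delta_x)(1-\Delta_y)f\rangle.
\]
No symmetry of $f$ enters. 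Your detour---symmetrizing via $f(x,y)=f(y,x)$ to produce a $|\nabla_y f|^2$ term, then applying Lemma~\ref{lem:sing2bd} twice and recombining---is valid on $L^2_{\mathrm{sym}}$, but your accompanying claim that the cross bound ``genuinely fails on arbitrary (non-symmetric) $f$'' is incorrect: the paper's one-step argument proves the second inequality on all of $L^2(\R^4)$. The bosonic symmetry in the lemma's statement is not used in the proof.
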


\begin{proof}

	Note that the first inequality of the lemma can be found in \cite[Lemma 2.5]{LunRou}. We use the same approach to derive the second one. We start by noticing that
	\begin{equation}\label{eq:commutator}
		\nabla_x \cdot \nabla^{\perp} w_R (x-y) = \nabla^{\perp} w_R (x-y) \cdot \nabla_x
	\end{equation}
	because $\nabla_x \cdot \nabla^{\perp} w_R(x-y) = 0$. 
	We can then square the expression we want to estimate, obtaining 
	\[
	\left( \nabla_x \cdot \nabla^{\perp} w_R (x-y) + \nabla^{\perp} w_R (x-y)  \cdot \nabla_x \right) ^2 
	= 4 \nabla_x \cdot \nabla^{\perp} w_R (x-y) \nabla^{\perp} w_R (x-y)  \cdot \nabla_x.
	\]
	Consequently, for any $f = f(x,y) \in C^{\infty}_c(\R^4)$,
	\begin{align*}
		&\left| \left\langle f \big| \left( \nabla_x \cdot \nabla^{\perp} w_R (x-y) + \nabla^{\perp} w_R (x-y)  \cdot \nabla_x \right) ^2 \big| f \right\rangle \right| \\
		&= 4\left| \iint_{\R^2 \times \R^2} \left( \nabla_x \bar{f} (x,y) \cdot \nabla^{\perp} w_R (x-y) \right) \left(\nabla_x f (x,y) \cdot \nabla^{\perp} w_R (x-y) \right)\d{x}\d{y}\right| \\
		&\leq 4\iint_{\R^2 \times \R^2} \left| \nabla_x f (x,y)\right| ^2 \left| \nabla^{\perp} w_R (x-y) \right| ^2 \d{x}\d{y}. 
	\end{align*}
	Inserting \eqref{eq:sig2bd} with $ \left| \nabla^{\perp} w_R (x-y) \right| = \left| \nabla^{\perp} w_R (y-x) \right| $, we get 
	\begin{align*}
		&\left| \left\langle f \big| \left( \nabla_x \cdot \nabla^{\perp} w_R (x-y) + \nabla^{\perp} w_R (x-y)  \cdot \nabla_x \right) ^2 \big| f \right\rangle \right| 
		\leq C |\log R|^2 \iint_{\R^2 \times \R^2} ( \left| \nabla_x f \right| ^2 +  \left| \nabla_y \nabla_x f \right| ^2 ) \d{x}\d{y}
	\end{align*}
	and thus, adding $-\Delta_y \geq 0$, we get 
	\begin{equation*}
		\left( \nabla_x \cdot \nabla^{\perp} w_R (x-y) + \nabla^{\perp} w_R (x-y)  \cdot \nabla_x \right) ^2 
		\leq C {|\log R|}^2  (1 -\Delta_x)(1 -\Delta_y)
	\end{equation*}
	concluding the proof.
\end{proof}

We will also need the operator bound from \cite[Lemma 2.4]{LunRou} based on Hardy's inequality.
\begin{lemma}[\textbf{Three-body term}]\label{lem:3body}
	We have that, as operators on $L^{2}_{\mathrm{sym}}(\mathbb{R}^{6})$,
	\begin{equation}\label{eq:W3C}
		0\leq \nabla^{\perp}w_{R}(x-y)\cdot\nabla^{\perp}w_{R}(x-z)\leqslant C\left(1-\Delta_x\right).
	\end{equation}
\end{lemma}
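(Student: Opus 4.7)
The plan is to exploit the two-dimensional identity $\nabla^{\perp}u\cdot\nabla^{\perp}v=\nabla u\cdot\nabla v$, which rewrites the three-body kernel as the dot product of two $R$-regularized Coulomb gradients,
$$\nabla^{\perp}w_R(x-y)\cdot\nabla^{\perp}w_R(x-z)=\nabla w_R(x-y)\cdot\nabla w_R(x-z).$$
From here both bounds follow from Hardy-type estimates adapted to the smeared potential, as in \cite[Lemma 2.4]{LunRou}.

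For the upper bound, I would begin by applying the arithmetic--geometric mean inequality for vectors,
$$\nabla w_R(x-y)\cdot\nabla w_R(x-z)\leq\tfrac{1}{2}\bigl(|\nabla w_R(x-y)|^{2}+|\nabla w_R(x-z)|^{2}\bigr),$$
so that the three-body question reduces to two two-body ones. Each of the resulting multiplication operators depends only on $x$ and one additional variable, and I would bound each by $C(1-\Delta_x)$ via a 2D Hardy-type inequality built from Newton's theorem applied to the smeared charge distribution in \eqref{wr}, together with the weak-Young inequality \eqref{eq:WY} exploiting that $\nabla w_R\in L^{2}_{w}$ with a norm uniform in $R$. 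Staying in the weak-$L^{2}$ scale is precisely what delivers an $R$-independent constant; a naive Hölder at exponent $2+\varepsilon$ would instead reintroduce the $|\log R|^{2}$ divergence of Lemma~\ref{lem:sing2bd}.

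For the lower bound, I would use the bosonic symmetry of $L^{2}_{\mathrm{sym}}(\mathbb{R}^{6})$ together with the polarization identity
$$2\,\nabla^{\perp}w_R(x-y)\cdot\nabla^{\perp}w_R(x-z)=\bigl|\nabla^{\perp}w_R(x-y)+\nabla^{\perp}w_R(x-z)\bigr|^{2}-|\nabla^{\perp}w_R(x-y)|^{2}-|\nabla^{\perp}w_R(x-z)|^{2}.$$
After symmetrization over particle labels, the three-body kernel combines with the singular two-body diagonal appearing in \eqref{expanded_H} to form the manifestly non-negative square $|\mathbf{A}^{R}|^{2}\geq 0$, which justifies the lower bound $0$ on the bosonic sector once the diagonal correction is absorbed using Lemma~\ref{lem:sing2bd}.

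The main obstacle is to preserve an $R$-independent constant on the right-hand side. A direct Cauchy--Schwarz followed by Lemma~\ref{lem:sing2bd} already closes the estimate, but at the cost of a $|\log R|^{2}$ factor incompatible with the claim. The key quantitative input, taken from \cite{LunRou}, is to perform the $y$ and $z$ integrations first via weak-Young with the $R$-uniform $L^{2}_{w}$ bound on $\nabla w_R$, so that only one clean factor of $(1-\Delta_x)$ survives.
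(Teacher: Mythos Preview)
Your proposal has genuine gaps in both directions.

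\medskip

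\textbf{Upper bound.} The AM--GM step
\[
\nabla w_R(x-y)\cdot\nabla w_R(x-z)\le\tfrac12\bigl(|\nabla w_R(x-y)|^{2}+|\nabla w_R(x-z)|^{2}\bigr)
\]
throws away precisely the structure that makes the three-body term better behaved than the singular two-body term. After this step you are committed to proving $|\nabla w_R(x-y)|^{2}\le C(1-\Delta_x)$ with $C$ independent of $R$, and this is \emph{false}: for $R=0$ the potential is $|x-y|^{-2}$, and the two-dimensional Hardy inequality with inverse-square weight fails, which is exactly why Lemma~\ref{lem:sing2bd} carries an unavoidable $|\log R|^{2}$. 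No weak-Young manoeuvre rescues this, because once you have reduced to a two-body multiplication operator there is no convolution in the quadratic form $\int|\nabla w_R(x-y)|^{2}|f(x,y,z)|^{2}$ to which \eqref{eq:WY} applies. The approach in \cite[Lemma~2.4]{LunRou} is different in kind: one first symmetrizes over the three particles, shows the resulting kernel is pointwise bounded by $C\rho(x,y,z)^{-2}$ with $\rho^{2}=|x-y|^{2}+|y-z|^{2}+|z-x|^{2}$, and then applies Hardy's inequality in $\mathbb{R}^{6}$, where the inverse-square potential \emph{is} form-bounded by the Laplacian. The jump from dimension $2$ to dimension $6$ is what kills the logarithm; your reduction to two-body terms stays in dimension $2$.

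\medskip

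\textbf{Lower bound.} Your polarization argument yields
\[
\nabla^{\perp}w_R(x-y)\cdot\nabla^{\perp}w_R(x-z)\ge -\tfrac12\bigl(|\nabla^{\perp}w_R(x-y)|^{2}+|\nabla^{\perp}w_R(x-z)|^{2}\bigr),
\]
and ``absorbing the diagonal via Lemma~\ref{lem:sing2bd}'' turns the right-hand side into $-C|\log R|^{2}(1-\Delta_x)$, not $0$. The actual mechanism is again the full symmetrization: on $L^{2}_{\mathrm{sym}}(\mathbb{R}^{6})$ the quadratic form of the kernel equals that of its symmetrized version, and the cyclic sum
\[
\sum_{\sigma}\frac{(x_{\sigma(1)}-x_{\sigma(2)})\cdot(x_{\sigma(1)}-x_{\sigma(3)})}{|x_{\sigma(1)}-x_{\sigma(2)}|_R^{2}\,|x_{\sigma(1)}-x_{\sigma(3)}|_R^{2}}
\]
is \emph{pointwise} nonnegative (this is the algebraic identity recalled in the proof of Lemma~\ref{lem:af_three_body}, citing \cite[(2.13)]{LunRou}). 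No diagonal correction is needed.
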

This concludes the derivation of the main operator bounds of the operators contained in the Hamiltonian $\HNR$ defined in \eqref{expanded_H}. 
We now proceed to the analysis of the various quantities that play an important role in the proofs of Sections~\ref{sec:pickl} and \ref{sec:kinetic}.

\subsection{A Priori Bounds for Mean-Field Terms}

\begin{definition}\label{def:vw}
	For any function $\varphi_t\in H^2(\R^2)$, we define the operators:
	\begin{align}
		v'(x)&:=-\im \nabla_x\cdot \bAR\left [|\varphi_t|^{2}\right ](x
		)+\hc -(\nabla^{\perp}w_R\ast \mathbf{J}[\varphi_t])(x),\label{def:v'}\\
		w'(x)&:=\bAR\left [|\varphi_t|^{2}\right ]^2(x)+2(\nabla^{\perp}w_R\ast\bAR\left [|\varphi_t|^{2}\right ]|\varphi_t|^2)(x),\label{def:w'}\\
		v(x-y)&:= -\im \nabla_x\cdot\nabla^{\perp}w_R(x -y)+\hc, \label{def:v}\\
		w(x-y, x-z)&:=\nabla^{\perp}w_R(x -y)\cdot\nabla^{\perp}w_R(x-z)\label{def:w}
	\end{align}
	as well as the projectors $p(x):=|\varphi_t(x)\rangle\langle\varphi_t(x)|$ and $q(x):=\one_{L^2(\R^2)}-p(x)$. 
\end{definition}
\begin{lemma} 
	We then have the following estimates:
	\begin{align}
		\norm{v' \varphi_t}_2&\leq C(\norm{ \varphi_t}^3_{H^1}+\norm{ \varphi_t}_{H^2}\norm{ \varphi_t}_{H^1}),\label{ine:v'}\\
		\norm{v(x -y)\varphi_t(x)\varphi_t(y)}_{L^2(\R^4)}^2&\leq C|\log R|^2(\norm{ \varphi_t}^2_{H^1}+\norm{ \varphi_t}_{H^1}+1),\label{ine:vphiphi}\\
		\norm{w' \varphi_t}_2&\leq C\norm{ \varphi_t}_{H^1}^3(\norm{ \varphi_t}_{H^1}+1),\label{ine:w'}\\
		\norm{w(x-y, x-z)\varphi_t(x)\varphi_t(y)\varphi_t(z)}_{L^2(\R^6)}&\leq C|\log R|\norm{ \varphi_t}_{H^1}^3.\label{ine:wphiphiphi}
	\end{align}
\end{lemma}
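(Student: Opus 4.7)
The four inequalities split into two groups. The one-body estimates \eqref{ine:v'} and \eqref{ine:w'} carry no $|\log R|$ and should follow from H\"older, weak-Young \eqref{eq:WY}, Sobolev \eqref{eq:S}, and Lemma~\ref{lem:estnew}. The many-body estimates \eqref{ine:vphiphi} and \eqref{ine:wphiphiphi} are where the logarithmic losses arise; they will be controlled using the operator bounds of Lemmas~\ref{lem:sing2bd}--\ref{lem:3body} or Corollary~\ref{cor:nabwRLp}, together with H\"older exponents tuned close to the critical value by powers of $|\log R|^{-1}$.

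\textbf{The one-body estimates.} For \eqref{ine:v'} I would first use $\nabla\cdot\nabla^{\perp}w_R=0$ to get $\nabla\cdot\bAR[|\varphi_t|^{2}]=0$, collapsing the self-adjoint sum in $v'$ to $v'\varphi_t=-2\im\,\bAR[|\varphi_t|^{2}]\cdot\nabla\varphi_t-(\nabla^{\perp}w_R*J[\varphi_t])\varphi_t$. A H\"older split $\tfrac14+\tfrac14=\tfrac12$, combined with Lemma~\ref{lem:estnew} ($s=0$, $\|\bAR[|\varphi_t|^2]\|_4\leq C\|\varphi_t\|_{H^1}$) and $\|\nabla\varphi_t\|_4\leq C\|\varphi_t\|_{H^2}$ from \eqref{eq:S}, produces the $\|\varphi_t\|_{H^1}\|\varphi_t\|_{H^2}$ term. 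For the second summand, I would bound $\|J[\varphi_t]\|_{4/3}\leq C\|\varphi_t\|_{H^1}^{2}$ by H\"older, lift the convolution to $L^{4}$ via \eqref{eq:WY}, and close with $\|\varphi_t\|_{4}\leq C\|\varphi_t\|_{H^1}$. Estimate \eqref{ine:w'} follows the same pattern: the square $\bAR^{2}\varphi_t$ is handled by H\"older $\tfrac16+\tfrac16+\tfrac16=\tfrac12$, each $L^{6}$-factor being controlled by $\|\varphi_t\|_{H^1}$ via Lemma~\ref{lem:estnew} and \eqref{eq:S}; the convolution piece by first estimating $\|\bAR[|\varphi_t|^2]|\varphi_t|^{2}\|_{4/3}\leq\|\bAR[|\varphi_t|^2]\|_{4}\|\varphi_t\|_{4}^{2}$, then \eqref{eq:WY} to lift to $L^{4}$, and finally H\"older against $\|\varphi_t\|_{4}$.

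\textbf{The two-body estimate \eqref{ine:vphiphi}.} The key observation is that $\varphi_t(x)\varphi_t(y)\in L^{2}_{\mathrm{sym}}(\R^{4})$, so the second bound of Lemma~\ref{lem:mix2bd} applies directly:
\[
\|v(x-y)\varphi_t\otimes\varphi_t\|_{L^2(\R^4)}^{2}\leq C|\log R|^{2}\,\langle\varphi_t\otimes\varphi_t,(1-\Delta_x)(1-\Delta_y)\varphi_t\otimes\varphi_t\rangle=C|\log R|^{2}\|\varphi_t\|_{H^1}^{4},
\]
which implies the claimed bound.

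\textbf{The three-body estimate \eqref{ine:wphiphiphi}, the main obstacle.} Pointwise Cauchy-Schwarz $|\nabla^{\perp}w_R(x-y)\cdot\nabla^{\perp}w_R(x-z)|\leq|\nabla w_R(x-y)||\nabla w_R(x-z)|$ and factorization of the integrals over $y$ and $z$ reduce the task to bounding
\[
\|w\,\varphi_t^{\otimes 3}\|_{L^2(\R^6)}^{2}\leq\int_{\R^{2}}|\varphi_t(x)|^{2}\,g(x)^{2}\,\mathrm{d}x,\qquad g:=|\nabla w_R|^{2}*|\varphi_t|^{2}.
\]
The naive route through $\|g\|_{\infty}$ costs two powers of $|\log R|$ (one from Corollary~\ref{cor:nabwRLp} and one from the Sobolev constant in \eqref{ine:Cq}), which on squaring yields $|\log R|^{4}$ and is too weak. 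To avoid the $L^{\infty}$-loss I would instead write the integral as $\||\varphi_t|g\|_{L^2(\R^2)}^{2}\leq\|\varphi_t\|_{4}^{2}\|g\|_{4}^{2}$ by H\"older, and bound $\|g\|_{4}$ by Young with the logarithmically tuned exponent $p=1+\tfrac12|\log R|^{-1}$, so that $\||\nabla w_R|^{2}\|_{p}=\|\nabla w_R\|_{2+|\log R|^{-1}}^{2}\leq C|\log R|$ by Corollary~\ref{cor:nabwRLp} while the Young-conjugate exponent stays close to $4$, making $\||\varphi_t|^{2}\|_{q}\leq C\|\varphi_t\|_{H^1}^{2}$ a standard Sobolev embedding. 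This yields $\|g\|_{4}\leq C|\log R|\|\varphi_t\|_{H^1}^{2}$, and pairing with $\|\varphi_t\|_{4}\leq C\|\varphi_t\|_{H^1}$ gives $\|w\varphi_t^{\otimes 3}\|\leq C|\log R|\|\varphi_t\|_{H^1}^{3}$ as claimed. The delicate bookkeeping I expect to be most technically demanding is verifying that the small perturbation of $p$ near $1$ and of its conjugate near $4$ does not spoil either of the two bounds, so that exactly one power of $|\log R|$ survives under the square root.
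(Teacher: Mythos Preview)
Your proposal is correct and follows essentially the same route as the paper. The only cosmetic differences are the H\"older exponents chosen in \eqref{ine:w'} (you split $\tfrac16+\tfrac16+\tfrac16$, the paper uses $\tfrac18+\tfrac18+\tfrac14$), and your observation that \eqref{ine:vphiphi} actually yields $C|\log R|^{2}\|\varphi_t\|_{H^1}^{4}$ is in fact what the paper's computation gives too, just written less tidily; for \eqref{ine:wphiphiphi} your tuned exponent $p=1+\tfrac12|\log R|^{-1}$ coincides exactly with the paper's choice $\varepsilon=\tfrac{1}{2|\log R|}$, and the Young conjugate stays near $4$ so the Sobolev constant on $|\varphi_t|^{2}$ remains bounded, confirming that only one power of $|\log R|$ survives.
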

\begin{proof}
	We start with the first estimate
	\begin{align}
		\norm{v' \varphi_t}_2&\leq 2\norm{ \bAR\left [|\varphi_t|^{2}\right ]\cdot\nabla\varphi_t}_2+\norm{\nabla^{\perp}w_R\ast \mathbf{J}[\varphi_t]\varphi_t}_2\nonumber\\
		&\leq 2\norm{\nabla\varphi_t}_4\norm{ \bAR\left [|\varphi_t|^{2}\right ]}_{4}+\norm{\nabla^{\perp}w_R\ast \mathbf{J}[\varphi_t]}_4\norm{\varphi_t}_4\nonumber\\
		&\leq C\norm{\nabla\varphi_t}_4\norm{ \varphi_t}_{H^1}+2\snorm{\nabla^{\perp}w_R}_{2,w}\norm{\varphi_t\nabla \varphi_t}_{\frac43}\norm{\varphi_t}_4\nonumber
	\end{align}
	where we used \eqref{eq:N4} as well as \eqref{eq:WY} and H\"older inequality to treat the convolution product. For the second estimate of the lemma we write
	\begin{align}
		\norm{v(x_1 -x_2)\varphi_t(x_1)\varphi_t(x_2)}^2_{L^2(\R^4)}&=\braket{\varphi_t(x_1)\varphi_t(x_2)}{(v(x_1 -x_2))^2\varphi_t(x_1)\varphi_t(x_2)}\nonumber\\
		&\leq C|\log R|^2\braket{\varphi_t(x_1)\varphi_t(x_2)}{(1-\Delta_{x_1})(1-\Delta_{x_2})\varphi_t(x_1)\varphi_t(x_2)}\nonumber\\
		&\leq C|\log R|^2(\norm{ \varphi_t}^2_{H^1}+\norm{ \varphi_t}_{H^1}+1)
	\end{align}
	by Lemma~\ref{lem:mix2bd}, using the operator bound of \eqref{eq:mix2-2}. The next term to treat is
	\begin{align}
		\norm{w' \varphi_t}_2&\leq \norm{ \bAR\left [|\varphi_t|^{2}\right ]^2\varphi_t}_2+2\norm{( \nabla^{\perp}w_R\ast\bAR\left [|\varphi_t|^{2}\right ]|\varphi_t|^2)\varphi_t}_2\nonumber\\
		&\leq    \norm{ \bAR\left [|\varphi_t|^{2}\right ]}^2_{8} \norm{\varphi_t}_4+C\norm{ \nabla^{\perp}w_R\ast\bAR\left [|\varphi_t|^{2}\right ]|\varphi_t|^2}_4\norm{\varphi_t}_4\nonumber\\
		&\leq C\norm{ \varphi_t}_{H^1}^3+C\snorm{ \nabla^{\perp}w_R}_{2,w}\norm{\bAR\left [|\varphi_t|^{2}\right ]|\varphi_t|^2}_{\frac43}\norm{\varphi_t}_{4}\nonumber\\
		&\leq C\norm{ \varphi_t}_{H^1}^3+C\snorm{ \nabla^{\perp}w_R}_{2,w}\norm{\bAR\left [|\varphi_t|^{2}\right ]}_{4}\norm{|\varphi_t|^2}_{2}\norm{\varphi_t}_{4}\nonumber\\
		&\leq C\norm{ \varphi_t}_{H^1}^3+C\norm{ \varphi_t}_{H^1}^4
	\end{align}
	where we used \eqref{eq:N4} twice with $s=0$. Once with $p=8$ and once with $p=4$ combined with \eqref{eq:WY}. 

	Finally, the last estimate is
	\begin{align}
		&\norm{w(x-y, x-z)\varphi_t(x)\varphi_t(y)\varphi_t(z)}_{L^2(\R^6)}^2\nonumber\\
		&\leq\braket{\varphi_t(x)\varphi_t(y)\varphi_t(z)}{|\nabla^{\perp}w_R(x-y)|^2 |\nabla^{\perp}w_R(x-z)|^2\varphi_t(x)\varphi_t(y)\varphi_t(z)}\nonumber\\
		&= \norm{(|\nabla^{\perp}w_R|^2\ast |\varphi_t|^2)^2|\varphi_t|^2}_1\nonumber\\
		&\leq \norm{\varphi_t}_4^2\snorm{\nabla^{\perp}w_R}^4_{2+2\varepsilon}\norm{\varphi_t}_{\frac{8(1+\varepsilon)}{1+5\varepsilon}}^4\nonumber\\
		&\leq C|\log R|^2\norm{ \varphi_t}_{H^1}^6.
	\end{align}
	Here, we chose $\varepsilon =\frac{1}{2|\log R|}$ in the last inequality to apply Corollary~\ref{cor:nabwRLp}.
\end{proof}
The next two lemmas involve the projectors $p(x)$ and $q(x)$ and show what happens in the cases when $-\Delta$ acts on a $\varphi_t$ or does not. 
\begin{lemma}\label{lem:vpq} 
	For any function $\Psi_N\in L_{\mathrm{sym}}^2(\R^{2N})$, we have
	\begin{align}\label{ine:vpq}
		\norm{v(x_1-x_2)p_1q_2\Psi_N}&\leq  C|\log R|(\norm{\varphi_t}_{H^1}+1)(\norm{\nabla_2 q_2\Psi_N}+\norm{ q_2\Psi_N}).
	\end{align}
\end{lemma}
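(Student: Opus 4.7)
The plan is to square the left-hand side, apply the mixed two-body bound of Lemma~\ref{lem:mix2bd}, and then exploit the fact that the projector $p(x)$ lives in a different variable from everything carrying a derivative in $y$.

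Write $\|v(x-y)p(x)q(y)\Psi_N\|^2 = \langle p(x)q(y)\Psi_N,\, v(x-y)^2\, p(x)q(y)\Psi_N\rangle$. The proof of Lemma~\ref{lem:mix2bd} never actually uses the symmetry of the test function, so the operator estimate
\[
v(x-y)^2 \leq C|\log R|^2 (1-\Delta_x)(1-\Delta_y)
\]
may be applied to the non-symmetric function $p(x)q(y)\Psi_N$, viewing the remaining $N-2$ coordinates as parameters (formally integrated out at the end). This yields
\[
\|v(x-y)p(x)q(y)\Psi_N\|^2 \leq C|\log R|^2\, \langle \Psi_N,\, q(y)p(x)(1-\Delta_x)(1-\Delta_y)p(x)q(y)\Psi_N\rangle.
\]

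Next, I use that $p(x)$ commutes with $(1-\Delta_y)$ and with $q(y)$, and similarly $q(y)$ commutes with $(1-\Delta_x)$. Since $(1-\Delta_x)$ and $(1-\Delta_y)$ commute with each other, I can bring them to the form
\[
q(y)\bigl[p(x)(1-\Delta_x)p(x)\bigr](1-\Delta_y)q(y).
\]
A direct computation gives $p(x)(1-\Delta_x)p(x) = \|\varphi_t\|_{H^1}^2\, p(x)$, so the inner product reduces to
\[
\|\varphi_t\|_{H^1}^2\, \langle \Psi_N,\, q(y)p(x)(1-\Delta_y)q(y)\Psi_N\rangle
= \|\varphi_t\|_{H^1}^2\, \langle p(x)\Psi_N,\, q(y)(1-\Delta_y)q(y)\, p(x)\Psi_N\rangle,
\]
where I used again that $p(x)$ commutes with $q(y)$ and $(1-\Delta_y)$.

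Finally, integration by parts in $y$ gives $\langle \Phi, q(y)(1-\Delta_y)q(y)\Phi\rangle = \|q(y)\Phi\|^2 + \|\nabla_y q(y)\Phi\|^2$ for any $\Phi\in L^2_{\mathrm{sym}}(\R^{2N})$. Applying this with $\Phi = p(x)\Psi_N$ and using $\|p(x)\cdot\|\leq \|\cdot\|$ together with the commutation $\nabla_y q(y) p(x) = p(x)\nabla_y q(y)$, I obtain
\[
\|v(x-y)p(x)q(y)\Psi_N\|^2 \leq C|\log R|^2 \|\varphi_t\|_{H^1}^2 \bigl(\|q(y)\Psi_N\|^2 + \|\nabla_y q(y)\Psi_N\|^2\bigr).
\]
Taking square roots and using $\sqrt{a^2+b^2}\leq a+b$ together with $\|\varphi_t\|_{H^1}\leq \|\varphi_t\|_{H^1}+1$ yields the claim.

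The only delicate point is the legitimacy of applying Lemma~\ref{lem:mix2bd} to the non-symmetric function $p(x)q(y)\Psi_N$. This is not really an obstacle, since the underlying estimate in the proof of that lemma holds verbatim for general $L^2$ functions; but it must be stated explicitly so that the commutator manipulations above are well defined. The rest is bookkeeping of which projector commutes with which differential operator.
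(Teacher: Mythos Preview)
Your proof is correct and follows essentially the same route as the paper: square the norm, apply the operator bound \eqref{eq:mix2-2} from Lemma~\ref{lem:mix2bd}, evaluate $p(x)(1-\Delta_x)p(x)=\|\varphi_t\|_{H^1}^2\,p(x)$, and then drop $p(x)$ to conclude. Your explicit remark that the proof of Lemma~\ref{lem:mix2bd} does not use the symmetry of the test function is a useful clarification that the paper leaves implicit.
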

\begin{proof}
	The main ingredient is \eqref{eq:mix2-2} of Lemma~\ref{lem:mix2bd} that we use to get
	\begin{align}
		\norm{v(x_1-x_2)p_1q_2\Psi_N}^2&=\braket{p_1q_2\Psi_N}{(v(x_1 -x_2))^2p_1q_2\Psi_N}\nonumber\\
		&\leq C|\log R|^2\braket{p_1q_2\Psi_N}{(1-\Delta_1)(1-\Delta_2)p_1q_2\Psi_N}\nonumber\\
		&\leq C|\log R|^2(\norm{\varphi_t}_{H^1}^2+1)\braket{p_1q_2\Psi_N}{(1-\Delta_2)p_1q_2\Psi_N}\nonumber\\
		&\leq C|\log R|^2(\norm{\varphi_t}_{H^1}^2+1)(\norm{\nabla_2 q_2\Psi_N}^2+\norm{ q_2\Psi_N}^2).
	\end{align}
	Note that the above holds the same for $v(y-x)$ even if $v(x-y)\neq v(y-x)$.
\end{proof}

\begin{lemma}\label{lem:sqrtw}
	For any function $\Psi_N\in L_{\mathrm{sym}}^2(\R^{2N})$, we have for $u\in\{p,q\}$ that
	\begin{align*}
		\norm{\sqrt{w(x_1-x_2,x_1-x_3)}u_1u_2u_3\Psi_N}&=
		\begin{cases}
			C(\norm{\varphi_t}_{H^1}+1)\norm{u_1u_2u_3\Psi_N}\ & \text{if } u_1=p_1 \\
			C(\norm{u_1u_2u_3\Psi_N}+\norm{\nabla_1 u_1\Psi_N}) & \text{if }u_1=q_1
		\end{cases}
	\end{align*}
	This also holds true for any circular permutation of $(x,y,z)$.
\end{lemma}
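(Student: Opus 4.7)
The proof reduces to applying Lemma~\ref{lem:3body} on the weighted inner product and then examining how the resulting kinetic energy $-\Delta_x$ interacts with the projector $u(x)$. Concretely, since $w(x-y,x-z)\geq 0$, we can write
\begin{align*}
\snorm{\sqrt{w(x-y,x-z)}\,u(x)u(y)u(z)\Psi_N}^2
&= \braket{u(x)u(y)u(z)\Psi_N}{w(x-y,x-z)\,u(x)u(y)u(z)\Psi_N} \\
&\leq C\braket{u(x)u(y)u(z)\Psi_N}{(1-\Delta_x)\,u(x)u(y)u(z)\Psi_N} \\
&= C\snorm{u(x)u(y)u(z)\Psi_N}^2 + C\snorm{\nabla_x u(x)u(y)u(z)\Psi_N}^2,
\end{align*}
so the entire task becomes estimating $\snorm{\nabla_x u(x)u(y)u(z)\Psi_N}$. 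Since $\nabla_x$ commutes with the projectors $u(y),u(z)$, and since these projectors are contractions, the gradient term is bounded by $\snorm{\nabla_x u(x)\Psi_N}$.

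I then split on the nature of $u(x)$. When $u(x)=p(x)=\sket{\varphi_t(x)}\sbra{\varphi_t(x)}$, the $x$-gradient acts explicitly as $\nabla_x p(x) = \sket{\nabla\varphi_t(x)}\sbra{\varphi_t(x)}$; equivalently, for any $\Phi\in L^2(\R^{2N})$, $(\nabla_x p(x))\Phi = (\nabla\varphi_t)(x)\,\langle\varphi_t, \Phi\rangle_x$, from which
\[
\snorm{\nabla_x p(x)\Psi_N} \leq \snorm{\nabla\varphi_t}_2\,\snorm{p(x)\Psi_N}\leq \snorm{\varphi_t}_{H^1}\snorm{u(x)u(y)u(z)\Psi_N},
\]
and combining with the previous display and using $\sqrt{a^2+b^2}\leq a+b$ yields the first case. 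When $u(x)=q(x)$ there is no such closed-form simplification, so I simply retain the factor $\snorm{\nabla_x q(x)\Psi_N}$, which directly produces the second case.

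The circular permutation cases reduce to the same argument after using that $w(x-y,x-z)$ is symmetric in the pair $(y,z)$ and that Lemma~\ref{lem:3body} equally bounds $\nabla^\perp w_R(y-x)\cdot\nabla^\perp w_R(y-z)\leq C(1-\Delta_y)$ and the analogous statement with $\Delta_z$ (by relabeling of integration variables together with the symmetry of $\Psi_N$). Thus the ``shared'' variable plays the role of $x$ in the above computation in each permutation.

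There is no serious obstacle here: the bound is essentially a direct application of Lemma~\ref{lem:3body} together with the elementary identity $\nabla_x p(x)=\sket{\nabla\varphi_t}\sbra{\varphi_t}$. The only point demanding care is the distinction between $\nabla_x$ applied to the function $u(x)\Psi_N$ and the operator $\nabla_x u(x)$ acting as a single object, which is why the statement leaves $\snorm{\nabla_x u(x)\Psi_N}$ in the $q$-case while absorbing the analogous quantity into $\snorm{\varphi_t}_{H^1}$ in the $p$-case.
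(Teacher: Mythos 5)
Your overall approach is the same as the paper's: apply Lemma~\ref{lem:3body} to reduce the bound to $C\braket{\Phi}{(1-\Delta_x)\Phi}$ with $\Phi=u(x)u(y)u(z)\Psi_N$, then analyze how $\nabla_x$ interacts with $u(x)$. However, there is a genuine gap in the $p$-case. You first bound the gradient term by $\snorm{\nabla_x u(x)\Psi_N}$, discarding the contractions $u(y),u(z)$, and then for $u(x)=p(x)$ you write
\[
\snorm{\nabla_x p(x)\Psi_N} \leq \snorm{\nabla\varphi_t}_2\snorm{p(x)\Psi_N}\leq \snorm{\varphi_t}_{H^1}\snorm{u(x)u(y)u(z)\Psi_N},
\]
but the final inequality is false: since $u(y),u(z)$ are orthogonal projections, one has $\snorm{p(x)\Psi_N}\geq \snorm{p(x)u(y)u(z)\Psi_N}=\snorm{u(x)u(y)u(z)\Psi_N}$, which goes the wrong way. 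Having thrown away $u(y),u(z)$ you cannot get them back for free, and the lemma's conclusion with $\snorm{u(x)u(y)u(z)\Psi_N}$ on the right is precisely what is needed downstream.

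The fix is to keep $u(y)u(z)$ in the $p$-case instead of discarding them early. Using $p(x)^2=p(x)$ and $\snorm{\varphi_t}_2=1$ one has
\[
\nabla_x p(x) = \sket{\nabla\varphi_t}\sbra{\varphi_t} = \bigl(\sket{\nabla\varphi_t}\sbra{\varphi_t}\bigr)p(x),
\]
with $\snorm{\sket{\nabla\varphi_t}\sbra{\varphi_t}}_\op = \snorm{\nabla\varphi_t}_2$. Commuting past $u(y)u(z)$ then gives
\[
\snorm{\nabla_x p(x)u(y)u(z)\Psi_N} = \snorm{\sket{\nabla\varphi_t}\sbra{\varphi_t}\,p(x)u(y)u(z)\Psi_N} \leq \snorm{\nabla\varphi_t}_2 \snorm{u(x)u(y)u(z)\Psi_N},
\]
as required. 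The paper's version is slightly slicker still: it uses the operator identity $p(x)(1-\Delta_x)p(x) = (1+\snorm{\nabla\varphi_t}_2^2)\,p(x)$ directly inside the quadratic form, avoiding any explicit commutation. Your $q$-case and the circular-permutation discussion are fine.
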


\begin{proof}
	We have by Lemma~\ref{lem:3body}
	\begin{align}
		\braket{u_1u_2u_3}{w(x_1-x_2,x_1-x_3)u_1u_2u_3}_{\Psi_N}\leq C\braket{u_1u_2u_3}{(1-\Delta_1) u_1u_2u_3}_{\Psi_N}\nonumber
	\end{align}
	such that if $u_1=p_1=|\varphi_t(x_1)\rangle\langle\varphi_t(x_1)|$
	\begin{align}
		\braket{u_1u_2u_3}{w(x_1-x_2,x_1-x_3)u_1u_2u_3}_{\Psi_N}\leq C(\norm{\varphi_t}^2_{H^1}+1)\norm{u_1u_2u_3\Psi_N}^2\nonumber
	\end{align}
	and if $u_1=q_1$ we get
	\begin{align}
		\braket{u_1u_2u_3}{w(x_1-x_2,x_1-x_3)u_1u_2u_3}_{\Psi_N}&\leq C\norm{u_1u_2u_3\Psi_N}^2+C\norm{\nabla_1 u_1u_2u_3\Psi_N}^2\nonumber\\
		&\leq C\norm{u_1u_2u_3\Psi_N}^2+C\norm{\nabla_1 u_1\Psi_N}^2\nonumber
	\end{align}
	using that $u\leq \one_{L^2(\R^2)}$.
\end{proof}

\section{Chern--Simons--Schr\"odinger Equation}\label{sec:css}
In this section, we establish the well-posedness of the Chern--Simons--Schr\"odinger equation $\mathrm{CSS}(R,\varphi_0)$ for initial data $\varphi_0\in H^2(\R^2)$. We then demonstrate the convergence of the solution of $\mathrm{CSS}(R,\varphi_0)$ to the solution of $\mathrm{CSS}(\varphi_0)$ when $R\to 0$.

\subsection{Well-posedness}
We base our approach on results for $\mathrm{CSS}(\varphi_0)$ derived in \cite{Ber_Bou_Sau_95} to prove, among other things, the local well-posedness in $H^2(\R^2)$ and the conservation of the energy. To make more effective use of the existing results, we rewrite $\mathrm{CSS}(R,\varphi_0)$ using the notation system introduced in \cite{Ber_Bou_Sau_95}. 

The equation can be expressed in the covariant derivative form by setting the parameters $g$ and $\kappa$ in \cite{Ber_Bou_Sau_95} as $\kappa^{-1} = -2\pi \beta$, $g = 0$, and by rewriting \eqref{eq:pilotu} as follows:
\begin{equation}\label{eq:pilotuB}
	(\im \partial_{t}- \mathcal{A}_R^0 )u =-\left (\nabla -\im \mathcal{A}_R\right )^{2}u
\end{equation}
where we define
\begin{align}
	\mathcal{A}_R
	&=-\beta \bAR\left [| u|^{2}\right ]\nonumber
\end{align}
and
\begin{align}
	\mathcal{A}_R^0
	&:=-2\beta \nabla^{\perp}w_{R}\ast\Im [\overline{u}(\nabla -\im \mathcal{A}_R)u]=-2\beta \nabla^{\perp}w_{R}\ast\mathbf{j}_R\nonumber
\end{align}
where $\mathbf{j}_R:=\left (\beta \bAR\left [| u|^{2}\right ]| u|^{2}+\frac{1}{2}\mathbf{J}[u] \right ).$
Note that, in this translation, the factor $2$ differs from that in \cite[Equation 2.6]{Ber_Bou_Sau_95} due to their choice of kinetic energy $-\frac{1}{2}\Delta$. With that notation, we have the property
\begin{equation}\label{eq:nicepropjr}
	\int_{\R^2}\mathbf{j}_R\cdot \partial_t\mathcal{A}_R= -\beta\int_{\R^2}\mathbf{j}_R\cdot(\nabla^{\perp}w_R\ast \partial_t|u_t|^2)=\beta\int_{\R^2}(\mathbf{j}_R\ast\nabla^{\perp}w_R )\partial_t|u_t|^2=-\frac{1}{2}\int_{\R^2}\mathcal{A}^0_R \partial_t|u_t |^2.
\end{equation}

\begin{theorem}[\textbf{Conservation of the energy}]\label{Thm:cons}
	There exists a constant $c>0$ such that for any $0\leq R \leq c$ and any solution $u_t$ of $\mathrm{CSS}(R, u_0)$, the energy is conserved 
	\begin{equation}\label{eq:conservation}
		\mathcal{E}_R^{\mathrm{af}}[u_t]=\mathcal{E}_R^{\mathrm{af}}[u_0]
		\quad\text{for all }t\geq 0.
	\end{equation}
	Moreover, we deduce that 
	\begin{equation}\label{ine:normH1u}
		\norm{u_t}_{H^1} \leq C
	\end{equation}
	where the constant $C$ depends only on the fixed parameters $\beta$ and $\norm{u_0}_{H^1}$ and not on $R$.
\end{theorem}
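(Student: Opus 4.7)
The plan is to exploit the covariant formulation \eqref{eq:pilotuB} together with the identity \eqref{eq:nicepropjr} to differentiate $\mathcal{E}_R^{\mathrm{af}}[u_t] = \int |Du_t|^2$ in time, with $D:=\nabla-\im\mathcal{A}_R$, and show that all contributions cancel. Since the equation is only known to admit local $H^2$ solutions in this generality, I would first carry out the computation at higher regularity (say Schwartz data) and then pass to $H^2$ by a routine approximation argument using the local well-posedness developed elsewhere in Section~\ref{sec:css}.

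Differentiating, $\partial_t(Du) = D(\partial_t u) - \im(\partial_t\mathcal{A}_R) u$, so that $\frac{d}{dt}\mathcal{E}_R^{\mathrm{af}}[u_t]$ splits as $2\Re\int\overline{Du}\cdot D(\partial_t u) \,-\, 2\Re\int\im\overline{Du}\cdot(\partial_t\mathcal{A}_R)u$. For the first piece, I would integrate by parts using $\int\bar f\cdot Dv = -\int\overline{Df}\cdot v$, substitute $\partial_t u = \im D^2 u - \im\mathcal{A}_R^0 u$ from the equation, drop the manifestly imaginary $|D^2 u|^2$ contribution, and combine with the pointwise identity $\partial_t|u|^2 = 2\Im(u\overline{D^2 u})$ to reduce it to $-\int \mathcal{A}_R^0\,\partial_t|u|^2$. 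For the second piece, the direct identities $\Im(u\overline{Du}) = -\Im(\bar u Du)$ and $\mathbf{j}_R = 2\Im(\bar u Du)$ (which follow from the definitions of $\mathcal{A}_R$, $\mathbf{j}_R$, and $J[u]$) rewrite it as $-\int(\partial_t\mathcal{A}_R)\cdot\mathbf{j}_R$, and then \eqref{eq:nicepropjr} converts this to $+\int\mathcal{A}_R^0\,\partial_t|u|^2$. The two pieces cancel, yielding \eqref{eq:conservation}.

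For the uniform $H^1$ control \eqref{ine:normH1u}, the plan is to combine \eqref{eq:conservation}, mass conservation $\snorm{u_t}_2=\snorm{u_0}_2$ (which follows from taking the imaginary part of $\braket{u}{\im\partial_t u}$ after an analogous integration by parts), and the expansion
\[
\mathcal{E}_R^{\mathrm{af}}[u] = \snorm{\nabla u}_2^2 + \beta\int \bAR[|u|^2]\cdot J[u] + \beta^2\int |\bAR[|u|^2]|^2|u|^2.
\]
Via Lemma~\ref{lem:estnew} with $s=0$, $p=4$, and Sobolev's inequality \eqref{eq:S}, both correction terms are controlled by $R$-independent polynomials in $\snorm{u_t}_{H^1}$ with overall prefactors of $|\beta|$ and $\beta^2$ respectively. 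Setting $f(t):=\snorm{\nabla u_t}_2^2$, this yields an implicit inequality of the form $f(t)\leq \mathcal{E}_R^{\mathrm{af}}[u_0] + C|\beta|(1+f(t))\sqrt{f(t)} + C\beta^2(1+f(t))^2$ with constants depending only on $\snorm{u_0}_{L^2}$. A standard continuity argument, using the continuity of $f$ afforded by the local $H^2$ theory, closes this into $f(t)\leq C$ uniformly in $t$ and $R$ provided $|\beta|$ is small enough compared to $\snorm{u_0}_{H^1}$---this is the source of the constant $c$ in the statement.

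The main obstacle I anticipate is precisely this smallness restriction on $\beta$: the non-kinetic terms of $\mathcal{E}_R^{\mathrm{af}}$ are sign-indefinite, so the energy by itself does not dominate $\snorm{\nabla u}_2^2$, and the bootstrap only closes when $|\beta|$ is sufficiently small. The energy-conservation computation itself is formally clean; its only delicate step is the regularization argument needed to legitimize the integrations by parts at $H^2$ regularity, which should follow from the local Cauchy theory for $\mathrm{CSS}(R,u_0)$.
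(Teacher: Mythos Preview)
Your energy-conservation argument is essentially the paper's: differentiate $\int|Du|^2$, split off the $\partial_t\mathcal{A}_R$ contribution, and use \eqref{eq:nicepropjr} to cancel. The paper's presentation is slightly slicker---rather than integrating by parts and invoking $\partial_t|u|^2=2\Im(u\overline{D^2u})$, it multiplies \eqref{eq:pilotuB} directly by $\partial_t\overline{u}$ and takes the real part---but the content is the same.

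The $H^1$ bound is where you diverge, and here your approach is weaker. You expand $\mathcal{E}_R^{\mathrm{af}}[u]$ around $\snorm{\nabla u}_2^2$, estimate the cross terms via Lemma~\ref{lem:estnew} and Sobolev, and close by a continuity/bootstrap argument; this forces $|\beta|$ small, and you attribute the constant $c$ in the statement to this restriction. But $c$ in Theorem~\ref{Thm:cons} constrains $R$, not $\beta$, and the theorem asserts \eqref{ine:normH1u} for \emph{all} $\beta$ (with $C$ merely depending on $\beta$). The paper avoids any bootstrap by writing $\snorm{\nabla u_t}_2\leq\snorm{(\nabla-\im\mathcal{A}_R)u_t}_2+|\beta|\snorm{\bAR[|u_t|^2]u_t}_2$, bounding the second term via Hardy's inequality (Lemma~\ref{lem:af_three_body}) by $C\snorm{u_t}_2^2\snorm{\nabla|u_t|}_2$, and then applying the diamagnetic inequality (Lemma~\ref{lem:af_smooth_ineqs}) to control $\snorm{\nabla|u_t|}_2$ directly by $\mathcal{E}_R^{\mathrm{af}}[u_t]^{1/2}=\mathcal{E}_R^{\mathrm{af}}[u_0]^{1/2}$. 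This closes immediately as $\snorm{\nabla u_t}_2\leq(1+C|\beta|)\mathcal{E}_R^{\mathrm{af}}[u_0]^{1/2}$ with no smallness assumption. The restriction $R\leq c$ enters only at the very end, through Proposition~\ref{prop:conv}, to replace $\mathcal{E}_R^{\mathrm{af}}[u_0]$ by the $R$-independent $\mathcal{E}^{\mathrm{af}}[u_0]$. So your argument, while internally consistent, proves a strictly weaker version of the theorem and misreads the role of $c$; the diamagnetic trick is what buys the full statement.
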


\begin{proof}
	We start by computing
	\begin{align}
		\frac{\mathrm{d}}{\mathrm{d}t} \int_{\R^2} |(\nabla - \im \mathcal{A}_R)u_t|^2 
		&= 2 \Re \int_{\R^2} \big((\nabla - \im \mathcal{A}_R)u_t\big)^* (\nabla - \im \mathcal{A}_R)\partial_t u_t 
		- 2 \int_{\R^2} \mathbf{j}_R \cdot \partial_t \mathcal{A}_R \nonumber \\
		&= 2 \Re \int_{\R^2} \big((\nabla - \im \mathcal{A}_R)u_t\big)^* (\nabla - \im \mathcal{A}_R)\partial_t u_t 
		+  \int_{\R^2} \mathcal{A}^0_R \partial_t |u_t|^2.
	\end{align}
	Next, we multiply \eqref{eq:pilotuB} by $\partial_t \overline{u}_t$ and take the real part, obtaining
	\begin{equation}
		0 = \Re \left\langle \im\partial_t u_t,\partial_t u_t\right\rangle. = \Re \int_{\R^2} \big((\nabla - \im \mathcal{A}_R)u_t\big)^* (\nabla - \im \mathcal{A}_R)\partial_t u_t 
		+\frac{1}{2} \int_{\R^2} \mathcal{A}^0_R \partial_t |u_t|^2, \nonumber
	\end{equation}
	which concludes the proof of \eqref{eq:conservation}. 
	The $H^1$-norm of $u_t$ can then be estimated using the following argument:
	\begin{align}
		\norm{\nabla u_t}_2 
		&= \norm{\nabla u_t - \im \mathcal{A}_R u_t + \im \mathcal{A}_R u_t}_2 \nonumber \\
		&\leq \mathcal{E}_R^{\mathrm{af}}[u_0]^{1/2} + |\beta| \norm{\bAR[|u_t|^2]u_t}_2 \nonumber \\
		&\leq \mathcal{E}_R^{\mathrm{af}}[u_0]^{1/2} + C |\beta| \norm{u_t}_2^2 \norm{\nabla |u_t|}_2 \nonumber \\
		&\leq \big(1 + C |\beta|\big)\mathcal{E}_R^{\mathrm{af}}[u_0]^{1/2}.
	\end{align}
	Here, we used Hardy's inequality~\ref{lem:af_three_body} and the diamagnetic inequality~\ref{lem:af_smooth_ineqs} to reconstruct $\mathcal{E}_R^{\mathrm{af}}[u_t]$ and applied the conservation of energy \eqref{eq:conservation}. 
	Using Proposition~\ref{prop:conv}, we replace $\mathcal{E}_R^{\mathrm{af}}[u_0]$ with $\mathcal{E}^{\mathrm{af}}[u_0]$ in the above. 
	For $R$ small enough, we then obtain
	\begin{equation}\label{ine:normH1u2}
		\norm{u_t}_{H^1}^2 \leq C\big(1 + \mathcal{E}^{\mathrm{af}}[u_0]\big).
	\end{equation}
	Here, $\mathcal{E}^{\mathrm{af}}[u_0]$ is also a fixed parameter, and we have
	\begin{equation}\label{ine:EafH1}
		\mathcal{E}^{\mathrm{af}}[u_0] \leq 2\snorm{\nabla u_0}_2^2 + 2\beta^2\snorm{\bA[|u_0|^2]u_0}_2^2 \leq C\snorm{u_0}_{H^1}^2,
	\end{equation}
	where we used the triangle inequality and \eqref{eq:N4}. 
	This concludes the proof.
\end{proof}

\begin{theorem}[\textbf{Local well-posedness in $H^2$}]\label{thm:wellpose}
	There exists a constant $c>0$ such that for any $R<c$ and any initial data $u_0\in H^2$ there exists a time $T>0$ not depending on $R$ but only on $\norm{u_0}_{H^2}$ and $\beta$ such that $\mathrm{CSS}(R,u_0)$ has a unique solution 
	\begin{equation}\label{eq:u-LWP}
		u_t\in C([0,T],H^2(\R^2)).
	\end{equation} 
	Moreover, there exist three constants $C,c', T>0$ depending on $\norm{u_0}_{H^1}$ but not on $R$ such that for any $|\beta|<c'$ and any $0\leq t\leq T$, we have the control
	\begin{align}\label{ine:H2norm2}
		\snorm{-\Delta u_t}_2&\leq C.
	\end{align}
\end{theorem}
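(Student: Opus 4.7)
I follow the template of Berg\'e--Bouard--Saut \cite{Ber_Bou_Sau_95} for the unsmeared CSS equation, adapted to the regularized problem $\mathrm{CSS}(R,\varphi_0)$ while keeping careful track of all $R$-dependence. The proof has two stages: a Picard iteration in $H^2$ on the Duhamel formulation yields local existence on a time interval whose length depends only on $\snorm{u_0}_{H^2}$ and $\beta$; then, under the smallness assumption $|\beta|<c'$, an $H^2$ energy estimate combined with the conserved $H^1$-bound from Theorem~\ref{Thm:cons} closes a Gr\"onwall-type inequality with $R$-independent coefficients.

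\textbf{Step 1: Duhamel and contraction.} Expanding the covariant Laplacian in \eqref{eq:pilotuB} and using $\nabla\cdot\bAR[|u|^2]=0$, I would rewrite the equation as $\im\partial_t u = -\Delta u + \mathcal{N}_R(u)$ where
\begin{equation}
	\mathcal{N}_R(u) = 2\im\beta\,\bAR[|u|^2]\cdot\nabla u + \beta^2|\bAR[|u|^2]|^2 u - \beta\bigl(\nabla^\perp w_R \ast \mathbf{j}_R[u]\bigr)u
\end{equation}
is semilinear (no second derivatives of $u$). On the closed ball $B_M^T:=\{v\in C([0,T],H^2):\sup_t\snorm{v_t}_{H^2}\leq M\}$ with $M=2\snorm{u_0}_{H^2}$, I verify that the map $\Phi(v)(t):=e^{\im t\Delta}u_0 - \im\int_0^t e^{\im(t-s)\Delta}\mathcal{N}_R(v_s)\,\mathrm{d}s$ is a contraction. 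The required nonlinear estimates are all uniform in $R$: Lemma~\ref{lem:estnew} (with $s=0$ and $s=1$) controls $\bAR[|v|^2]$ and $\nabla\bAR[|v|^2]$ in $L^p$ by $\snorm{v}_{H^{s+1}}$, Sobolev embedding $H^2\hookrightarrow L^\infty$ from Lemma~\ref{lem:sobinfty} handles the $L^\infty$ factors in $u$ and $\nabla u$, and the Lipschitz differences come from \eqref{eq:diff4}. For $T_0$ small depending only on $M$ and $\beta$, $\Phi$ is a contraction and the unique fixed point furnishes \eqref{eq:u-LWP}.

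\textbf{Step 2: Uniform $H^2$ bound.} Theorem~\ref{Thm:cons} gives $\snorm{u_t}_{H^1}\leq C$ uniformly in $t$ and $R$. I then differentiate $\snorm{\Delta u_t}_2^2$ in time; the free-kinetic contribution vanishes by self-adjointness of $\Delta$, leaving
\begin{equation}
	\frac{1}{2}\frac{\mathrm{d}}{\mathrm{d}t}\snorm{\Delta u_t}_2^2 = \ImPart\langle \Delta u_t, \Delta \mathcal{N}_R(u_t)\rangle.
\end{equation}
Pairing each term in $\Delta\mathcal{N}_R(u_t)$ with $\Delta u_t$ and integrating by parts to redistribute derivatives, every contribution is bounded, via Lemma~\ref{lem:estnew} and Sobolev embedding, by an $R$-independent constant times $|\beta|\,P(\snorm{u_t}_{H^1})(1+\snorm{\Delta u_t}_2^2)$. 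The resulting differential inequality
\begin{equation}
	\frac{\mathrm{d}}{\mathrm{d}t}\snorm{\Delta u_t}_2^2 \leq C(\beta)\bigl(1+\snorm{\Delta u_t}_2^2\bigr)
\end{equation}
is then closed by Gr\"onwall's Lemma~\ref{lem:Gron}, delivering \eqref{ine:H2norm2} on $[0,T]$ (after possibly shrinking $T_0$).

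\textbf{Main obstacle.} The crux is to keep all constants independent of $R$; a naive expansion at $H^2$-level would require $\snorm{\nabla w_R}_p$ with $p$ large, which diverges as $R\to 0$ (Corollary~\ref{cor:nabwRLp}). The trick is to always integrate the derivative into the $|u|^2$ factor inside the convolution via Lemma~\ref{lem:estnew}, so that only the $R$-independent weak-$L^2$ norm of $\nabla^\perp w_R$ enters. The smallness of $\beta$ is then indispensable to make the Gr\"onwall coefficient bounded, in line with Remark~\ref{rem:beta}; without it, one would be forced back to the divergent bound $\snorm{u_t}_{H^2}\leq R^{-Ct}$ of Lemma~\ref{thm:u-GWP}.
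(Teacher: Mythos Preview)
Your Step~1 is fine and mirrors what the paper outsources to \cite{Ber_Bou_Sau_95}; the Duhamel contraction does close on an $R$-independent ball because every occurrence of $\nabla^\perp w_R$ can be pushed through the weak-$L^2$ Young inequality as you say.

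The gap is in Step~2. Differentiating $\snorm{\Delta u_t}_2^2$ directly, the term $2\im\beta\,\bAR[|u|^2]\cdot\nabla u$ in $\mathcal{N}_R$ produces, after expanding $\Delta(\bAR\cdot\nabla u)$, a commutator piece
\[
2\beta\,\Re\big\langle \Delta u,\ \textstyle\sum_{j,k}(\partial_j\bAR^k)\,\partial_j\partial_k u\big\rangle
\]
(the top-order part $\Re\langle\Delta u,\bAR\cdot\nabla\Delta u\rangle$ does vanish by divergence-freeness, and $(\Delta\bAR)\cdot\nabla u$ is harmless). To bound this piece in $L^2$ you are forced into either $\snorm{\nabla\bAR}_\infty\snorm{\nabla^2 u}_2$ or $\snorm{\nabla\bAR}_p\snorm{\nabla^2 u}_{2p/(p-2)}$ with $p>2$. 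The first factor $\snorm{\nabla\bAR}_\infty$ is \emph{not} uniform in $R$ (it carries at best a $\sqrt{|\log R|}$, cf.\ Lemma~\ref{lem:A-infty} applied to $\nabla|u|^2$), while the second route needs $\nabla^2 u\in L^q$ for some $q>2$, i.e.\ strictly more than $H^2$. No amount of ``integrating the derivative into $|u|^2$'' via Lemma~\ref{lem:estnew} helps here: the obstruction sits on the $\nabla^2 u$ factor, not on $\bAR$. In particular your claimed linear inequality $\tfrac{d}{dt}\snorm{\Delta u}_2^2\leq C(\beta)(1+\snorm{\Delta u}_2^2)$ does not follow; if one accepts the $\sqrt{|\log R|}$ loss one recovers exactly the $R^{-Ct}$ bound of Theorem~\ref{thm:u-GWP}, not \eqref{ine:H2norm2}.

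The paper sidesteps this by running the energy estimate on $\snorm{\partial_t u_t}_2=\snorm{\mathcal{A}u_t}_2$ with $\mathcal{A}=(\nabla-\im\mathcal{A}_R)^2+\mathcal{A}_R^0$. Because $\mathcal{A}$ is self-adjoint, the would-be top-order term $\Im\langle\mathcal{A}^2 u,\mathcal{A}u\rangle$ is the imaginary part of a real number and vanishes; only $\langle(\partial_t\mathcal{A})u,\mathcal{A}u\rangle$ survives, and $(\partial_t\mathcal{A})u$ is merely first order in $u$ with coefficients involving $\nabla^\perp w_R\ast(\bar u\,\partial_t u)$, which are controlled $R$-uniformly by weak-Young. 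This yields a \emph{cubic} (Bernoulli-type) inequality $\partial_t\snorm{\partial_t u}_2^2\leq C(\snorm{\partial_t u}_2^2+\snorm{\partial_t u}_2^3)$, hence a bound that blows up at a finite $R$-independent time~$T$ --- which is all the theorem claims. The smallness of $|\beta|$ is not used to tame a Gr\"onwall coefficient but to invert the relation between $\snorm{\Delta u}_2$ and $\snorm{\partial_t u}_2$: one first shows $\snorm{\Delta u}_2\leq \snorm{\partial_t u}_2+C|\beta|\,\snorm{u}_{H^1}\snorm{\Delta u}_2+\text{l.o.t.}$ and then absorbs the middle term on the left when $C|\beta|\,\snorm{u}_{H^1}<1$.
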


\begin{remark}
	We can, in fact, prove \eqref{eq:u-LWP} with $T(R) = +\infty$ for any $R > 0$, as demonstrated in Theorem~\ref{thm:u-GWP}. However, Theorem~\ref{thm:u-GWP} also provides a global-in-time bound for $\norm{-\Delta u_t}_2$ that diverges as $R$ approaches zero.
\end{remark}

\begin{proof}[Proof of Theorem~\ref{thm:wellpose}]
	\textbf{Abstract well-posedness.} 
	The proof for $R=0$ can be found in \cite[Theorem 2.1]{Ber_Bou_Sau_95} and provides a $T_1>0$ for which the well-posedness of $\mathrm{CSS}(0,u_0)$ holds for time $0\leq t\leq T_1$.
	When $0<R<1$, the well-posedness follows by the same argument of  \cite[Theorem 2.1]{Ber_Bou_Sau_95} and provides a $T(R)>0$. However, with so little information, we could have $T(R)<T_1$ and $\inf_{R\geq 0}T(R)$ might, a priori, be zero. To make our proof meaningful for any $0<R\leq c$, we need to show that $T(R)>C$ for some $C>0$ independent of $R$. This is the aim of the next paragraph. \newline
	\textbf{Control of the $R$-dependent $H^1$-norm.} Let $u_t$ be the solution of $\mathrm{CSS}(R,u_0)$. By Theorem~\ref{Thm:cons}, $\norm{u_t}_{H^1}\leq C$ where $C$ does not depend on $R$.\newline
	\textbf{Control of the $R$-dependent $H^2$-norm.}  We will prove $T(R)>C>0$ via the control of  $\norm{-\Delta u_t}_2$. We first express it in terms of $\norm{\partial_t u_t}_2$ and bound this last norm via Gr\"onwall's lemma~\ref{lem:Gron}. To simplify the calculation, we introduce the notation $\mathcal{A}:=(\nabla -\im \mathcal{A}_R )^{2}+ \mathcal{A}_R^0$ such that
	\begin{equation}
		\im\partial_t u_t=\mathcal{A}u_t.
	\end{equation}
	Note that $\mathcal{A}$ is self-adjoint because $\mathcal{A}_R^0$ is real-valued. 
	Before starting, we prove a useful inequality
	\begin{align}
		\norm{(-\im\nabla+\beta\bAR[|u_t|^2])^2u_t}_2
		&=\snorm{(\im\partial_{t}-\mathcal{A}_{R}^{0})u_{t}}_{2}\nonumber\\
		&\leq  \norm{\partial_t u_t}_2+ \norm{\mathcal{A}_R^0u_t}_2 \nonumber\\
		&\leq  \norm{\partial_t u_t}_2+C|\beta|\norm{u_t}_4\norm{\mathcal{A}_R^0}_4\nonumber\\
		&\leq \norm{\partial_t u_t}_2+C|\beta|\norm{u_t}_4\snorm{\nabla^{\perp}w_R}_{2,w}\norm{\bAR[|u_t|^2]|u_t|^2+\mathbf{J}[u_t]}_{\frac43}\nonumber\\
		&\leq \norm{\partial_t u_t}_2+C|\beta|\norm{u_t}^3_{H^1}.\label{ine:magtopartu1}
	\end{align}
	Using \eqref{eq:S} to get  $\norm{\nabla u_{t}}_{4}^2\leq \norm{\nabla u_{t}}^2_{2}+\norm{\Delta u_{t}}^2_{2}$ and \eqref{eq:N4} of Lemma~\ref{lem:estnew} we now express
	\begin{align}
		\norm{-\Delta u_{t}}_{2}&=\snorm{(\nabla-\im\mathcal{A}_{R}+\im\mathcal{A}_{R})^{2}u_{t}}_{2}\nonumber\\
		&\leq\snorm{(\nabla-\im\mathcal{A}_{R})^{2}u_{t}}_{2}+\snorm{2\beta\bAR[|u_{t}|^{2}](-\im\nabla+\beta\bAR[|u_{t}|^{2}])u_{t}}_{2}+|\beta|^{2}\snorm{\left|\bAR[|u_{t}|^{2}]\right|^{2}u_{t}}_{2}\nonumber\\
		&\leq\snorm{(\im\partial_{t}-\mathcal{A}_{R}^{0})u_{t}}_{2}+|2\beta|\snorm{\bAR[|u_{t}|^{2}]}_{4}\snorm{(-\im\nabla+\beta\bAR[|u_{t}|^{2}])u_{t}}_{4}+|\beta|\snorm{\bAR[|u_{t}|^{2}]}^2_{8}\snorm{u_{t}}_{4}\nonumber\\
		&\leq\snorm{(\im\partial_{t}-\mathcal{A}_{R}^{0})u_{t}}_{2}+C|\beta|\,\|u_{t}\|_{H^1}\left(\norm{\nabla u_{t}}_{4}+\snorm{(\beta\bAR[|u_{t}|^{2}])u_{t}}_{4}+C|\beta|\snorm{u_{t}}^2_{H^1}\right)\nonumber\\
		&\leq\snorm{(\im\partial_{t}-\mathcal{A}_{R}^{0})u_{t}}_{2}+C|\beta|\,\|u_{t}\|_{H^1}\left(\norm{\Delta u_{t}}_{2}+|\beta|\snorm{(\bAR[|u_{t}|^{2}])}_{8}\snorm{u_{t}}_{8}+|\beta|\snorm{u_{t}}^2_{H^1}+C\snorm{u_{t}}_{H^{1}(\R^2)}\right)\nonumber\\
		&\leq\snorm{(\im\partial_{t}-\mathcal{A}_{R}^{0})u_{t}}_{2}+C|\beta|\,\|u_{t}\|_{H^1}\left(\norm{\Delta u_{t}}_{2}+C|\beta|\snorm{u_{t}}^2_{H^1}+C\snorm{u_{t}}_{H^1}\right).\nonumber\\
		\intertext{Then we use the estimate \eqref{ine:magtopartu1} to have}
		&\leq\snorm{\partial_{t}u_{t}}_{2}+C|\beta|\,\|u_{t}\|_{H^1}\norm{\Delta u_{t}}_{2}+\snorm{\mathcal{A}_{R}^{0}u_{t}}_{2}+C|\beta|^2\norm{u_{t}}_{H^{1}(\R^{2})}^{3}+C|\beta|\norm{u_{t}}^2_{H^{1}(\R^{2})}\nonumber\\
		&\leq\snorm{\partial_{t}u_{t}}_{2}+C|\beta|\,\|u_{t}\|_{H^1}\norm{\Delta u_{t}}_{2}+C|\beta|^2\snorm{u_{t}}_{H^{1}(\R^{2})}^{3}+C|\beta|\snorm{u_{t}}_{H^{1}(\R^{2})}^{2}\label{ine:H2ofu1}
	\end{align}
	We observe in \eqref{ine:H2ofu1} that if $C|\beta|\,\|u_{t}\|_{H^1}<1$, then
	\begin{equation}	
		\norm{-\Delta u_{t}}_{2}\leq C \left(\norm{\partial_{t}u_{t}}_{2}+\norm{u_t}_{H^1}\right).
	\end{equation}
	This explains the Assumption~\ref{assumption} that $|\beta|<c$ for some $c>0$.\footnote{It is possible to avoid this constraint at the price of divergences in $R$; see Theorem~\ref{thm:u-GWP}.}
	The next step is to compute
	\begin{equation}
		\partial_t \norm{\partial_t u_t}_2^2=\partial_t \norm{\mathcal{A} u_t}_2^2=2\im\Im\braket{(\im\partial_t \mathcal{A})u_t}{\mathcal{A}u_t}+2\im\Im\braket{\mathcal{A}^2u_t}{\mathcal{A}u_t}\nonumber
	\end{equation}	
	in order to bound $ \norm{\partial_t u_t}_2$. The second term in the above is the imaginary part of a real number and cancels. This cancellation is the main reason why we apply a Gr\"onwall argument to $\norm{\partial_t u_t}_2$ and not directly on the magnetic term $ \norm{(\nabla  -\im \mathcal{A}_R )^2u_t}_2$. 
	We are then left with
	\begin{align}
		2\im\Im\braket{(\im\partial_t \mathcal{A})u_t}{\mathcal{A}u_t}&=2\im \Im \braket{\im\partial_t (-\im\nabla+\beta\bAR[|u_t|^2])^2u_t+(\im\partial_t\mathcal{A}_R^0)u_t}{\mathcal{A}u_t}
		:=E^{(1)}+E^{(2)}.\nonumber
	\end{align}
	To estimate $E^{(1)}$, we first calculate
	\begin{equation}
		\im\partial_t \bAR[|u_t|^2]=2\nabla^{\perp}w_R\ast\Re[\overline{u}_t\im\partial_t u_t]\nonumber
	\end{equation}	
	and get
	\begin{equation}
		|E^{(1)}|=8\left|\Im \braket{\beta\nabla^{\perp}w_R\ast\Re[\overline{u}_t\im\partial_t u_t]\cdot(-\im\nabla+\beta\bAR[|u_t|^2])u_t}{\partial_tu_t}\right|
		\nonumber
	\end{equation}
	with
	\begin{align}
		|E^{(1)}|&\leq C\snorm{\nabla^{\perp}w_R\ast\Re[\overline{u}_t\im\partial_t u_t]}_{4}\snorm{(-\im\nabla+\beta\bAR[|u_t|^2])u_t}_{4}\snorm{\partial_tu_t}_2\nonumber\\
		&\leq C\snorm{\partial_tu_t}_2\snorm{u_t}_{H^1}\snorm{\nabla^{\perp}w_R}_{2,w}\snorm{\overline{u}_t\im\partial_t u_t}_{\frac{4}{3}}\nonumber\\
		&\leq C\snorm{\partial_tu_t}_2\snorm{u_t}_{H^1} \norm{u_t}_{H^2} \snorm{\im\partial_t u_t}_{2}\snorm{u_t}_{4}\nonumber\\
		&\leq C\left(\norm{\partial_tu_t}^3_2 + \snorm{\partial_tu_t}^2_2\right)\snorm{u_t}^2_{H^1}.\label{ine:EAR11}
	\end{align}
	Here, we used the inequality \eqref{ine:magtopartu1} to get the last line but also Young inequality for the convolution, the conservation of energy \eqref{eq:conservation} and the estimate \eqref{ine:normH1u}. 
	We now treat
	\begin{align*}
		|E^{(2)}|
		&=2|\im \Im \braket{(\im\partial_t\mathcal{A}_R^0)u_t}{\mathcal{A}u_t}|\\
		&\leq C\norm{\partial_tu_t}_2\norm{u_t}_4\norm{\im\partial_t\mathcal{A}_R^0}_4\\
		&\leq C\norm{\partial_tu_t}_2\norm{u_t}_4\snorm{\nabla^{\perp}w_R}_{2,w}\norm{4\beta\bAR[\Re(\overline{u_t}\im\partial_t u_t)]|u_t|^2+4\beta\bAR[|u_t|^2]\Re(\overline{u_t}\im\partial_t u_t)}_{\frac43}\\
		&\qquad+ C|\snorm{\partial_tu_t}_2\snorm{u_t}_4\snorm{\nabla^{\perp}w_R\ast \mathbf{J}[\im\partial_t u_t]}_4,
	\end{align*}
	where we have two $\tfrac{4}{3}$-norms to treat and the last term. The first gives 
	\begin{align}
		\norm{4\beta\bAR[\Re(\overline{u_t}\im\partial_t u_t)]|u_t|^2}_{\frac43}&\leq C\norm{|u_t|^2}_2\norm{\nabla^{\perp}w_R\ast \Re(\overline{u_t}\im\partial_t u_t)}_4\nonumber\\
		&\leq  C\norm{|u_t|^2}_2\snorm{\nabla^{\perp}w_R}_{2,w}\norm{\overline{u_t}\im\partial_t u_t}_{\frac43}\nonumber\\
		&\leq C\norm{|u_t|^2}_2\snorm{\nabla^{\perp}w_R}_{2,w}\norm{\im\partial_t u_t}_{2}\norm{u_t}_4.
	\end{align}
	Here we used \eqref{eq:WY}.
	For the second term, we obtain
	\begin{align}
		\norm{4\beta\bAR[|u_t|^2]\Re(\overline{u_t}\im\partial_t u_t)}_{\frac43}&\leq C\norm{\partial_tu_t}_2\norm{\bAR[|u_t|^2]}_8\norm{u_t}_8
		\leq  C\norm{\partial_tu_t}_2\norm{u_t}^2_{H^1}.
	\end{align}
	The last term to treat is
	\begin{align}
		\norm{\nabla^{\perp}w_R\ast \mathbf{J}[\im\partial_t u_t]}_4&= \norm{\nabla^{\perp}w_R\ast \im[(\im\partial_tu_t)\nabla\overline{u}_t+u_t\nabla(\im\partial_t\overline{u}_t)-(\im\partial_t\overline{u}_t)\nabla u_t-\overline{u}_t\nabla (\im\partial_tu_t)]}_4\nonumber\\
		&\leq  4\norm{\nabla^{\perp}w_R\ast (\im\partial_tu_t)\nabla\overline{u}_t}_4\nonumber\\
		&\leq C\snorm{\nabla^{\perp}w_R}_{2,w}\norm{ (\im\partial_tu_t)\nabla\overline{u}_t}_{\frac43}\nonumber\\
		&\leq C\snorm{\nabla^{\perp}w_R}_{2,w}\norm{ \partial_tu_t}_{2}\norm{u_t}_4
	\end{align}
	where we first used that $\nabla^{\perp}\cdot\nabla =0$ to integrate by parts in the convolution product the two $u_t\nabla(\im\partial_t\overline{u}_t)$ terms. We then used \eqref{eq:WY}.
	We conclude that
	\begin{align}\label{ine:E21}
		|E^{(2)}|
		&\leq C\norm{\partial_tu_t}^2_2(\norm{u_t}^4_{H^1}+\norm{u_t}^3_{H^1}+\norm{u_t}^2_{H^1}).
	\end{align}
	Combining the estimates \eqref{ine:EAR11} and \eqref{ine:E21} for $E^{(1)}$ and $E^{(2)}$ yields
	\begin{align}\label{ine:dtu}
		\partial_t \snorm{\partial_t u_t}_2^2&\leq  C\left(\norm{\partial_t u_t}_2^2 + \snorm{\partial_t u_t}_2^3 \right).
	\end{align}
	We can apply Gr\"onwall's lemma~\ref{lem:Gron} to the function $f_t:=\snorm{\partial_t u_t}_2+1$ and obtain
	\begin{align}
		\snorm{\partial_t u_t}_2&\leq \frac{1}{C_1 - C_2t}
	\end{align}
	for some $C_2,C_1>0$ depending on $\beta$ and $\norm{u_0}_{H^1}$.
	We bound $\norm{\mathcal{A}u_0}_2$ using the same techniques as earlier in the proof. Substituting this last inequality into \eqref{ine:H2ofu1}, and choosing $T := \min(T_1, 0.99 \frac{C_1}{C_2})$, allows us to conclude the proof.
\end{proof}

\subsection{Convergence of Solutions}

\begin{lemma}[\textbf{Convergence of the effective equation}]\label{lem:Conv_pilot}
	Let $\varphi_0 \in H^2(\R^2)$ and define $\varphi^R_t$ to be the solution of $\mathrm{CSS}(R,\varphi_0)$ and $\varphi_t$ to be the solution of $\mathrm{CSS}(\varphi_0)$. Then, there exist $C,c,c'>0$ such that for any $0<R\leq c$, $|\beta|\leq c'$ and $t< T$ (where the $T$ given by Theorem~\ref{thm:wellpose}):
	\begin{equation}
		\partial_t\norm{\varphi_t -\varphi_t^R}_2^2\leq  CR^2+C\norm{\varphi_t-\varphi^R_t}_2^2
	\end{equation}
	which implies
	\begin{equation}
		\norm{\varphi_t -\varphi_t^R}_2^2\leq C R^2 e^{Ct}
	\end{equation}
	by Gr\"onwall's lemma~\ref{lem:Gron}.
\end{lemma}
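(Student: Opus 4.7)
The strategy is to compute $\partial_t \snorm{\varphi_t - \varphi_t^R}_2^2$ directly, bound it by $CR^2 + C\snorm{\varphi_t - \varphi_t^R}_2^2$, and then invoke Gr\"onwall's Lemma~\ref{lem:Gron}. Write both equations as $\im \partial_t u = \cG_R(u)u$, where $\cG_R(u)$ stands for the right-hand side operator of $\mathrm{CSS}(R,u_0)$ (self-adjoint when the density $|u|^2$ is frozen). I insert the telescopic decomposition
\[
\cG_0(\varphi_t)\varphi_t - \cG_R(\varphi_t^R)\varphi_t^R = \cG_0(\varphi_t)(\varphi_t - \varphi_t^R) + \bigl(\cG_0(\varphi_t) - \cG_R(\varphi_t^R)\bigr)\varphi_t^R.
\]
The first piece pairs with $\varphi_t - \varphi_t^R$ to produce a real number by self-adjointness, so its imaginary part vanishes. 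The problem thereby reduces to estimating the $L^2$-norm of $(\cG_0(\varphi_t) - \cG_R(\varphi_t^R))\varphi_t^R$.

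Expanding $(-\im\nabla + \beta\bAR[|u|^2])^2$ and using $\nabla\cdot\nabla^\perp w_R = 0$, the pure Laplacians cancel and three families of contributions remain: the mixed term $\beta(\bA[|\varphi_t|^2] - \bAR[|\varphi_t^R|^2])\cdot(-\im\nabla)\varphi_t^R$, the diamagnetic term $\beta^2(|\bA[|\varphi_t|^2]|^2 - |\bAR[|\varphi_t^R|^2]|^2)\varphi_t^R$, and the time-curvature term $\beta[\nabla^\perp w_0 \ast \mathbf{j}_0[\varphi_t] - \nabla^\perp w_R \ast \mathbf{j}_R[\varphi_t^R]]\varphi_t^R$. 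Inside each, I apply the telescopes $\bA[|\varphi_t|^2]f - \bAR[|\varphi_t^R|^2]g = (\bA[|\varphi_t|^2] - \bAR[|\varphi_t^R|^2])f + \bAR[|\varphi_t^R|^2](f-g)$ and $\nabla^\perp w_0 \ast f - \nabla^\perp w_R \ast g = (\nabla^\perp w_0 - \nabla^\perp w_R)\ast f + \nabla^\perp w_R \ast (f-g)$. Lemma~\ref{lem:estnew} (with $s = 0$) handles the difference of magnetic potentials, the direct estimate $\snorm{\nabla^\perp w_0 - \nabla^\perp w_R}_1 \leq CR$ (already used inside the proof of Lemma~\ref{lem:estnew}) handles the difference of kernels, and the remaining products are closed via H\"older, the weak-Young inequality~\eqref{eq:WY}, Sobolev~\eqref{eq:S}, and the uniform $H^2$-control $\snorm{\varphi_t}_{H^2}, \snorm{\varphi_t^R}_{H^2} \leq C$ from Theorem~\ref{thm:wellpose} (valid precisely for $|\beta|\leq c'$).

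The goal is to reach
\[
\snorm{(\cG_0(\varphi_t) - \cG_R(\varphi_t^R))\varphi_t^R}_2 \leq CR + C\snorm{\varphi_t - \varphi_t^R}_2,
\]
after which Cauchy--Schwarz and $2ab \leq a^2 + b^2$ yield $\partial_t \snorm{\varphi_t - \varphi_t^R}_2^2 \leq CR^2 + C\snorm{\varphi_t - \varphi_t^R}_2^2$. Since $\varphi_0^R = \varphi_0$, Lemma~\ref{lem:Gron} then delivers $\snorm{\varphi_t - \varphi_t^R}_2^2 \leq CR^2 e^{Ct}$ as stated.

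The main obstacle is the difference $J[\varphi_t] - J[\varphi_t^R]$ appearing inside the time-curvature term, which naively produces a factor $\snorm{\nabla(\varphi_t - \varphi_t^R)}_2$ that the Gr\"onwall strategy cannot afford in its closing form (any $|\log R|$ factor in front of $\snorm{\varphi_t - \varphi_t^R}_2$ would turn the final bound into $R^{2-Ct}$). I would route around this either by transferring one derivative from $u$ onto $w_R$ in the convolution using $\nabla \cdot \nabla^\perp = 0$ together with $\snorm{\nabla w_R}_{2,w} \leq C$ uniformly in $R$, or by integrating by parts on the outer pairing against $\varphi_t - \varphi_t^R$, so that gradients fall on $\varphi_t$ or $\varphi_t^R$ (controlled by their $H^2$-norms) rather than on the difference. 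The smallness of $|\beta|$ is essential here because several $\beta$-factors accumulate through the time-curvature term and must be absorbed into the multiplicative constant, which is exactly the restriction flagged in Remark~\ref{rem:beta}.
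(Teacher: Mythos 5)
Your proposal is correct and follows essentially the same route as the paper's proof: both compute $\partial_t\snorm{\varphi_t-\varphi_t^R}_2^2$, exploit self-adjointness (of the operator with frozen density) to kill the diagonal contribution, telescope the difference of nonlinearities, invoke Lemma~\ref{lem:estnew}, weak-Young, Sobolev, the uniform $H^2$-control from Theorem~\ref{thm:wellpose}, and -- for the $J[\varphi_t]-J[\varphi_t^R]$ piece, which you correctly flag as the obstruction -- integrate by parts using $\nabla\cdot\nabla^\perp w_R=0$ so the derivative falls on $\varphi_t$ or $\varphi_t^R$ rather than on the difference. The only cosmetic difference is that you aim for an $L^2$-operator bound $\snorm{(\cG_0(\varphi_t)-\cG_R(\varphi_t^R))\varphi_t^R}_2\leq CR+C\snorm{\varphi_t-\varphi_t^R}_2$ whereas the paper stays at the level of the pairing $\Im\langle\cdot,\cdot\rangle$ throughout; this matters precisely for the $J$-difference (where the bare operator bound fails and one \emph{must} exploit the pairing against $\varphi_t-\varphi_t^R$ to integrate by parts, exactly as the paper does in the $M_3$ estimate), but you already acknowledge this and propose the correct fix, so there is no genuine gap.
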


\begin{proof}
	Let us denote $\mathcal{E}_{R} = (-\im\nabla + \beta\bAR[|\varphi_t^R|^2])^2$. Then we have
	\begin{align}
		\partial_t \norm{\varphi_t - \varphi_t^R}_2^2 
		&= 2\Im \left[\bra{(\mathcal{E}_0 + \mathcal{A}^0_0)\varphi_t}\ket{\varphi_t^R} 
		- \bra{\varphi_t}\ket{(\mathcal{E}_R + \mathcal{A}^0_R)\varphi_t^R}\right]\nonumber \\
		&= 2\Im \left[\bra{(\mathcal{E}_0 + \mathcal{A}^0_0)\varphi_t}\ket{(\varphi_t^R - \varphi_t)} 
		- \bra{\varphi_t}\ket{(\mathcal{E}_R + \mathcal{A}^0_R)(\varphi_t^R - \varphi_t)}\right],\label{eq:1bd-difference}
	\end{align}
	where we used the fact that $\bra{(\mathcal{E}_R + \mathcal{A}^0_R)\varphi_t}\ket{\varphi_t}$ is real.
	We calculate separately the $\mathcal{E}$ part and the $\mathcal{A}^0$ part. 

	We begin with \textbf{the $\mathcal{E}$ part} and compute
	\begin{align}
		E:&=\bra{\mathcal{E}_0\varphi_t}\ket{(\varphi_t^R-\varphi_t)}-\bra{\varphi_t}\ket{\mathcal{E}_R(\varphi_t^R-\varphi_t)}\nonumber\\
		&=2\beta\bra{-\im\nabla \varphi_t}\ket{(\bA\left [\b\varphi_t\b^{2}\right ]-\bAR\left [\b\varphi^R_t\b^{2}\right ])(\varphi_t^R-\varphi_t)}+\beta^2\bra{(\bA\left [\b\varphi_t\b^{2}\right ]^2-\bAR\left [\b\varphi^R_t\b^{2}\right ]^2)\varphi_t^R}\ket{(\varphi_t^R-\varphi_t)}\nonumber\\
		&\leq C\norm{\nabla\varphi_t}_{4}\norm{\bA\left [\b\varphi_t\b^{2}\right ]-\bAR\left [\b\varphi^R_t\b^{2}\right ]}_4\norm{\varphi_t^R-\varphi_t}_{2}\nonumber\\
		& \quad+C \norm{\varphi^R_t}_{8}\norm{\bA\left [\b\varphi_t\b^{2}\right ]-\bAR\left [\b\varphi^R_t\b^{2}\right ]}_4\norm{\bA\left [\b\varphi_t\b^{2}\right ]+\bAR\left [\b\varphi^R_t\b^{2}\right ]}_8\norm{\varphi_t^R-\varphi_t}_2\nonumber\\
		&\leq C\norm{\nabla\varphi_t}_{4}\norm{\bA\left [\b\varphi_t\b^{2}\right ]-\bAR\left [\b\varphi^R_t\b^{2}\right ]}_4\norm{\varphi_t^R-\varphi_t}_{2}\nonumber\\
		& \quad+C \norm{\varphi^R_t}_{8}\norm{\bA\left [\b\varphi_t\b^{2}\right ]-\bAR\left [\b\varphi^R_t\b^{2}\right ]}_4\snorm{\nabla^{\perp}w_0}_{2,w}\norm{|u|^2}_{\frac{8}{5}}\norm{\varphi_t^R-\varphi_t}_2.
	\end{align}
	Here, we used that $\nabla \cdot\bAR\left [\b u\b^{2}\right ]=\nabla\cdot \nabla^{\perp}w_R \ast\b u\b^{2}= 0$, that $a^2 -b^2=(a+b)(a-b)$ and the triangle inequality together with \eqref{eq:WY}.
	We apply \eqref{eq:N4} and \eqref{eq:diff4} of Lemma~\ref{lem:estnew} combined with the Sobolev estimate \eqref{eq:S} on $\|\varphi^R_t\|_{8}$ and $\|\nabla\varphi_t\|_{4}$ to get
	\begin{align}\label{ine:E}
		|E|&\leq C(\|\varphi_t\|_{H^2}+1)(R^2 +\|\varphi_t^R-\varphi_t\|_2^2).
	\end{align}

	\medskip

	For \textbf{the $\mathcal{A}^0$ part,} we have to compute
	\begin{align}
		M:&=\bra{(\mathcal{A}^0_0-\mathcal{A}^0_R)\varphi_t}\ket{(\varphi_t^R-\varphi_t)}\nonumber\\
		&=\beta \bra{(\nabla^{\perp}w_0 -\nabla^{\perp}w_R)\ast(2\beta\bA\left [\b\varphi_t\b^{2}\right ] \b\varphi_t\b^{2}+\mathbf{J}\left [\varphi_t\right ])}\ket{\overline{\varphi_t}(\varphi_t^R-\varphi_t)}\nonumber\\
		&\quad -\beta \bra{\nabla^{\perp}w_R\ast 2\beta(\bAR\left [\b\varphi^R_t\b^{2}\right ] \b\varphi^R_t\b^{2}-\bA\left [\b\varphi_t\b^{2}\right ] \b\varphi_t\b^{2})}\ket{\overline{\varphi_t}(\varphi_t^R-\varphi_t)}\nonumber\\
		&\quad -\beta \bra{\nabla^{\perp}w_R\ast\left [\mathbf{J}\left [\varphi^R_t\right ]-\mathbf{J}\left [\varphi_t\right ]\right ]}\ket{\overline{\varphi_t}(\varphi_t^R-\varphi_t)}\nonumber\\
		&:= \beta (M_1 +M_2+M_3).\label{eq:defM}
	\end{align}
	We treat $M_1$, $M_2$ and $M_3$ separately.
	\begin{align}
		|M_1|&\leq C\norm{(\nabla^{\perp}w_0 -\nabla^{\perp}w_R)\ast  \mathbf{J}\left [\varphi_t\right ]}_{2}\norm{\overline{\varphi_t}(\varphi_t^R-\varphi_t)}_{2}\nonumber\\
		&\quad+ \norm{(\nabla^{\perp}w_0 -\nabla^{\perp}w_R)\ast 2\beta\bA\left [\b\varphi_t\b^{2}\right ] \b\varphi_t\b^{2}}_2 \norm{\overline{\varphi_t}(\varphi_t^R-\varphi_t)}_{2} \nonumber\\
		&\leq C\norm{\varphi_t}_{\infty}\norm{\varphi_t -\varphi^R_t}_{2}\norm{\nabla^{\perp}w_0 -\nabla^{\perp}w_R}_1\left(\norm{\nabla\varphi_t}_{4}\norm{\varphi^R_t}^2_{4} +\norm{\bA\left [\b\varphi_t\b^{2}\right ] \b\varphi_t\b^{2}}_2\right)\nonumber\\
		&\leq CR\norm{\varphi_t}_{H^2}^2\norm{\varphi_t -\varphi^R_t}_{2}\left(C+\norm{\bA\left [\b\varphi_t\b^{2}\right ] }_4\norm{\varphi_t}_8^2\norm{\varphi^R_t}_4\right)\nonumber\\
		&\leq CR^2+C\norm{\varphi_t -\varphi^R_t}_{2}^2\label{ine:M1}
	\end{align}
	where we used Young inequality for the convolution, Sobolev inequality \eqref{eq:S}, the embedding $\norm{\varphi_t}_{\infty}\leq \norm{\varphi_t}_{H^2}$  and \eqref{eq:N4} for the last inequality. 

	We continue by bounding $M_2$
	\begin{align}
		|M_2|&\leq  C\norm{\varphi_t^R-\varphi_t}_2\norm{\varphi_t}_4\norm{\nabla^{\perp}w_R\ast 2\beta(\bAR\left [\b\varphi^R_t\b^{2}\right ] \b\varphi^R_t\b^{2}-\bA\left [\b\varphi_t\b^{2}\right ] \b\varphi_t\b^{2})}_4\nonumber\\
		&\leq C\norm{\varphi_t^R-\varphi_t}_2(\norm{\varphi_t^R}_8\norm{\bAR\left [\b\varphi^R_t\b^{2}\right ]}_{8}\norm{\varphi_t^R-\varphi_t}_2 +\norm{|\varphi_t|^2}_2\norm{\bAR\left [\b\varphi^R_t\b^{2}\right ] -\bA\left [\b\varphi_t\b^{2}\right ] }_{4})\nonumber\\
		&\leq CR\norm{\varphi_t^R-\varphi_t}_2 +C\norm{\varphi_t^R-\varphi_t}_2^2\nonumber\\
		&\leq CR^2+C\norm{\varphi_t^R-\varphi_t}_2^2\label{ine:M2}
	\end{align}
	where we used \eqref{eq:WY},  \eqref{eq:N4} and \eqref{eq:diff4} of Lemma~\ref{lem:estnew}. We finish with $M_3$ and to do so, we write
	\begin{equation*}
		\mathbf{J}\left [\varphi^R_t\right ]- \mathbf{J}\left [\varphi_t\right ]=(\varphi_t-\varphi^R_t)\nabla \overline{\varphi_t}+\varphi^R_t\nabla (\overline{\varphi_t}-\overline{\varphi^R_t})+(\overline{\varphi^R_t}-\overline{\varphi_t})\nabla \varphi^R_t +\overline{\varphi_t}\nabla (\varphi^R_t-\varphi_t)
	\end{equation*}
	to get
	\begin{align*}
		|M_3|&\leq C\left |\bra{\nabla^{\perp}w_R\ast\left [ \mathbf{J}\left [\varphi^R_t\right ]- \mathbf{J}\left [\varphi_t\right ]\right ]}\ket{\overline{\varphi_t}(\varphi_t^R-\varphi_t)}\right |\\
		&= C\left |\bra{ \mathbf{J}\left [\varphi^R_t\right ]- \mathbf{J}\left [\varphi_t\right ]}\ket{\nabla^{\perp}w_R\ast\overline{\varphi_t}(\varphi_t^R-\varphi_t)}\right |\\
		&\leq 2C\left |\bra{(\varphi_t-\varphi^R_t)\nabla \overline{\varphi_t}+(\overline{\varphi^R_t}-\overline{\varphi_t})\nabla \varphi^R_t }\ket{\nabla^{\perp}w_R\ast\overline{\varphi_t}(\varphi_t^R-\varphi_t)}\right |
	\end{align*}
	where we used that $\nabla\cdot \nabla^{\perp}w_0=0$ to integrate by part the two terms $\varphi^R_t\nabla (\overline{\varphi_t}-\overline{\varphi^R_t})+\overline{\varphi_t}\nabla (\varphi^R_t-\varphi_t) $.

	We can conclude with 
	\begin{align}
		|M_{3}|
		&\leq C \snorm{\varphi_{t}^{R} - \varphi_{t}}_{2} 
		\big(\snorm{\nabla\varphi_{t}}_{4} + \snorm{\nabla\varphi_{t}^{R}}_{4}\big) 
		\snorm{\nabla^{\perp}w_{R} \ast \overline{\varphi_{t}}(\varphi_{t}^{R} - \varphi_{t})}_{4} \nonumber \\
		&\leq C \snorm{\varphi_{t}^{R} - \varphi_{t}}_{2} 
		\snorm{\varphi_{t}^{R}}_{H^2} 
		\snorm{\nabla^{\perp}w_{R}}_{2,w} 
		\snorm{\overline{\varphi_{t}}(\varphi_{t}^{R} - \varphi_{t})}_{\frac{4}{3}} \nonumber \\
		&\leq C \snorm{\varphi_{t}^{R} - \varphi_{t}}_{2} 
		\snorm{\varphi_{t}^{R}}_{H^2} 
		\norm{\varphi_{t}}_{4} 
		\snorm{(\varphi_{t}^{R} - \varphi_{t})}_{2} \nonumber \\
		&\leq C \snorm{\varphi_{t}^{R}}_{H^2} 
		\snorm{\varphi_{t}^{R} - \varphi_{t}}_{2}^{2}, \label{ine:M3}
	\end{align}
	where we used Young inequality for the convolution and the estimate \eqref{ine:normH1u}. 

	Combining \eqref{ine:M1}, \eqref{ine:M2}, and \eqref{ine:M3} with \eqref{eq:defM}, we obtain that 
	\begin{equation}
		|M| \leq CR^2 + C \snorm{\varphi_{t}^{R}}_{H^2} \snorm{\varphi_t^R - \varphi_t}_2^2.\label{ine:M}
	\end{equation}

	Noting that for $t < T$, we have $\norm{\varphi_t}_{H^2} \leq C$ by Theorem~\ref{thm:wellpose}.
	Finally, gathering \eqref{eq:1bd-difference}, \eqref{ine:E}, and \eqref{ine:M}, we conclude the proof.
\end{proof}

\section{Number of Particles in the Condensate}\label{sec:pickl}

\subsection{Bound on the time derivative of $\mathcal N_+(t)$}
As mentioned in the proof strategy outlined in Section~\ref{sec:strategy}, our aim is to follow the method in \cite{KnoPic-10, Pickl2011} and apply Grönwall's lemma to the quantity
\begin{equation}
	\mathcal N_+(t)=1-\braket{\varphi_t}{\gamma_N^{(1)}\varphi_t}\nonumber
\end{equation}
where $\gamma_N^{(1)}(t)=\Tr_{2,\dots, N}|\Psi_N(t)\rangle\langle\Psi_N(t)|$ with $\Psi_N(t)$ being the solution of the Schr\"odinger equation \eqref{def:schro} with initial data $\Psi_N(0)=\varphi_0^{\otimes N}$ and where $\varphi_t$ is the solution of $\text{CSS}(R,\varphi_0)$ defined in \eqref{eq:pilotu}.
We can rewrite
\begin{equation}\label{eq:gammatop}
	\braket{\varphi_t}{\gamma_N^{(1)}\varphi_t} = \int_{\R^{2(N-1)}}\left|\int_{\R^2}\varphi_t(x_1)\Psi_N(x_1,\dots,x_N)\,\dd x_1\right|^2\dd x_2\dots\dd x_N=\braket{\Psi_N}{p_1(t)\Psi_N}
\end{equation}
where we define the projectors
\begin{equation}\label{def:pq}
	p_1(t):=|\varphi_t(x_1)\rangle\langle\varphi_t(x_1)|\quad\text{and}\quad q_1(t):=\one_{L^2(\R^2)}-p_1(t).
\end{equation}
In the following, the notations $p_j$ and $q_j$ will refer to the same time-dependent projectors, acting on the variable $j$. Using \eqref{eq:gammatop}, we express
\begin{equation}
	\mathcal N_+(t)=\braket{\Psi_N}{q_1(t)\Psi_N}\nonumber
\end{equation}
and using the symmetry of $\Psi_N$, we equivalently have $\mathcal N_+(t)=\braket{\Psi_N}{q_j(t)\Psi_N}$ for any $j\in [1,N]$.
Using these properties, we can compute
\begin{align}\label{eq:BigSP}
	\partial_t \mathcal N_+(t)=-\im \bra{\Psi_N (t)}\ket{\left[(\HNR-H_{N,R}^{\mathrm{H}}),\frac{1}{N}\sum_{j=1}^N q_j\right]\Psi_N (t)}
\end{align}
where $[A,B]$ denotes the commutator of operators $A$ and $B$ and where the Hartree Hamiltonian is
\begin{align}
	H_{N,R}^{\mathrm{H}}:=&\sum_{j=1}^N\left (-\im\nabla_j +\beta \bAR\left [|\varphi_t|^{2}\right ](x_j)\right )^{2}\nonumber\\
	&- \beta\left [\nabla^{\perp}w_{R}\ast\left (2\beta \bAR\left [| \varphi_t|^{2}\right ]|\varphi_t|^{2}+\im\left (\varphi_t\nabla\overline{\varphi_t}-\overline{\varphi_t}\nabla\varphi_t \right ) \right )\right ](x_j).\label{eq:hartreeH}
\end{align}
As we have explained in the strategy, our proof will not directly focus on $ \mathcal N_+(t)$ but on $ \sqrt{\mathcal N}^{(1)}_+(t)$ that we define in the following paragraph.

\subsection{Properties of the Projectors}
Following \cite[Section 3.3]{KnoPic-10}, we write
\begin{equation}\label{eq:ide}
	\one_{L^2(\mathbb{R}^{2N})} = \prod_{j=1}^N (p_j + q_j),\quad\text{to define}\quad P_k := \sum_{\substack{a \in \{0,1\}^N \\ \sum_j a_j = k}} \prod_{j=1}^N p_j^{1-a_j} q_j^{a_j}
\end{equation}
where $P_k$ is obtained by expanding the product and collecting all summands containing exactly $k$ factors of $q$ operators. 
We use the convention that $P_k:=0$ whenever $k\notin \{0,\dots,N\}$.
Note that $P_k$ is an orthogonal projector, that $P_kP_l=\delta_{kl}P_k$, that $\sum_{k=0}^{+\infty}P_k=\one_{L^2(\R^{2N})}$ and that $P_0$ only contains projectors onto $\varphi_t$. For any weight function $f:\{0,\dots,N\}\to \mathbb{C}$ we define the operator
\begin{equation}\label{def:TF}
	\hat{f}:=\sum_{k=0}^{+\infty}f(k)P_k.
\end{equation}
It follows that $\hat{f}$ commutes with any $q_j$ and $P_k$. That way we can define the operator $\hat{m}$ with the real parameter
$\xi\geq 0$ as
\begin{equation}\label{def:mxi}
	\hat{m}(\xi):=\sum_{k=1}^N\left(\frac{k}{N}\right)^{\xi}P_k
\end{equation}
associated to the weight $m^{\xi}(k):=\left(\frac{k}{N}\right)^{\xi}$. These operators will be used extensively in the proof. Let us first note that we have the relations
\begin{equation}\label{eq:m11}
	\hat{m}(1)=\frac{1}{N}\sum_{j=1}^N\sum_{k=1}^Nq_j P_k=\frac{1}{N}\sum_{j=1}^Nq_j=\mathcal N_+^{(1)}
\end{equation}
that allows us to recover the term in the right-hand-side of the commutator in the expression \eqref{eq:BigSP}. We then see that $\hat{m}(1)=\mathcal N_+$ and generalization to $\hat{m}(\xi)$ allows to define
\begin{equation}\label{def:N+}
\sqrt{\mathcal N_+}(t):=\bra{\Psi_N(t)}\ket{\hat{m}(1/2)\Psi_N(t)}.
\end{equation}

\begin{lemma}\label{lem:mxi}
	The operator $\hat{m}(\xi)$ has the following properties. For any $\xi_1, \xi_2 \geq 0$, we have
	\begin{align}
		\hat{m}(\xi_1)\hat{m}(\xi_2)=\hat{m}(\xi_1 +\xi_2)\label{eq:m1}
	\end{align}
	and in particular 
	\begin{equation}
		\hat{m}(\xi_1)\hat{m}(-\xi_1 )=\one_{L^2(\R^{2N})} -P_0 \,.\label{eq:m2}
	\end{equation}
	Moreover, for any $\Psi \in L_{\mathrm{sym}}^2(\R^{2N})$ we have that
	\begin{equation}\label{ine:symmhat}
		N(N-1)\dots(N-n+1)\braket{\Psi}{\hat{f}q_1q_2\dots q_n\Psi}\leq N^n\braket{\Psi}{\hat{f}\hat{m}(n)\Psi}.
	\end{equation}
	for any $n\in \{0,\dots, N\}$, where the operator $\hat{f}$ is defined in \eqref{def:TF}.
\end{lemma}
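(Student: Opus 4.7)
The two multiplicative identities are immediate from the orthogonality $P_kP_\ell = \delta_{k\ell}P_k$. I would simply compute
\[
\hat m(\xi_1)\hat m(\xi_2) = \sum_{k,\ell=1}^{N}\Bigl(\frac{k}{N}\Bigr)^{\xi_1}\Bigl(\frac{\ell}{N}\Bigr)^{\xi_2}P_kP_\ell = \sum_{k=1}^{N}\Bigl(\frac{k}{N}\Bigr)^{\xi_1+\xi_2}P_k = \hat m(\xi_1+\xi_2),
\]
which proves \eqref{eq:m1}; specializing $\xi_2=-\xi_1$ turns the weight into $1$ for every $k\geq 1$, yielding $\sum_{k=1}^{N}P_k = \one - P_0$, i.e., \eqref{eq:m2}.

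For the operator inequality \eqref{ine:symmhat}, my plan has three steps: a commutation check, a symmetrization, and a Vandermonde-style combinatorial identity. First, using $q_ip_i = p_iq_i = 0$, $q_i^2=q_i$, and that $q_i$ commutes with $p_j,q_j$ for $j\neq i$, direct inspection of the expansion \eqref{eq::Pk} shows $q_iP_k = P_kq_i$ for every $i,k$. Hence every $q_i$ commutes with $\hat f$ and with $\hat m(n)$, and all operators involved are simultaneously diagonalized by the family $\{P_k\}$. Second, the permutation symmetry of $\Psi\in L^2_{\mathrm{sym}}(\R^{2N})$ combined with the invariance of $\hat f$ under permutations of the $q_j$'s gives
\[
\binom{N}{n}\braket{\Psi}{\hat f\,q_1\cdots q_n\,\Psi} = \sum_{\substack{S\subseteq\{1,\dots,N\}\\|S|=n}}\braket{\Psi}{\hat f\prod_{j\in S}q_j\,\Psi}.
\]

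The combinatorial heart of the argument is the identity
\[
\sum_{|S|=n}\prod_{j\in S}q_j = \sum_{k=0}^{N}\binom{k}{n}P_k,
\]
which I would establish by expanding each $P_k$ via \eqref{eq::Pk} and noting that $\prod_{j\in S}q_j$ annihilates every summand whose $q$-support $A$ does not contain $S$; when $S\subseteq A$, which occurs for exactly $\binom{k}{n}$ subsets $S$ of a fixed $A$ with $|A|=k$, the summand is left unchanged, producing the coefficient $\binom{k}{n}$. The elementary bound $\binom{k}{n}\leq k^n/n!$ valid for $0\leq k\leq N$ then lifts this to the operator inequality $\sum_{|S|=n}\prod_{j\in S}q_j\leq (N^n/n!)\,\hat m(n)$. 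Since in the applications $\hat f$ is built from a nonnegative weight $f$, hence is a positive operator commuting with both sides, taking the expectation in $\Psi$ and multiplying by $n!$, together with the identity $n!\binom{N}{n}=N(N-1)\cdots(N-n+1)$, gives exactly \eqref{ine:symmhat}. The only genuinely non-trivial step is the combinatorial identity for $\sum_{|S|=n}\prod_{j\in S}q_j$; the rest is bookkeeping.
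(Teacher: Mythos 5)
Your proof is correct and reaches the same conclusion as the paper, but via a slightly different comparison mechanism. The paper writes the left side, after symmetrization, as a sum over \emph{ordered tuples of distinct indices} $\sum_{j_1\neq\cdots\neq j_n}\braket{\Psi}{\hat f q_{j_1}\cdots q_{j_n}\Psi}$, and then bounds this by the sum over \emph{all} ordered tuples $N^n\braket{\Psi}{\hat f\,\hat m(1)^n\Psi}$, using positivity of each added term and $\hat m(1)^n=\hat m(n)$. You instead symmetrize over unordered subsets, prove the exact spectral identity $\sum_{|S|=n}\prod_{j\in S}q_j=\sum_k\binom{k}{n}P_k$, and finish with the elementary bound $\binom{k}{n}\leq k^n/n!$. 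The two routes use the same ingredients (symmetry of $\Psi$, commutation of $q_j$ with $P_k$ and $\hat f$, positivity of $\hat f$) and are of comparable length; your intermediate identity is perhaps more transparent, whereas the paper's version is more compact. One small caveat applies to both: the inequality requires the weight $f$ to be nonnegative (which is the case in all applications of the lemma), even though Definition \eqref{def:TF} nominally allows complex weights; you flag this explicitly, the paper leaves it implicit.
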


\begin{proof}
	We can directly compute
	\begin{equation}
		\hat{m}(\xi_1)\hat{m}(\xi_2)=\sum_{k=1}^N\sum_{j=1}^N\left(\frac{k}{N}\right)^{\xi_1}\left(\frac{j}{N}\right)^{\xi_2}P_k P_j=\sum_{k=1}^N\left(\frac{k}{N}\right)^{\xi_1 +\xi_2}P_k\nonumber
	\end{equation}
	where we used that $P_k P_j =\delta_{kj}$. This proves \eqref{eq:m1} and  \eqref{eq:m2}. The third identity follows from the symmetry of $\Psi$ and \eqref{eq:m11}. We have
	\begin{align}
		N(N-1)\dots (N-n&+1)\braket{\Psi}{\hat{f}q_1q_2\dots q_n\Psi}\nonumber\\
		&=\sum_{\substack{j_1, j_2\dots j_n=1\\j_k\neq j_l}}^N\braket{\Psi}{\hat{f}q_{j_1}q_{j_2}\dots q_{j_n}\Psi}\nonumber\\
		&\leq N^n\braket{\Psi}{\hat{f}\hat{m}^n(1)\Psi}\nonumber
	\end{align}
	and we conclude using \eqref{eq:m1}.
\end{proof}
The last tool we need is the shift $\tau_n$ defined as
$(\tau_n f):=f(k+n)$ with the property of \cite[Lemma 3.10]{KnoPic-10} that we repeat here.
\begin{lemma}\label{lem:tau}
	Let $r\geq 1$, $A$ be a $r$-particles operator on $L^2(\R^{2r})$ and $\hat{f}$ an operator defined as in \eqref{def:TF}. Let $Q_i$, $i=1,2$, be two projectors of the form $Q_i=u_1\dots u_r$ where each $u$ stands for either $p$ or $q$. Then
	\begin{equation}
		Q_1 A\hat{f}Q_2=Q_1\widehat{\tau_n f}AQ_2,\nonumber
	\end{equation}
	where $n=n_1 -n_2$ and $n_i$ is the number of factors $q$ in $Q_i$.
\end{lemma}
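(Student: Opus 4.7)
The plan is to establish the identity by purely algebraic manipulation: no estimate is involved and the only ingredients are $p_i q_i = 0$, the idempotence $p_i^2 = p_i$, $q_i^2 = q_i$, and the commutation of single-particle projectors across distinct indices---in particular with $A$ whenever those indices lie outside $\{1,\ldots,r\}$. I will split each multi-index $a\in\{0,1\}^N$ appearing in the definition \eqref{eq::Pk} of $P_k$ into its head $a^{(1)}\in\{0,1\}^r$ and tail $a^{(2)}\in\{0,1\}^{N-r}$, absorb the head into $Q_2$ (resp.\ $Q_1$) using the orthogonality of $p_i$ and $q_i$, and then let $A$ slide past the resulting tail-projector.

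Concretely, I would introduce the tail projector
\begin{equation*}
\Pi_m := \sum_{\substack{a^{(2)}\in\{0,1\}^{N-r}\\ \sum_i a_i^{(2)} = m}} \prod_{i=r+1}^N p_i^{1-a_i^{(2)}} q_i^{a_i^{(2)}}.
\end{equation*}
Because $Q_2 = u_1\cdots u_r$ is a specific tensor product with $n_2$ factors equal to $q$, the identities $p_iq_i=0$, $p_i^2=p_i$, $q_i^2=q_i$ force that only the multi-index whose head $a^{(1)}$ exactly matches the pattern of $Q_2$ survives in $P_k Q_2$; this matching head has $\sum_i a_i^{(1)}=n_2$, so the tail must satisfy $\sum_i a_i^{(2)} = k - n_2$. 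Thus $P_k Q_2 = Q_2\,\Pi_{k-n_2}$, and summing against $f$ gives $\hat f\, Q_2 = \sum_k f(k)\, Q_2\, \Pi_{k-n_2}$. The same argument on the other side yields $Q_1 P_k = Q_1\,\Pi_{k-n_1}$.

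Since $A$ is an $r$-particle operator, it commutes with every $\Pi_m$. Consequently
\begin{equation*}
Q_1 A \hat f\, Q_2 = \sum_k f(k)\, Q_1 \Pi_{k-n_2}\, A\, Q_2 = \sum_k f(k)\, Q_1\, A\, Q_2\, \Pi_{k-n_2},
\end{equation*}
while on the other hand $Q_1 \widehat{\tau_n f}\, A\, Q_2 = \sum_k (\tau_n f)(k)\, Q_1\, \Pi_{k-n_1}\, A\, Q_2$. Reindexing the first sum by $k\mapsto k + (n_1-n_2)$ matches the two expressions term by term and identifies the shift $n = n_1 - n_2$ in the argument of $f$, which closes the identity.

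The argument has no analytic obstacle; the only delicate point will be tracking the direction of the shift correctly, which comes from the asymmetry between $n_1$ and $n_2$ in the two heads and reflects how $A$, when sandwiched between $Q_1$ on the left and $Q_2$ on the right, effectively rebalances the total $q$-count by the fixed amount $n_1-n_2$ between the two coordinate blocks.
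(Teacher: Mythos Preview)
Your approach---splitting each $P_k$ into a head on particles $1,\dots,r$ and a tail $\Pi_m$ on the remaining ones, collapsing the head against $Q_i$ via $p_jq_j=0$, and then commuting $\Pi_m$ past the $r$-body operator $A$---is exactly the standard argument. The paper does not give a proof here but simply quotes \cite[Lemma~3.10]{KnoPic-10}, where this same computation is carried out, so there is nothing further to compare.

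One correction to your last step. After reindexing both sides so that the tail projector carries the same label $j$, your two expressions read
\[
Q_1 A\hat f\,Q_2=\sum_j f(j+n_2)\,Q_1\Pi_j A Q_2,\qquad
Q_1\widehat{\tau_n f}\,A\,Q_2=\sum_j f(j+n_1+n)\,Q_1\Pi_j A Q_2,
\]
which forces $n=n_2-n_1$, not $n_1-n_2$. The sign in the lemma as stated is in fact a typo: the paper's own applications (for instance the passage to $\widehat{\tau_2 m}(\tfrac12)$ in \eqref{eq:Ev22}, where $n_1=0$ and $n_2=2$) are consistent with $n=n_2-n_1$. Your mechanism is right; just make sure the bookkeeping in the final line actually delivers the sign you claim.
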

In the following, we will make use of the two simple relations 
\begin{equation}\label{eq.diffm}
	\left(\hat{m}(1)-\widehat{\tau_{-n}m}(1)\right)q_j=\frac{n}{N}\sum_{k=1}^NP_k q_j=\frac{n}{N}q_j
\end{equation}
for any $n\in\mathbb{N}$ and $j\in\{1,\dots,N\}$ as well as
\begin{align}
	\left(\hat{m}(\tfrac12)-\widehat{\tau_{-n}m}(\tfrac12)\right)&=\sum_{k=1}^{n-1}\sqrt{\frac{k}{N}}P_k +\frac{n}{N}\sum_{k=n}^N\frac{\sqrt{N}}{\sqrt{k}+\sqrt{k-n}}P_k\nonumber\\
	& \leq \frac{n}{N}\hat{m}(-\tfrac12).\label{ine:diffn}
\end{align}

\subsection{Control of $\sqrt{\mathcal N_+}(t)$}

In order to prove Theorem~\ref{thm:K}, we split the work into three parts that we will treat in the Lemmas~\ref{lem:Vm1/2}, \ref{lem:Wm12} and  \ref{lem:X}. To this end we define
\begin{align}
	H^R_N-H_{N,R}^{\mathrm{H}}:=V+W+X
\end{align}
where $V$ and $W$ are respectively the two-body and three-body terms given by
\begin{align}
	V&:=\beta \sum_{j=1}^{N}\left(\frac{1}{N} \sum_{k\neq j}v(x_j-x_k)-v'(x_j)\right)\label{def:V}\\
	W&:=\beta^2\sum_{j=1}^{N}\left (\frac{1}{N^2} \sum_{\substack{k\neq j\\ l\neq k\neq j}}w(x_j-x_k, x_j-x_l)-w'(x_j)\right)\label{def:W}\\
	X&:=\frac{\beta^2}{N^2}\sum_{j=1}^{N}\sum_{k\neq j}\left| \nabla^{\perp}w_R(x_j -x_k)\right|^2\label{def:X}
\end{align}
and $X$ a two-body error term small in itself and which does not appear in the Hartree energy. Recall that $v$, $v'$, $w$ and $w'$ have been defined in Definition~\ref{def:vw}.
 Let us remark that $\hat{m}(\tfrac12)$ evolves as
\begin{equation}
\im\partial_t \hat{m}(\tfrac12)= [\hat{m}(\tfrac12), H_{N,R}^{\mathrm H}]\nonumber
\end{equation}
providing, for $\Psi_N(t)$ the solution to the Schr\"odinger equation \eqref{def:schro}, the identity
\begin{align}
	\im\partial_t\sqrt{\mathcal N_+}&= \bra{\Psi_N (t)}\ket{\left[(\HNR-H_{N,R}^{\mathrm{H}}),\hat{m}(\tfrac12)\right]\Psi_N (t)}\nonumber\\
	&= \bra{\Psi_N (t)}\ket{\left[(V+W+X),\hat{m}(\tfrac12)\right]\Psi_N (t)}\label{eq:BigSPm172}
\end{align}
with the Hartree Hamiltonian defined in \eqref{eq:hartreeH}. We will treat the terms $V$, $W$ and $X$ one by one in separate lemmas in order to prove the following theorem.

\begin{theorem}[\textbf{Control of the root of excited particles}.]\mbox{}\label{thm:K}
	Let $\Psi_N(t)$ denote the solution of the Schrödinger equation \eqref{def:schro} with the initial data $\Psi_N(0) = \varphi_0^{\otimes N}$, where $\varphi_0 \in H^2(\mathbb{R}^2)$.
	Then there exist four constants $C,C',c,c'>0$ such that for any $|\beta| \leq c'$, $0<R\leq c$ and any $0 \leq t\leq T$ we have
	\begin{equation}
		\partial_t\sqrt{\mathcal N_+}(t)\leq C|\log R|R^{-2}\sqrt{\mathcal N_+}(t)+CN^{-1/2}R^{-2}\nonumber
	\end{equation}
	and one deduces that
	\begin{equation}
		\sqrt{\mathcal N_+}(t)\leq C'N^{-1/2}R^{-2}e^{CT\frac{|\log R|}{R^2}}.\nonumber
	\end{equation}
\end{theorem}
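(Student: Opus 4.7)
The plan is to mirror the proof of Theorem~\ref{thm:Edot} with the weight $\hat{m}(1)$ replaced by $\hat{m}(\tfrac12)$, and then to close the Grönwall inequality using Lemma~\ref{lem:kinetic}. Since $\partial_t p_j(t) = -\im [h^H_j, p_j(t)]$ for the Hartree one-body operator $h^H_j$ built from $\varphi_t$, each $P_k$, and therefore $\hat{m}(\tfrac12)$, satisfies $\partial_t \hat{m}(\tfrac12) = -\im [H^H_{N,R}, \hat{m}(\tfrac12)]$. Combining with \eqref{def:schro} yields
\begin{equation*}
\partial_t M_N(t) = -\im\braket{\Psi_N(t)}{[\HNR - H^{\mathrm H}_{N,R},\hat{m}(\tfrac12)]\Psi_N(t)} =: \dot M_V + \dot M_W + \dot M_X,
\end{equation*}
with $V$, $W$, $X$ as in \eqref{def:V}--\eqref{def:X}. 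The term $\dot M_X$ is settled immediately by Lemma~\ref{lem:X} at $\xi = \tfrac12$: $|\dot M_X| \leq C N^{-1/2} R^{-2}$, which is precisely the inhomogeneous term appearing in the statement.

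For $\dot M_V$ and $\dot M_W$ I would replay the case splits of Lemmas~\ref{lem:V} and \ref{lem:W}, with the single substantive modification that the exact identity \eqref{eq.diffm} must be replaced by the operator inequality \eqref{ine:diffn},
\begin{equation*}
(\hat{m}(\tfrac12) - \widehat{\tau_{-n}m}(\tfrac12))q_j \leq \frac{n}{N}\hat{m}(-\tfrac12)q_j.
\end{equation*}
Each commutator contribution thus carries an additional factor $\hat{m}(-\tfrac12)q_j$ on one side of the inner product; by Cauchy--Schwarz and the identity $\hat{m}(-\tfrac12)\hat{m}(\tfrac12) = \one - P_0$ from \eqref{eq:m2}, this pairs with the opposite side to produce $\sqrt{M_N + N^{-1}}$ in place of $\sqrt{E_N^{(1)} + N^{-1}}$, exactly as in the Cauchy--Schwarz splittings leading to \eqref{ine:E2b}, \eqref{ine:E3}, \eqref{ine:EW4}, \eqref{ine:EW3}, and \eqref{ine:EW5}. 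The mean-field cancellations producing $v'$ and $w'$ are unchanged, and all intermediate estimates, namely Lemma~\ref{lem:mxi} (in particular \eqref{ine:symmhat}), Lemma~\ref{lem:tau}, the a priori bounds \eqref{ine:v'}, \eqref{ine:vphiphi}, \eqref{ine:w'}, \eqref{ine:wphiphiphi}, together with Lemmas~\ref{lem:vpq} and \ref{lem:sqrtw}, transfer verbatim. The net output is
\begin{equation*}
|\dot M_V| + |\dot M_W| \leq C|\log R|\bigl(M_N(t) + \snorm{\nabla_1 q_1 \Psi_N(t)}^2 + N^{-1/2}\bigr).
\end{equation*}

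Now I would invoke Lemma~\ref{lem:kinetic} together with \eqref{ine:ener} to trade the kinetic piece for $M_N$:
\begin{equation*}
\snorm{\nabla_1 q_1 \Psi_N}^2 \leq C|\log R|^2 N^{-1} + C R^{-2} M_N(t) + C|\log R|N^{-1/2}.
\end{equation*}
Substituting this into the previous display and using the crude bound $|\log R|^2 \leq C R^{-2}$ valid for small $R$, one finds
\begin{equation*}
\partial_t M_N(t) \leq C|\log R| R^{-2} M_N(t) + C N^{-1/2} R^{-2},
\end{equation*}
as required. Since $\Psi_N(0) = \varphi_0^{\otimes N}$ lies in the range of $P_0$, we have $M_N(0) = 0$, and Lemma~\ref{lem:Gron} then gives
\begin{equation*}
M_N(t) \leq \frac{C N^{-1/2}}{|\log R|}\bigl(e^{C' t |\log R|/R^2} - 1\bigr) \leq C' N^{-1/2} R^{-2} e^{C' T|\log R|/R^2}
\end{equation*}
after the trivial estimate $1/|\log R| \leq R^{-2}$ for small $R$, which is the stated conclusion.

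The genuine technical obstacle is not the algebra of the shifted weights, which the framework of Lemmas~\ref{lem:mxi} and \ref{lem:tau} handles mechanically, but the origin of the $R^{-2}$ coefficient multiplying $M_N$ in the Grönwall estimate. This factor is inherited through Lemma~\ref{lem:kinetic} from the positivity failure discussed in Remarks~\ref{rem:posproof} and \ref{rem:posproof2}: one is forced to dominate the mixed two-body term $\beta v(x_1,x_2)$ on $\Psi_N'$ by $\tfrac12(-\Delta_{x_1}) + C R^{-2}\one$ rather than by the ideal $|\log R|(-\Delta_{x_1})$. It is precisely this $R^{-2}$ that propagates into the exponent above and, in turn, forces the choice $R = (\log N)^{-1/2+\varepsilon}$ in Theorem~\ref{thm:main}.
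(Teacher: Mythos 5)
Your proposal follows essentially the same route as the paper: decompose $\partial_t M_N$ into the $V$, $W$, $X$ commutator contributions, invoke Lemma~\ref{lem:X} at $\xi=\tfrac12$, run the analogues of Lemmas~\ref{lem:V} and~\ref{lem:W} with $\hat m(1)$ replaced by $\hat m(\tfrac12)$ (the paper's Lemmas~\ref{lem:Vm1/2} and~\ref{lem:Wm12}), then close the loop with Lemma~\ref{lem:kinetic} and Gr\"onwall, absorbing all $|\log R|$ powers into $R^{-2}$.

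One small inaccuracy worth flagging: you assert that the $V$ and $W$ estimates transfer \emph{verbatim} once \eqref{eq.diffm} is replaced by \eqref{ine:diffn}, and you record the outcome as $|\dot M_V|+|\dot M_W|\le C|\log R|(M_N+\snorm{\nabla_1 q_1\Psi_N}^2+N^{-1/2})$. In the paper's Lemma~\ref{lem:Wm12} the $M_N$ coefficient is actually $|\log R|^4$, not $|\log R|$. The reason is structural rather than cosmetic: in Lemma~\ref{lem:W} the cross terms are handled by writing $\tilde W=\sqrt{\tilde W}\sqrt{\tilde W}$ and putting one square root on each side via Lemma~\ref{lem:sqrtw}, which costs no $|\log R|$; but in the $\hat m(\tfrac12)$ setting the shifted weight $\hat m(-\tfrac12)$ sits between the operator and the state after applying \eqref{ine:diffn}, so one cannot split $\sqrt{\tilde W}$ across the inner product the same way. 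Instead the paper bounds $\snorm{\tilde W q_3q_2p_1\Psi_N}$ directly via Lemma~\ref{lem:sing2bd}, which brings in a $|\log R|^2$ factor (and another from the squaring in Cauchy--Schwarz). Since $|\log R|^4\le CR^{-2}$ for small $R$, your final Gr\"onwall coefficient $C|\log R|R^{-2}$ and the stated bound are unaffected, so this does not damage the proof — but the "transfers verbatim" claim would not survive a careful write-up.
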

\begin{proof}
We start from Eq. \eqref{eq:BigSPm172}.	We apply Lemma~\ref{lem:Vm1/2} and  \ref{lem:Wm12} and Lemma~\ref{lem:X} with $\xi =1/2$ to treat each term of \eqref{eq:BigSPm172}. We obtain
	\begin{equation}
		\partial_t\sqrt{\mathcal N_+}\leq C|\log R|^4\sqrt{\mathcal N_+}+C|\log R|\norm{\nabla_1 q_1\Psi_N (t)}^2+CN^{-1/2}R^{-2}\nonumber
	\end{equation}
	and conclude using Lemma~\ref{lem:kinetic} to control $\norm{\nabla_1 q_1\Psi_N (t)}^2$ and close the Gr\"onwall argument.
\end{proof}

\subsection{The Two-Body Term $V$}
This section is dedicated to the computation of the $V$ term of \eqref{eq:BigSPm172} defined in \eqref{def:V}. 

\begin{lemma}\label{lem:Vm1/2}Let $\Psi_N(t)$ denote the solution of the Schrödinger equation \eqref{def:schro} with the initial data $\Psi_N(0) = \varphi_0^{\otimes N}$, where $\varphi_0 \in H^2(\mathbb{R}^2)$. Let $V$ be as defined in \eqref{def:V}, and let $\hat{m}(\tfrac12)$ be as defined in \eqref{def:mxi}. Then there exist three constants $C,c,c'>0$ such that for any $|\beta| \leq c'$, $0 \leq R\leq c$ and any $0 \leq t\leq T$ we have
	\begin{equation}
		\left|\bra{\Psi_N (t)}\ket{\left[V, \hat{m}(\tfrac12)\right]\Psi_N (t)}\right|\leq C|\log R|(\sqrt{\mathcal N_+}+\norm{\nabla_1 q_1\Psi_N (t)}^2+\frac{1}{\sqrt{N}})\nonumber
	\end{equation}
	where $q_1$ is defined in \eqref{def:pq} with $\varphi_t$ a solution of $\mathrm{CSS}(R,\varphi_0)$.
\end{lemma}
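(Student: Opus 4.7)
The plan is to mirror the proof of Lemma~\ref{lem:V} almost verbatim, replacing the weight $m(1)$ by $m(\tfrac12)$. I would begin by expanding $-\im\langle\Psi_N,[V,\hat{m}(\tfrac12)]\Psi_N\rangle$, reducing it to a two-body expression via bosonic symmetry, and inserting $(p_1+q_1)(p_2+q_2)=\one$ on both sides of the inner product. Lemma~\ref{lem:tau} then eliminates the contributions with matching numbers of $q$-factors on the two sides, leaving three terms $\dot{E}^{(1)}_V$, $\dot{E}^{(2)}_V$, $\dot{E}^{(3)}_V$ (and their Hermitian conjugates) attached to the splittings $p_1p_2\leftrightarrow q_1p_2$, $p_1p_2\leftrightarrow q_1q_2$ and $q_1p_2\leftrightarrow q_1q_2$, exactly as in the proof of Lemma~\ref{lem:V}.

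The only novelty is the treatment of the operator $\hat{m}(\tfrac12)-\widehat{\tau_{-n}m}(\tfrac12)$ produced by Lemma~\ref{lem:tau} after the commutator. For the weight $m(1)$ the exact identity \eqref{eq.diffm} reduces this to $\tfrac{n}{N}\one$ on the range of $q_j$; here one has only the positive operator inequality $0\le\hat{m}(\tfrac12)-\widehat{\tau_{-n}m}(\tfrac12)\le\tfrac{n}{N}\hat{m}(-\tfrac12)$ from \eqref{ine:diffn}. The extra factor $\hat{m}(-\tfrac12)$ is singular on the $P_0$-sector but harmless when composed with at least one $q_j$: by Lemma~\ref{lem:mxi} and the symmetrization \eqref{ine:symmhat} one gets $\|\hat{m}(-\tfrac12)q_1\Psi_N\|^2\le\langle\Psi_N,\hat{m}(-1)\hat{m}(1)\Psi_N\rangle\le 1$ and $\|\hat{m}(-\tfrac12)q_1q_2\Psi_N\|^2\le\tfrac{N}{N-1}\langle\Psi_N,\hat{m}(1)\Psi_N\rangle\le\tfrac{N}{N-1}M_N$.

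With these bounds in hand, each of the three terms is handled as in Lemma~\ref{lem:V}. For $\dot{E}^{(1)}_V$, after the same mean-field cancellation leaving only $v'(x_1)$, squaring the operator inequality yields $(\hat{m}(\tfrac12)-\widehat{\tau_{-1}m}(\tfrac12))^2\le\tfrac{1}{N^2}\hat{m}(-1)$, which together with $q_1p_2\le q_1$ gives $\|(\hat{m}(\tfrac12)-\widehat{\tau_{-1}m}(\tfrac12))q_1p_2\Psi_N\|\le 1/N$; combined with \eqref{ine:v'} this produces $|\dot{E}^{(1)}_V|\le C/N$. For $\dot{E}^{(2)}_V$ I reproduce the strategy of Lemma~\ref{lem:V}: insert $\hat{m}(\tfrac12)\hat{m}(-\tfrac12)=\one-P_0$ (which acts as identity on the range of $q_1q_2$) between $v$ and $q_1q_2$, push $\hat{m}(\tfrac12)$ through $v$ by Lemma~\ref{lem:tau} producing $\widehat{\tau_2 m}(\tfrac12)$, and apply Cauchy--Schwarz; using $\|\widehat{\tau_2 m}(\tfrac12)\Psi_N\|^2\le M_N+2/N$ together with \eqref{ine:vphiphi} one obtains $|\dot{E}^{(2)}_V|\le C|\log R|(M_N+1/\sqrt{N})$. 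For $\dot{E}^{(3)}_V$, Lemma~\ref{lem:vpq} controls the off-diagonal $v$ contribution, introducing the $\|\nabla_1 q_1\Psi_N\|$ term, while the $v'$ piece uses \eqref{ine:v'}; combining with \eqref{ine:diffn} and the second estimate above yields $|\dot{E}^{(3)}_V|\le C|\log R|(M_N+\|\nabla_1 q_1\Psi_N\|^2)$.

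The main obstacle is to verify at each step where \eqref{ine:diffn} is invoked that the $\hat{m}(-\tfrac12)$ factor is always composed with a $q$-projector, so that the symmetrization \eqref{ine:symmhat} absorbs it into $M_N$ rather than producing a $P_0$-divergence; this is automatic in each of the three contributions because every relevant bra and ket already contains at least one $q_j$. Summing the three estimates together with their Hermitian conjugates then gives the claimed bound.
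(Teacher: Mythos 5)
Your overall plan — mirror Lemma~\ref{lem:V}, replacing $\hat{m}(1)$ by $\hat{m}(\tfrac12)$ and using \eqref{ine:diffn} to handle the non-scalar difference $\hat{m}(\tfrac12)-\widehat{\tau_{-n}m}(\tfrac12)\le\tfrac{n}{N}\hat{m}(-\tfrac12)$ — is the right one, and your treatments of $\dot{E}^{(1)}_V$ and $\dot{E}^{(3)}_V$ coincide with the paper's. However, your treatment of $\dot{E}^{(2)}_V$ has a genuine gap, and the claimed bound $|\dot{E}^{(2)}_V|\le C|\log R|(M_N+1/\sqrt{N})$ does not follow from the computation you describe.

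The issue is the power in the inserted smoothing operator. You insert $\hat{m}(\tfrac12)\hat{m}(-\tfrac12)$, exactly as in Lemma~\ref{lem:V}. But in Lemma~\ref{lem:V} the commutator difference $\hat{m}(1)-\widehat{\tau_{-2}m}(1)=\tfrac{2}{N}\one$ is scalar, so after inserting $\hat{m}(\tfrac12)\hat{m}(-\tfrac12)$ both Cauchy--Schwarz factors carry a half power of $\hat{m}$: the left gives $\|\widehat{\tau_2 m}(\tfrac12)\Psi\|^2\simeq E_N^{(1)}$, and the right gives $\|\hat{m}(-\tfrac12)q_1q_2\Psi\|^2\lesssim E_N^{(1)}$, producing $E_N^{(1)}$ in total. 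Here the difference is already operator-valued, $\hat{m}(\tfrac12)-\widehat{\tau_{-2}m}(\tfrac12)\le\tfrac{2}{N}\hat{m}(-\tfrac12)$, so your right-hand Cauchy--Schwarz factor is $\|\hat{m}(-\tfrac12)\cdot\hat{m}(-\tfrac12)q_1q_2\Psi\|=\|\hat{m}(-1)q_1q_2\Psi\|$, and by \eqref{ine:symmhat} with $\hat{m}(2)$ this squares to $\langle q_1q_2\Psi,\hat{m}(-2)q_1q_2\Psi\rangle\le\tfrac{N}{N-1}\langle\Psi,(\one-P_0)\Psi\rangle\le\tfrac{N}{N-1}$, i.e.\ it is $O(1)$, \emph{not} $O(\sqrt{M_N})$. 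Combined with the left factor $\|\widehat{\tau_2 m}(\tfrac14)p_1p_2\Psi\|$ — wait, with your choice it is $\|\widehat{\tau_2 m}(\tfrac12)p_1p_2\Psi\|\lesssim\sqrt{M_N+C/\sqrt N}$ — you obtain $|\dot{E}^{(2)}_V|\le C|\log R|\sqrt{M_N+1/\sqrt N}$, not $C|\log R|(M_N+1/\sqrt N)$. Since $\sqrt{M_N}\gg M_N$ for small $M_N$, this is of the wrong order and would break the linear Gr\"onwall closure for $M_N$ in Theorem~\ref{thm:K}.

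The fix, which is what the paper does, is to insert $\hat{m}(\tfrac14)\hat{m}(-\tfrac14)$ rather than $\hat{m}(\tfrac12)\hat{m}(-\tfrac12)$. Then the left factor is $\|\widehat{\tau_2 m}(\tfrac14)p_1p_2\Psi\|^2=\langle\Psi,\widehat{\tau_2 m}(\tfrac12)\Psi\rangle\le M_N+C/\sqrt N$, while the right factor becomes $\|\hat{m}(-\tfrac14)\hat{m}(-\tfrac12)q_1q_2\Psi\|^2=\langle q_1q_2\Psi,\hat{m}(-\tfrac32)q_1q_2\Psi\rangle$, and \eqref{ine:symmhat} now produces $\tfrac{N}{N-1}\langle\Psi,\hat{m}(-\tfrac32)\hat{m}(2)\Psi\rangle=\tfrac{N}{N-1}\langle\Psi,\hat{m}(\tfrac12)\Psi\rangle=\tfrac{N}{N-1}M_N$. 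Both factors now carry a $\sqrt{M_N}$ and the product is $M_N$, giving $|\dot{E}^{(2)}_V|\le C|\log R|(M_N+1/\sqrt N)$ as stated. The choice of exponent $\tfrac14$ is precisely what restores the balance that the extra $\hat{m}(-\tfrac12)$ from \eqref{ine:diffn} destroyed.
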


\begin{proof}
Before entering the proof, note that obtaining a control in terms of $\mathcal N_+(t)$ is enough to close the Gr\"onwall on $\sqrt{\mathcal N_+}(t)$ for the reason that $\mathcal N_+(t)\leq \sqrt{\mathcal N_+}(t)$, as we can see from the definition \eqref{def:mxi}.
	We first expand the term we want to handle
	\begin{align}
		\dot{M}_V :&= -\im\bra{\Psi_N (t)}\ket{\left[V, \hat{m}(\tfrac12)\right]\Psi_N (t)}\nonumber\\
		&=-\im\bra{\Psi_N (t)}\ket{\Big[ \frac{\beta}{N} \sum_{k=1}^{N}\sum_{m> k}(v(x_k-x_m)+v(x_m-x_k))-\sum_{j=1}^Nv'(x_j), \hat{m}(\tfrac12)\Big]\Psi_N (t)}\nonumber\\
		&=-\frac{\im\beta}{2}\bra{\Psi_N (t)}\ket{\Big[ (N-1)(v(x_1-x_2)+ v(x_2-x_1))-Nv'(x_1)-Nv'(x_2), \hat{m}(\tfrac12)\Big]\Psi_N (t)}\nonumber
	\end{align}
	where we used the bosonic symmetry of $\Psi_N(t)$ and the symmetry of $\hat{m}(\tfrac12)$. We now split the space into the condensate part and the excited part
	\[
	(p_1 +q_1)(p_2 +q_2) =\one_{L^2(\R^4)}
	\] 
	to get three terms and their conjugates
	\begin{align}
		\dot{M}_V
		=&-\im\beta\bra{p_1p_2\Psi_N (t)}\ket{\Big[ (N-1)(v(x_1-x_2)+ v(x_2-x_1))-Nv'(x_1)-Nv'(x_2), \hat{m}(\tfrac12)\Big]q_1p_2\Psi_N (t)}\nonumber\\
		&-\frac{\im\beta}{2}\bra{p_1p_2\Psi_N (t)}\ket{\Big[ (N-1)(v(x_1-x_2)+ v(x_2-x_1))-Nv'(x_1)-Nv'(x_2), \hat{m}(\tfrac12)\Big]q_1q_2\Psi_N (t)} \nonumber\\
		&-\im\beta\bra{q_1p_2\Psi_N (t)}\ket{\Big[ (N-1)(v(x_1-x_2)+ v(x_2-x_1))-Nv'(x_1)-Nv'(x_2), \hat{m}(\tfrac12)\Big]q_1q_2\Psi_N (t)}\nonumber\\
		&\quad +\hc\nonumber\\
		:=&\dot{M}^{(1)}_V+\dot{M}^{(2)}_V+\dot{M}_V^{(3)}+\hc
	\end{align}
	Note that we only get six terms because, by Lemma~\ref{lem:tau}, all term with the same number of factors $q$ on both sides of the bracket vanished.
	We then only have three terms to really treat
	and proceed one by one. We start with $\dot{M}^{(1)}_V$ which is the most important term of the three because its smallness comes from the mean-field cancellation, i.e., the fact that
	\begin{align}
		& p_2(v(x_1-x_2)+ v(x_2-x_1))p_2\nonumber\\
		&\quad\quad\quad\quad=p_2\bigg(\int_{\R^2}|\varphi_t(x_2)|^2(-\im \nabla_1\cdot\nabla^{\perp}w_R(x_1 -x_2)+\hc)\nonumber\\
		&\quad\quad\quad\quad\quad\quad\quad\quad\quad\quad\quad\quad+\overline{\varphi_t}(x_2)(-\im \nabla_2\cdot\nabla^{\perp}w_R(x_2 -x_1)+\hc)\varphi_t(x_2)\,\dd x_2 \bigg)p_2\nonumber\\
		&\quad\quad\quad\quad=p_2\bigg(\int_{\R^2}(-\im \nabla_1\cdot\nabla^{\perp}w_R\ast |\varphi_t|^2+\hc)\nonumber\\
		&\quad\quad\quad\quad\quad\quad\quad\quad\quad\quad\quad\quad+\nabla^{\perp}w_R(x_2 -x_1)\cdot(\overline{\varphi_t}(x_2)( -\im\nabla_2)\varphi_t(x_2)+\hc)\,\dd x_2 \bigg)p_2\nonumber\\
		&\quad\quad\quad\quad=p_2v'(x_1)p_2.
	\end{align}
	We can apply the above in
	\begin{align}
		\dot{M}^{(1)}_V
		=&-\im\beta  \bra{p_2p_1}\ket{\Big[ (N-1)(v(x_1-x_2)+ v(x_2-x_1))-Nv'(x_1), \hat{m}(\tfrac12)\Big]q_1p_2}_{\Psi_N(t)}\nonumber\\
		=&-\im\beta \bra{p_2p_1}\ket{\Big[ v'(x_1), \hat{m}(\tfrac12)\Big]q_1p_2}_{\Psi_N(t)}\nonumber\\
		=&-\im\beta \bra{p_2p_1}\ket{v'(x_1)( \hat{m}(\tfrac12)-\widehat{\tau_{-1}m}(\tfrac12))q_1p_2}_{\Psi_N(t)}\nonumber\\
		=&-\frac{\im\beta}{N} \bra{p_2p_1}\ket{v'(x_1)^2p_1p_2}_{\Psi_N(t)}^{1/2}\bra{p_2q_1}\ket{\hat{m}(-\tfrac12)\hat{m}(-\tfrac12)q_1p_2}_{\Psi_N(t)}^{1/2}
		\label{eq:av1K}
	\end{align}
	using that $p_1q_1v'(x_2)=0$ in the first line of above, the inequality \eqref{ine:diffn} to bound
	\[
	\hat{m}(\tfrac12)-\widehat{\tau_{-1}m}(\tfrac12)\leq \frac{1}{N} \hat{m}(-\tfrac12) 
	\] 
	and a Cauchy-Schwarz inequality.
	We obtain
	\begin{align}
		|\dot{M}^{(1)}_V|&\leq\frac{2|\beta|}{N}\norm{v'(x_1)p_2p_1\Psi_N (t)}\norm{\hat{m}(-1)q_1\Psi_N (t)}\nonumber\\
		&\leq\frac{2|\beta|}{N}\norm{v'(x_1)p_2p_1\Psi_N (t)}\norm{\hat{m}(-1)\hat{m}(1)\Psi_N (t)}\nonumber\\
		&\leq \frac{C}{N}\norm{v'\varphi_t}_2
		\leq CN^{-1}\label{eq:E1K}
	\end{align}
	using \eqref{ine:v'} to bound $\norm{v'\varphi_t}_2$.\\
	For the $\dot{M}^{(2)}_V$ term, we follow the same strategy to the difference that the commutation with a two-body operator makes a shift $\tau_{-2}$ appear, we indeed get
	\begin{align*}
		\dot{M}^{(2)}_V
		=&-\im\frac{\beta}{2}  \bra{p_2p_1}\ket{\left((N-1)(v(x_1-x_2)+ v(x_2-x_1))-Nv'(x_1)-Nv'(x_2)\right)( \hat{m}(\tfrac12)-\widehat{\tau_{-2}m}(\tfrac12))q_1q_2}_{\Psi_N(t)}\\
		=&-\im\frac{\beta(N-1)}{2}  \bra{p_2p_1}\ket{(v(x_1-x_2)+ v(x_2-x_1))\hat{m}(\tfrac14)\hat{m}(-\tfrac14)( \hat{m}(\tfrac12)-\widehat{\tau_{-2}m}(\tfrac12))q_1q_2}_{\Psi_N(t)}
	\end{align*}
	in which we used Lemma \ref{lem:tau}, the relation \eqref{eq.diffm} and the fact that $p_2 v'(x_1)q_2 =0=p_1 v'(x_2)q_1$. We have also used the property \eqref{eq:m2} $\hat{m}(\xi)\hat{m}(-\xi)=\one -P_0$ with $P_0q_j =0$.

	By Cauchy-Schwarz we get
	\begin{align*}
		|\dot{M}^{(2)}_V|^2
		&\leq CN^2 \norm{v(x_1-x_2)\widehat{\tau_2 m}(\tfrac14)p_2p_1\Psi_N (t)}^2\norm{\hat{m}(-\tfrac14)( \hat{m}(\tfrac12)-\widehat{\tau_{-2}m}(\tfrac12))q_2q_1\Psi_N (t)}^2\\
		&=C\bra{p_2p_1}\ket{\widehat{\tau_2 m}(\tfrac14)v^2(x_1-x_2)\widehat{\tau_2 m}(\tfrac14)p_2p_1}_{\Psi_N(t)}\bra{q_2q_1}\ket{\hat{m}(-\tfrac12)\hat{m}^2(-1/2)q_2q_1}_{\Psi_N(t)}\\
		&\leq C\norm{v(x_1-x_2)\varphi_t(x_1)\varphi_t(x_2)}_{L^2(\R^4)}^2\norm{\widehat{\tau_2 m}(\tfrac14)\Psi_N (t)}^2\braket{\Psi_N(t)}{\hat{m}(\tfrac12)\Psi_N(t)}(1+\frac{1}{N})
	\end{align*}
	where we apply formula \eqref{ine:symmhat} to bound $\bra{q_2q_1}\ket{\hat{m}(-\tfrac32)q_2q_1}_{\Psi_N(t)}$ and formula \eqref{ine:diffn} to bound the difference $ \hat{m}(\tfrac12)-\widehat{\tau_{-2}m}(\tfrac12)$. We now use that  
	\[
	\bra{\Psi_N(t)}\ket{\widehat{\tau_2 m}(\tfrac12)\Psi_N(t)}\leq\bra{\Psi_N(t)}\ket{\hat{m}(\tfrac12)\Psi_N(t)}+\frac{C}{\sqrt{N}}=\sqrt{\mathcal N_+}(t)+\frac{C}{\sqrt{N}}
	\]
	to get
	\begin{align*}
		|\dot{M}^{(2)}_V|^2
		&\leq C\norm{v(x_1-x_2)\varphi_t(x_1)\varphi_t(x_2)}_{L^2(\R^4)}^2(\sqrt{\mathcal N_+}^2(t)+\frac{1}{N})
		\leq C|\log R|^2(\sqrt{\mathcal N_+}^2(t)+\frac{1}{N})
	\end{align*}
	where we concluded with \eqref{ine:vphiphi}.\newline
	The last term $\dot{M}^{(3)}_V$ provides
	\begin{align}\label{eq:K32}
		\dot{M}^{(3)}_V
		=&-\im\beta  \bra{p_1q_2}\ket{\Big[ (N-1)(v(x_1-x_2)+ v(x_2-x_1))-Nv'(x_1), \hat{m}(\tfrac12)\Big]q_1q_2}_{\Psi_N(t)}\\
		=&-\im\beta  \bra{p_1q_2}\ket{\left( (N-1)(v(x_1-x_2)+ v(x_2-x_1))-Nv'(x_1)\right)(\hat{m}(\tfrac12)-\widehat{\tau_{-1}m}(\tfrac12))q_1q_2}_{\Psi_N(t)}\nonumber
	\end{align}
	by Lemma~\ref{lem:tau}. By the Cauchy-Schwarz inequality
	\begin{align*}
		|\dot{M}^{(3)}_V|
		\leq &CN\norm{(\hat{m}(\tfrac12)-\widehat{\tau_{-1}m}(\tfrac12))q_1q_2\Psi_N(t)}_2\\
		&\qquad\times\bigg(\norm{v(x_1-x_2)p_1q_2\Psi_N(t)}_2+\norm{ v(x_2-x_1)p_1q_2\Psi_N}_2+\norm{v'(x_1)p_1q_2\Psi_N(t)}_2\bigg)\\
		\leq &C|\log R|\norm{\hat{m}(-\tfrac12)q_1q_2\Psi_N(t)}_2(\norm{\nabla_1 q_1\Psi_N(t)}+\norm{v'(x_1)\varphi_t(x_1)}_2)\\
		\leq &C|\log R|(\mathcal N_+(t))^{1/2}(\norm{\nabla_1 q_1\Psi_N(t)}+\norm{v'(x_1)\varphi_t(x_1)}_2)\\
		\leq &C|\log R|(\mathcal N_+(t)+\norm{\nabla_1 q_1\Psi_N(t)}^2)
	\end{align*}
	where we use \eqref{ine:v'}, Lemma~\ref{ine:vpq} and $\mathcal N_+(t)\leq \sqrt{\mathcal N}_+(t)$ to conclude. \end{proof}

\subsection{The Three-Body Term $W$}
Here, we estimate the $W$ term of \eqref{eq:BigSPm172}.
\begin{lemma}\label{lem:Wm12}
	Let $\Psi_N(t)$ denote the solution of the Schrödinger equation \eqref{def:schro} with the initial data $\Psi_N(0) = \varphi_0^{\otimes N}$, where $\varphi \in H^2(\mathbb{R}^2)$. Let $W$ be as defined in \eqref{def:W}, and let $\hat{m}(\tfrac12)$ be as defined in \eqref{def:mxi}. Then there exist two constants $C,c>0$ such that for any $\beta$, $0<R\leq c$ and any $0 \leq t\leq T$ we have
	\begin{equation}
		\left|\bra{\Psi_N (t)}\ket{\left[W, \hat{m}(\tfrac12)\right]\Psi_N (t)}\right|\leq C|\log R|^4\sqrt{\mathcal N_+}(t)+C|\log R|\norm{\nabla_1q_1\Psi_N(t)}^2+\frac{C|\log R|}{\sqrt{N}}\nonumber
	\end{equation}
	for $q_1$ as in \eqref{def:pq} with $\varphi_t$ solving $\mathrm{CSS}(R,\varphi_0)$.
\end{lemma}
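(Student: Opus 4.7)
The plan is to follow the proof of Lemma~\ref{lem:W} almost verbatim, with two systematic modifications: replace $\hat{m}(1)$ by $\hat{m}(\tfrac12)$ throughout, and use the operator inequality $\hat{m}(\tfrac12)-\widehat{\tau_{-n}m}(\tfrac12)\le (n/N)\hat{m}(-\tfrac12)$ from \eqref{ine:diffn} in place of the exact identity \eqref{eq.diffm}. We expand
\[
\dot{M}_W := -\im\bra{\Psi_N(t)}\ket{[W,\hat{m}(\tfrac12)]\Psi_N(t)},
\]
insert $(p_1+q_1)(p_2+q_2)(p_3+q_3)=\one_{L^2(\R^6)}$ on both sides, and use Lemma~\ref{lem:tau} to discard all summands in which the numbers of $q$ factors on the two sides agree. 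As in \eqref{eq:sixterms}, this leaves six families $\dot{M}_W^{(i)}$, $i=1,\dots,6$, indexed by the configuration of $p$ and $q$ projectors and by whether the three-body operator $\tilde{W}$ or the mean-field $w'$ appears.

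For the principal family $\dot{M}_W^{(1)}$, the three-body operator collapses via the mean-field cancellation $p_3p_2\tilde{W}p_2p_3=p_3p_2w'(x_1)p_2p_3$, leaving only $w'(x_1)$. Applying \eqref{ine:diffn}, a Cauchy--Schwarz inequality, the bound \eqref{ine:w'} and the identity \eqref{eq:m2} on the range of $q_1$ yields a contribution of order $N^{-1}$, exactly as in \eqref{ine:EW1}. The other five families are handled by decomposing $\tilde{W}$ into its three symmetric pieces $w(x_i-x_j,x_i-x_k)$ and, according to the local configuration of projectors, invoking either the bulk bound \eqref{ine:wphiphiphi} on $\snorm{\tilde{W}\varphi_t^{\otimes 3}}_{L^2(\R^6)}$ or Lemma~\ref{lem:sqrtw}, which converts a $\sqrt{w}$ acting on $q_j$ into $\snorm{\nabla_j q_j\Psi_N}$. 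Each time a bare $\hat{m}(\tfrac12)$ meets a product of $q$'s we factor it as $\hat{m}(\tfrac14)\hat{m}(-\tfrac14)$, commute the left half through the three-body operator via Lemma~\ref{lem:tau}, and then apply \eqref{ine:symmhat} to recover $M_N(t)$ from expectation values $\bra{q_{i_1}\cdots q_{i_n}}\ket{\hat{m}(-n)q_{i_1}\cdots q_{i_n}}_{\Psi_N(t)}$, up to a relative error $N^{-1/2}$ coming from the shift of the weight.

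The main technical obstacle is bookkeeping the powers of $|\log R|$. In Lemma~\ref{lem:W} each three-body term contributed a single $|\log R|$ from \eqref{ine:wphiphiphi}, but here some of the families require two Cauchy--Schwarz splittings in order to extract the auxiliary factor $\hat{m}(-\tfrac12)$ from \eqref{ine:diffn}; this forces the three-body norm $\snorm{\tilde{W}\varphi_t^{\otimes 3}}_{L^2(\R^6)}$ to appear to a higher effective power and is what produces the $|\log R|^4$ prefactor in front of $M_N(t)$. The families in which $\sqrt{\tilde{W}}$ lands on a $q_j$ are the ones that generate the $C|\log R|\snorm{\nabla_1 q_1\Psi_N}^2$ term through Lemma~\ref{lem:sqrtw} combined with the bosonic symmetry of $\Psi_N(t)$. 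The residual $C|\log R|/\sqrt{N}$ originates from the boundary estimate $\snorm{\widehat{\tau_k m}(\tfrac12)\Psi_N}^2\le M_N(t)+CN^{-1/2}$ used to replace shifted weights by $M_N(t)$. Summing the six contributions and using $\snorm{\varphi_t}_{H^2}\le C$ from Theorem~\ref{thm:wellpose} produces the stated inequality.
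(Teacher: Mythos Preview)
Your overall scaffold is right: the same six-term decomposition as in Lemma~\ref{lem:W}, the mean-field cancellation for $\dot{M}_W^{(1)}$, and the systematic replacement of \eqref{eq.diffm} by \eqref{ine:diffn}. Terms $\dot{M}_W^{(1)},\dot{M}_W^{(3)},\dot{M}_W^{(4)}$ indeed go through exactly as you describe, with the $\hat m(\tfrac14)\hat m(-\tfrac14)$ trick and \eqref{ine:wphiphiphi}.

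The gap is in your treatment of $\dot{M}_W^{(2)},\dot{M}_W^{(5)},\dot{M}_W^{(6)}$, where both sides of the commutator carry $q$'s. You propose to split $\tilde W=\sqrt{\tilde W}\sqrt{\tilde W}$ and invoke Lemma~\ref{lem:sqrtw}, exactly as in Lemma~\ref{lem:W}. This worked there because \eqref{eq.diffm} produced the \emph{scalar} $n/N$. Here \eqref{ine:diffn} leaves behind the \emph{operator} $\hat m(-\tfrac12)$, and after the $\sqrt{\tilde W}$ Cauchy--Schwarz one of the two factors is $\snorm{\sqrt{\tilde W}\,\hat m(-\tfrac12)q_{i_1}\cdots\Psi_N}$. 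Applying Lemma~\ref{lem:3body} to this produces $\snorm{\nabla_j q_j\,\hat m(-\tfrac12)\Psi_N}$, and since $\nabla_j$ does not commute with $\hat m(-\tfrac12)$ you cannot reduce it to $\snorm{\nabla_1 q_1\Psi_N}$. So Lemma~\ref{lem:sqrtw} is not the tool for these three families.

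What the paper actually does for $\dot{M}_W^{(2)},\dot{M}_W^{(5)},\dot{M}_W^{(6)}$ is different: it applies Cauchy--Schwarz \emph{without} splitting $\tilde W$, so that the weight lands entirely on the $q$-rich side (where \eqref{ine:symmhat} turns $\hat m(-\tfrac12)q_iq_j\cdots$ into $\sqrt{E_N^{(1)}}$), while on the other side one must control the full $\snorm{\tilde W\,q_{i_1}\cdots p_{j_1}\cdots\Psi_N}$. That norm is then bounded by expanding $\tilde W^2$ and applying Lemma~\ref{lem:sing2bd} twice, each application contributing $|\log R|^2$. This is the true origin of the $|\log R|^4$ in front of $M_N(t)$ --- it is not, as you wrote, a higher power of $\snorm{\tilde W\varphi_t^{\otimes 3}}$; that norm only enters $\dot M_W^{(3)},\dot M_W^{(4)}$ and carries a single $|\log R|$. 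The $\snorm{\nabla_1 q_1\Psi_N}^2$ contribution likewise comes from Lemma~\ref{lem:sing2bd} (in $\dot M_W^{(5)}$, where one of the $(1-\Delta_j)$ factors unavoidably hits a $q_j$), not from Lemma~\ref{lem:sqrtw}.
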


\begin{proof}
	Before entering the proof, note thatthat obtaining a control in terms of $\mathcal N_+(t)$ is enough to close the Gr\"onwall on $\sqrt{\mathcal N_+}(t)$ for the reason that $\mathcal N_+(t)\leq \sqrt{\mathcal N_+}(t)$, as we can see from the definition \eqref{def:mxi}
	We denote
	\begin{align}
		\dot{M}_W :&=-\im\bra{\Psi_N(t)}\ket{\Big[\frac{\beta^2 }{N^2} \sum_{k=1}^{N}\sum_{\substack{m\neq k\\n\neq m\neq k}}w(x_k-x_m,x_k -x_n)-\sum_{j=1}^Nw'(x_j), \hat{m}(\tfrac12)\Big]\Psi_N(t) }\nonumber\\
		&=-\im\frac{\beta^2}{3}\left\langle\Big[ \frac{(N-1)(N-2)}{N}\tilde{W}(x_1 , x_2, x_3)-N(w'(x_1)+w'(x_2)+w'(x_3)), \hat{m}(\tfrac12)\Big]\right\rangle_{\Psi_N (t)}\nonumber
	\end{align}
	where we have introduced $$\tilde{W}(x_1 , x_2, x_3):=w(x_1-x_2,x_1-x_3)+w(x_2-x_1,x_2 -x_3)+w(x_2-x_3,x_2 -x_1).$$  Note that $\tilde{W}$ is symmetric under permutations of the particles and positive by Lemma \ref{lem:3body}.
	We split $\one_{L^2(\R^{6})} =(p_1+q_1)(p_2 +q_2)(p_3+q_3)$ on each side of the scalar product and obtain
	six types of terms:
	\begin{align}
		\dot{M}_W
		=&-\im\beta^2 \bra{p_3p_2p_1}\ket{\Big[ \frac{(N-1)(N-2)}{N}\tilde{W}(x_1 , x_2, x_3)-N(w'(x_1)+w'(x_2)+w'(x_3)), \hat{m}(1/2)\Big]q_1p_2p_3}_{\Psi_N(t)}\nonumber
		\\
		&-\im\beta^2  \bra{p_3p_2q_1}\ket{\Big[ \frac{(N-1)(N-2)}{N}\tilde{W}(x_1 , x_2, x_3)-N(w'(x_1)+w'(x_2)+w'(x_3)), \hat{m}(1/2)\Big]q_1q_2q_3}_{\Psi_N(t)}\nonumber
		\\
		&-\im\frac{\beta^2}{3}  \bra{p_3p_2p_1}\ket{\Big[ \frac{(N-1)(N-2)}{N}\tilde{W}(x_1 , x_2, x_3)-N(w'(x_1)+w'(x_2)+w'(x_3)), \hat{m}(1/2)\Big]q_1q_2q_3}_{\Psi_N(t)}\nonumber
		\\
		&-\im\beta^2 \bra{p_3p_2p_1}\ket{\Big[ \frac{(N-1)(N-2)}{N}\tilde{W}(x_1 , x_2, x_3)-N(w'(x_1)+w'(x_2)+w'(x_3)), \hat{m}(1/2)\Big]p_1q_2q_3}_{\Psi_N(t)}\nonumber
		\\
		&-\im\beta^2 \bra{q_3q_2p_1}\ket{\Big[ \frac{(N-1)(N-2)}{N}\tilde{W}(x_1 , x_2, x_3)-N(w'(x_1)+w'(x_2)+w'(x_3)), \hat{m}(1/2)\Big]q_1q_2q_3}_{\Psi_N(t)}\nonumber
		\\
		&-3\im\beta^2 \bra{q_3p_2p_1}\ket{\Big[ \frac{(N-1)(N-2)}{N}\tilde{W}(x_1 , x_2, x_3)-N(w'(x_1)+w'(x_2)+w'(x_3)), \hat{m}(1/2)\Big]p_1q_2q_3}_{\Psi_N(t)}\nonumber\\
		&\quad +\hc \nonumber\\
		&=:\sum_{i=1}^6\dot{M}_W^{(i)} +\hc .\label{eq:sixterms}
	\end{align}
	Similarly to the two-body case we used Lemma~\ref{lem:tau} to discard all the term containing the same number of $q$ factors on the left and on the right side of the scalar product and the fact that $\Psi_N$ as well as the commutator are symmetric. 
	We now estimate these terms one by one.
	We start by noting that we have the mean-field cancellation
	\begin{align}
		p_3 p_2 \tilde{W}(x_1 , x_2, x_3)p_2 p_3&= p_2 p_3\bigg(\int_{\R^4}|\varphi_t (x_2)|^2|\varphi_t (x_3)|^2\bigg(\nabla^{\perp}w_R(x_1 -x_2)\cdot\nabla^{\perp}w_R(x_1 -x_3)\nonumber\\
		&\quad\quad\quad\quad\quad\quad\quad\quad +2\nabla^{\perp}w_R(x_2 -x_1)\cdot\nabla^{\perp}w_R(x_2 -x_3)\bigg)\dd x_1\dd x_2\bigg)p_2p_3\nonumber\\
		&=p_3p_2 w'(x_1)p_2p_3
	\end{align}
	implying that
	\begin{align}
		|\dot{M}^{(1)}_W|
		&\leq C \left | \bra{p_3p_2p_1}\ket{w'(x_1)(\hat{m}(\tfrac12)-\widehat{\tau_{-1}m}(\tfrac12))q_1p_2p_3}_{\Psi_N(t)}\right|\nonumber\\
		&\leq \frac{C}{N} \norm{w'(x_1)\varphi_t(x_1)}_2
		\leq CN^{-1}\label{eq:KW11}
	\end{align}
	where we used the bound \eqref{ine:diffn} and \eqref{ine:w'}. To treat $\dot{M}^{(4)}_W$, we use that $\hat{m}(-\tfrac14)\hat{m}(\tfrac14)q_1=q_1$ to get, 
	\begin{align}
		\dot{M}^{(4)}_W&=-\im\beta^2 \frac{(N-1)(N-2)}{N}\bra{p_3p_2p_1}\ket{\tilde{W}(x_1 , x_2, x_3) \hat{m}(\tfrac14)\hat{m}(-\tfrac14)(\hat{m}(\tfrac12)-\widehat{\tau_{-2}m}(\tfrac12))p_1q_2q_3}_{\Psi_N(t)}\nonumber\\
		&\leq C 
		\bra{p_3p_2p_1}\ket{ \widehat{\tau_2 m}(\tfrac14)\tilde{W}^2(x_1 , x_2, x_3) \widehat{\tau_2 m}(\tfrac14)p_1p_2p_3}^{1/2}_{\Psi_N(t)}\norm{\hat{m}(-\tfrac14)\hat{m}(-\tfrac12)q_2q_3\Psi_N}\nonumber\\
		&\leq C\norm{\tilde{W}\varphi_t(x_1)\varphi_t(x_2)\varphi_t(x_3)}_{L^2(\R^6)}\norm{\widehat{\tau_2 m}(\tfrac12)\Psi_N}\norm{\hat{m}(\tfrac14)\Psi_N}\nonumber\\
		&\leq C|\log R|\,\sqrt{\mathcal N_+}(t)\nonumber
	\end{align}
	via a Cauchy-Schwarz inequality,
	together with  \eqref{ine:symmhat}  and \eqref{eq:m2} on $\norm{\hat{m}(-\tfrac34)q_1q_3\Psi_N}$ and
	where we applied \eqref{ine:wphiphiphi} to bound the last norm.
	The next term is
	\begin{align}
		|\dot{M}^{(6)}_W|
		\leq &C \left| \bra{q_3p_2p_1}\ket{\left(\frac{(N-1)(N-2)}{N}\tilde{W}(x_1 , x_2, x_3)-Nw'(x_1) \right)(\hat{m}(\tfrac12)-\widehat{\tau_{-1}m}(\tfrac12))q_1p_2q_3}_{\Psi_N(t)}\right|\nonumber\\
		\leq& CN\norm{(\tilde{W}(x_1 , x_2, x_3)-w'(x_1))q_3p_2p_1\Psi_N}\norm{(\hat{m}(\tfrac12)-\widehat{\tau_{-1}m}(\tfrac12))q_1p_2q_3\Psi_N}\nonumber\\
		\leq& C\norm{(\tilde{W}(x_1 , x_2, x_3)-w'(x_1))q_3p_2p_1\Psi_N}\sqrt{\mathcal N_+(t)}\nonumber\\
		\leq& C(\mathcal N_+(t))^{1/2}\left(\norm{w'\varphi_t}_2(\mathcal N_+(t))^{1/2}+\norm{\tilde{W}(x_1 , x_2, x_3)q_3p_2p_1\Psi_N}\right).\label{ine:K6}
	\end{align}
	Note that in the above $(\mathcal N_+(t))^{1/2}$ really means the square-root of $\mathcal N_+(t)$ and not the value of $\sqrt{\mathcal N_+}(t)$ that satisfies $\mathcal N_+(t)\leq \sqrt{\mathcal N_+}(t)$. 
	The term $\norm{w'\varphi_t}_{2}$ has been calculated in \eqref{ine:w'}. For the last norm of above, we expand $\tilde{W}^2$ and use for each of the terms
	\begin{equation}
		|\nabla^{\perp}w_R(x-y)\cdot\nabla^{\perp}w_R(x-z)|^2\leq |\nabla^{\perp}w_R(x-y)|^2|\nabla^{\perp}w_R(x-z)|^2\nonumber
	\end{equation}
	in order to distribute the laplacians of Inequality \eqref{eq:sig2bd}, we use that
	\[
	|\nabla_x^{\perp}w_R(x-y)|^2=|\nabla_y^{\perp}w_R(y-x)|^2.
	\]
	We get
	\begin{align}
		|\dot{M}^{(6)}_W|\leq &C\mathcal N_+(t)+ C|\log R|^2\sqrt{\mathcal N_+(t)}\braket{q_3p_1p_2}{(1-\Delta_1)(1-\Delta_2)q_3p_1p_2}^{1/2}_{\Psi_N}\nonumber\\
		\leq & C|\log R|^2\mathcal N_+(t).
		\label{eq:KW13}
	\end{align}
	We now treat
	\begin{align}
		|\dot{M}^{(3)}_W|
		\leq &CN \left| \bra{p_3p_2p_1}\ket{\tilde{W}(x_1 , x_2, x_3) \hat{m}(\tfrac14)\hat{m}(-\tfrac14) (\hat{m}(\tfrac12)-\widehat{\tau_{-3}m}(\tfrac12))q_1q_2q_3}_{\Psi_N(t)}\right|\nonumber\\
		\leq &CN\norm{\hat{m}(-\tfrac14) (\hat{m}(\tfrac12)-\widehat{\tau_{-3}m}(\tfrac12))q_3q_2q_1\Psi_N(t)}\norm{\tilde{W}\widehat{\tau_3m}(\tfrac14)p_3 p_2 p_1\Psi_N(t)}\nonumber\\
		\leq &C\norm{\hat{m}(-\tfrac14) \hat{m}(-\tfrac12)q_3q_2q_1\Psi_N(t)}\norm{\tilde{W}\widehat{\tau_{3}m}(\tfrac12)p_3 p_2 p_1\Psi_N(t)}\nonumber\\
		\leq &C\sqrt{\mathcal N_+}^{1/2}\norm{\tilde{W}(x_1 , x_2, x_3)\varphi_t(x_1)\varphi_t(x_2)\varphi_t(x_3)}_{L^2(\R^6)}\norm{\widehat{\tau_{3}m}(\tfrac14)p_3 p_2 p_1\Psi_N(t)}\nonumber\\
		\leq & C\sqrt{\mathcal N_+}^{1/2}\norm{\tilde{W}(x_1 , x_2, x_3)\varphi_t(x_1)\varphi_t(x_2)\varphi_t(x_3)}_{L^2(\R^6)}(\sqrt{\mathcal N_+}^{1/2}+\frac{C}{\sqrt{N}})\nonumber\\
		\leq& C|\log R|(\sqrt{\mathcal N_+}(t)+N^{-1})
	\end{align}
	by \eqref{ine:wphiphiphi}.
	We continue with
	\begin{align}
		|\dot{M}^{(2)}_W|
		\leq&CN\left|\bra{q_3p_2p_1}\ket{\tilde{W}(x_1 , x_2, x_3)(\hat{m}(\tfrac12)-\widehat{\tau_{-2}m}(\tfrac12))q_1q_2q_3}_{\Psi_N(t)}\right|\nonumber\\
		\leq& C\norm{\tilde{W}q_3 p_2 p_1\Psi_N(t)}\norm{\hat{m}(-\tfrac12)q_3q_2q_1\Psi_N}\nonumber\\
		\leq&C\norm{\tilde{W}q_3 p_2 p_1\Psi_N(t)}(\mathcal N_+(t))^{1/2}\nonumber\\
		\leq& C|\log R|^2\mathcal N_+(t)
	\end{align}
	where we concluded using the previous calculation \eqref{ine:K6}.
	It remains to estimate
	\begin{align}
		\dot{M}^{(5)}_W=&-2\im\beta^2 \Im \bra{q_3q_2p_1}\ket{\left(\frac{(N-1)(N-2)}{N}\tilde{W}(x_1 , x_2, x_3)-Nw'(x_1) \right)(\hat{m}(\tfrac12)-\widehat{\tau_{-1}m}(\tfrac12))q_1q_2q_3}_{\Psi_N(t)} . \nonumber
	\end{align}
	With the help of Lemma~\ref{lem:sing2bd} to bound the following $\norm{\tilde{W}q_3 q_2 p_1\Psi_N(t)}$ term as previously, we have
	\begin{align}
		|\dot{M}^{(5)}_W|\leq&C\norm{w'\varphi_t}_2\norm{\hat{m}(-\tfrac12)q_3q_2q_1\Psi_N}+C\norm{\tilde{W}q_3 q_2 p_1\Psi_N(t)}\norm{\hat{m}(-\tfrac12)q_3q_2q_1\Psi_N}\nonumber\\
		\leq& C(\mathcal N_+(t))^{1/2}(\norm{w'\varphi_t}_2 +|\log R|^2\norm{\nabla_1q_1\Psi_N(t)})\nonumber\\
		\leq& C|\log R|^4(\mathcal N_+(t)+\norm{\nabla_1q_1\Psi_N(t)}^2)
	\end{align}
	where we used \eqref{ine:w'} to conclude. Gathering the previous estimates and bounding $\mathcal N_+(t)\leq \sqrt{\mathcal N}_+(t)$  provides the proof.
\end{proof}

\subsection{The Term $X$}
Here we estimate the term $X$ to finish proving Theorem~\ref{thm:K}. 
\begin{lemma}\label{lem:X}
	Let $\Psi_N(t)$ denote the solution of the Schrödinger equation \eqref{def:schro} with the initial data $\Psi_N(0) = \varphi_0^{\otimes N}$, where $\varphi \in H^2(\mathbb{R}^2)$. Let $X$ be as defined in \eqref{def:X}, and let $\hat{m}(\xi)$ be as defined in \eqref{def:mxi}. Then there exists a constant $C>0$ such that for any $\beta$, $R$, $\xi\geq 1/2$ and any $0\leq t\leq T$ we have
	\begin{equation}
		\left|\bra{\Psi_N (t)}\ket{\left[X, \hat{m}(\xi)\right]\Psi_N (t)}\right|\leq C\frac{1}{N^{\xi}R^2}.\nonumber
	\end{equation}
\end{lemma}
\begin{proof}
	Recall that
	\begin{align}
		X:=&\beta^2N^{-2}\sum_{j=1}^{N}\sum_{k\neq j}\left| \nabla^{\perp}w_R(x_j -x_k)\right|^2,\nonumber
	\end{align}
	and using that $\hat{m}(\xi)=\sum_k\left(\frac{k}{N}\right)^{\xi}P_k\leq \frac{1}{N^{\xi}}\sum_kkP_k=  \frac{1}{N^{\xi}}\sum_{j=1}^N q_j$ we have
	\begin{align}
		\left|\bra{\Psi_N (t)}\ket{\left[X, \hat{m}(\xi)\right]\Psi_N (t)}\right|&\leq\frac{CN}{N^2N^{\xi}} \sum_{j=2}^N\bra{\Psi_N (t)}\ket{\left[\left| \nabla^{\perp}w_R(x_j-x_1)\right|^2,q_1\right]\Psi_N (t)}_{L^2(\R^{2N})}\nonumber\\
		&\leq CN^{-\xi}R^{-2}\nonumber
	\end{align}
	using the bosonic symmetry of $\Psi_N(t)$ and the bound $\sup_{\R^2}|\nabla w_R|\leq C/R$ of \eqref{eq:sup-nabwR}.
\end{proof}

\section{Evolution of the Kinetic Energy}\label{sec:kinetic}
This section is dedicated to the control of the kinetic energy of excited particles. The idea is to use energy conservation combined to the convergence of the ground states of the microscopic and effective models. 
\begin{remark}\label{rem:positivity}
	In our proof, the main obstacle to obtain faster convergence rates in $R$ is located in Lemma~\ref{lem:kinetic} when we are trying to show the negativity of \eqref{eq:posterm}. In \cite{KnoPic-10}, the equivalent of \eqref{eq:posit} was holding by assumption on two-body interaction potential, which is not guaranteed in our case. While the rest of the proof only deals with powers of $|\log R|$ divergences, at this specific step we cannot avoid a $R^{-2}$ which limits us to deal with $R\geq \rateR$ and makes the final result to be as in \eqref{eq:main}. See more details about this technical issue in Remarks~\ref{rem:posproof} and \ref{rem:posproof2}.
\end{remark}
\subsection{Bound on the Kinetic Energy}
Our aim in this section is to control the $\norm{\nabla_1 q_1\Psi_N(t)}$ term appearing in Theorem~\ref{thm:K}. We will do it through the conservation of the energies
\begin{equation}
	E_N(t):=\frac{1}{N}\braket{\Psi_N(t)}{\HNR\Psi_N(t)}\quad\text{and}\quad\mathcal{E}_R^{\mathrm{af}}[u]:=\int_{\R^2}\left|(-\im\nabla +\beta \bAR[|u|^2])u\right|^2 \nonumber
\end{equation}
combined with the fact that $E_N(0)\simeq \mathcal{E}_R^{\mathrm{af}}[\varphi_0]+O(N^{-1}R^{-2})$.
We define $\sqrt{\mathcal N_+}(t):=\braket{\Psi_N(t)}{\hat{m}(\tfrac12)\Psi_N(t)}$ and the lemma is given as follows:
\begin{theorem}[\textbf{Control of the kinetic energy of excited particles}]\mbox{}\label{lem:kinetic}
	Let $\Psi_N(t)$ denote the solution of the Schrödinger equation \eqref{def:schro} with the initial data $\Psi_N(0) = \varphi_0^{\otimes N}$, where $\varphi_0 \in H^2(\mathbb{R}^2)$. Let $\hat{m}(\tfrac12)$ be as defined in \eqref{def:mxi}. Then there exist two constants $C,c,c'>0$ such that for any $|\beta| \leq c'$, $0<R\leq c$ and any $0\leq t\leq T$ we have
	\begin{equation}
		\norm{\nabla_1 q_1\Psi_N}^2\leq \left| E_N(t)-\mathcal{E}_R^{\mathrm{af}}[\varphi_t] \right| +CR^{-2}\sqrt{\mathcal N_+}(t)+C\frac{|\log R|}{\sqrt{N}}
	\end{equation}
	where $q_1$ is defined as in \eqref{def:pq} with $\varphi_t$ solving $\mathrm{CSS}(R,\varphi_0)$.
	Moreover we have that
	\begin{equation}\label{ine:ener}
		\left| E_N(t)-\mathcal{E}_R^{\mathrm{af}}[\varphi_t] \right| \leq C|\log R|^2N^{-1}	
	\end{equation}
	as a consequence of energy conservation.
\end{theorem}
In this Lemma appears the quantity $\hat{m}(\tfrac12)$. We will obtain a control on it later in Theorem~\ref{thm:K} through a Gr\"onwall-type estimate similar to the one used in Section~\ref{sec:pickl}.

\begin{proof}
	Recalling definitions \eqref{def:v} and \eqref{def:w},  we introduce the notation 
	\[
	\tilde{v}(x):=-\im\nabla_x\cdot \bAR[|\varphi|^2](x)+\mathrm{h.c}\quad \text{and}  \quad\tilde{w}(x):= (\bAR[|\varphi|^2](x))^2
	\] 
	to write
	\begin{align}
		\mathcal{E}_{R}^{\mathrm{af}}[\varphi] = & \langle\varphi,-\Delta\varphi\rangle+\beta\langle\varphi,\tilde{v}\varphi\rangle+\beta^2\langle\varphi,\tilde{w}\varphi\rangle\nonumber\\
		E[\Psi] := & \langle\Psi,-\Delta_{x_1}\Psi\rangle+\beta\langle\Psi,v(x_1,x_2)\Psi\rangle+\frac{(N-2)}{(N-1)}\beta^2\langle\Psi,w(x_1-x_2, x_1-x_3)\Psi\rangle\label{eq:EPsi}.
	\end{align}
	Note that by symmetry of $\Psi_N(t)$ and Lemma~\ref{lem:smCp} we have that 
	\begin{align}
		E[\Psi_N(t)]&= E_N(t)+\frac{\beta^2}{N-1} \langle\Psi_N|\nabla^{\perp}w_R(x_1-x_2)|^2 \Psi_N\rangle\nonumber\\
		&=E_N(t)+O(R^{-2}N^{-1})\label{eq:EptoEN}.
	\end{align}
	Using that $p_{j},q_{j}\leq\one(x_j)_{L^2(\R^2)}$ are projections, applying the Cauchy-Schwarz inequality
	and the symmetry of $\Psi_N$, we have
	\begin{align*}
		\|\nabla_{1}q_{1}\Psi_{N}(t)\| & =\|\nabla_{1}(1-p_{1}(p_{2}+q_{2})(p_{3}+q_{3}))\Psi_{N}(t)\|\\
		& \leq\|\nabla_{1}(1-p_{1}p_{2}p_{3})\Psi_{N}(t)\|+3\|\nabla_{1}p_{1}\|\|q_{2}\Psi_{N}(t)\|\\
		& \leq\|\nabla_{1}(1-p_{1}p_{2}p_{3})\Psi_{N}(t)\|+3\|\nabla_{1}p_{1}\|\|q_{2}\Psi_{N}(t)\|\\
		& \leq\|\nabla_{1}(1-p_{1}p_{2}p_{3})\Psi_{N}(t)\|+3\|\nabla\varphi_{t}\|\langle\Psi_{N}(t),q_{2}\Psi_{N}(t)\rangle^{1/2}\\
		& \leq\|\nabla_{1}(1-p_{1}p_{2}p_{3})\Psi_{N}(t)\|+3\|\nabla\varphi_{t}\|\left\langle\Psi_{N}(t),\hat{m}(1)\Psi_{N}(t)\right\rangle^{1/2}\\
		& \leq\|\nabla_{1}(1-p_{1}p_{2}p_{3})\Psi_{N}(t)\|+3\|\nabla\varphi_{t}\|\left\langle\Psi_{N}(t),\hat{m}(\tfrac12)\Psi_{N}(t)\right\rangle^{1/2}.
	\end{align*}

	The rest of the proof uses the same techniques as Section~\ref{sec:pickl} in order to bound $\|\nabla_{1}(1-p_{1}p_{2}p_{3})\Psi_{N}(t)\|$. As we assumed $|\beta| \leq c$, we will systematically treat $\norm{\varphi_t}_{H^1}$ and  $\norm{\varphi_t}_{H^2}$ as constants for a better readability.
	For the rest of the proof, we write the $\Psi$ for $\Psi_{N}(t)$ to have a shorter notation.
	Inserting
	\[
	\one_{L^2(\R^6)}=p_{1}p_{2}p_{3}+(\one_{L^2(\R^6)}-p_{1}p_{2}p_{3}),
	\]
	into \eqref{eq:EPsi}, and denoting $\Psi^{'}:=(\one_{L^2(\R^6)}-p_{1}p_{2}p_{3})\Psi$ and $h=-\Delta$ and $h_1=-\Delta_{x_1}$, 
	we have
	\begin{align*}
		E[\Psi]= & \langle\Psi,p_{1}p_{2}p_{3}h_{1}p_{1}p_{2}p_{3}\Psi\rangle
		+\langle\Psi,p_{1}p_{2}p_{3}h_{1}\Psi^{'}\rangle+\langle\Psi^{'},h_{1}p_{1}p_{2}p_{3}\Psi\rangle
		+\langle\Psi^{'}h_{1}\Psi^{'}\rangle\\
		& +\beta\Big(\langle\Psi_N,p_{1}p_{2}p_{3}v(x_1,x_2)p_{1}p_{2}p_{3}\Psi\rangle
		+\langle\Psi,p_{1}p_{2}p_{3}v(x_1,x_2)\Psi^{'}\rangle+\langle\Psi^{'}_N,v(x_1,x_2)p_{1}p_{2}p_{3}\Psi\rangle\\
		& +\langle\Psi^{'},v(x_1,x_2)\Psi^{'}\rangle\Big)\\
		& +\beta^2\frac{N-2}{N-1}\Big(\langle\Psi,p_{1}p_{2}p_{3}w(x_1-x_2,x_1-x_3)p_{1}p_{2}p_{3}\Psi\rangle\\
		& +\langle\Psi,p_{1}p_{2}p_{3}w(x_1-x_2,x_1-x_3)\Psi^{'}\rangle+\langle\Psi^{'},w(x_1-x_2,x_1-x_3)p_{1}p_{2}p_{3}\Psi\rangle\\
		& +\langle\Psi^{'},w(x_1-x_2,x_1-x_3)\Psi^{'}\rangle\Big).
	\end{align*}
	Then, using some further short-hand notation, $v:=v(x_1,x_2)$ and $w:=w(x_1-x_2,x_1-x_3)$, we obtain for any $0<\kappa<1$ that
	\begin{align}
		\kappa\langle\Psi, & (1-p_{1}p_{2}p_{3})h_{1}(1-p_{1}p_{2}p_{3})\Psi\rangle\nonumber\\
		= & E[\Psi]-\mathcal{E}_{R}^{\mathrm{af}}[\varphi_t]\nonumber\\
		& -\langle\Psi,p_{1}p_{2}p_{3}h_{1}p_{1}p_{2}p_{3}\Psi\rangle+\langle\varphi_t,h\varphi_t\rangle\label{eq:h1}\\
		& -\langle\Psi,p_{1}p_{2}p_{3}h_{1}(1-p_{1}p_{2}p_{3})\Psi\rangle-\langle\Psi,p_{1}p_{2}p_{3}h_{1}(1-p_{1}p_{2}p_{3})\Psi\rangle\label{eq:h1ppp}\\
		& -\beta\Big(\langle\Psi,p_{1}p_{2}p_{3}vp_{1}p_{2}p_{3}\Psi\rangle-\langle\varphi_t,\tilde{v}\varphi_t\rangle\Big)\label{eq:v}\\
		& -\beta\Big(\langle\Psi,p_{1}p_{2}p_{3}v(1-p_{1}p_{2}p_{3})\Psi\rangle+\langle\Psi,(1-p_{1}p_{2}p_{3})vp_{1}p_{2}p_{3}\Psi\rangle\Big)\label{eq:vppp}\\
		& -\beta^2\Big(\langle\Psi,p_{1}p_{2}p_{3}wp_{1}p_{2}p_{3}\Psi\rangle-\langle\varphi_t,\tilde{w}\varphi_t\rangle\Big)\label{eq:w}\\
		& -\beta^2\Big(\langle\Psi,p_{1}p_{2}p_{3}w(1-p_{1}p_{2}p_{3})\Psi\rangle+\langle\Psi,(1-p_{1}p_{2}p_{3})wp_{1}p_{2}p_{3}\Psi\rangle\Big)\label{eq:wppp}\\
		& +\frac{\beta^2}{N-1}\Big(\langle\Psi,p_{1}p_{2}p_{3}wp_{1}p_{2}p_{3}\Psi\rangle
		\nonumber\\&\hspace{2cm}
		+\langle\Psi,p_{1}p_{2}p_{3}w(1-p_{1}p_{2}p_{3})\Psi\rangle+\langle\Psi,(1-p_{1}p_{2}p_{3})wp_{1}p_{2}p_{3}\Psi\rangle\Big)\label{eq:wrest}\\
		& -\Big(\beta\langle\Psi,(1-p_{1}p_{2}p_{3})v(1-p_{1}p_{2}p_{3})\Psi\rangle\nonumber\\
		& \qquad+\beta^2\frac{N-2}{N-1}\langle\Psi,(1-p_{1}p_{2}p_{3})w(1-p_{1}p_{2}p_{3})\Psi\rangle
		\nonumber\\& \qquad
		+(1-\kappa)\langle\Psi,(1-p_{1}p_{2}p_{3})h_{1}(1-p_{1}p_{2}p_{3})\Psi\rangle\Big).\label{eq:posterm}
	\end{align}

	We now estimate the terms line by line, using the same use of the properties of Lemmas~\ref{lem:mxi} and \ref{lem:tau} as previously.

	\noindent{\it Line of expression} \eqref{eq:h1} is
	\begin{align*}
		& \left|\langle\Psi,p_{1}p_{2}p_{3}h_{1}p_{1}p_{2}p_{3}\Psi\rangle-\langle\varphi,h\varphi\rangle\right|\\
		& =\langle\varphi,h\varphi\rangle\left|\langle\Psi,(1-p_{1}p_{2}p_{3})\Psi\rangle\right|\\
		& =\langle\varphi,h\varphi\rangle\left|\langle\Psi,(q_{1}p_{2}p_{3}+p_{1}q_{2}p_{3}+p_{1}p_{2}q_{3}+p_{1}q_{2}q_{3}+q_{1}p_{2}q_{3}+q_{1}q_{2}p_{3}+q_{1}q_{2}q_{3})\Psi\rangle\right|\\
		& \leq7\langle\varphi,h\varphi\rangle \langle\Psi,\hat{m}(1)\Psi\rangle\\
		& \leq C \langle\Psi,\hat{m}(\tfrac12)\Psi\rangle.
	\end{align*}

	\noindent{\it Line of expression} \eqref{eq:h1ppp} gives
	\begin{align*}
		& \langle\Psi,p_{1}p_{2}p_{3}h_{1}(1-p_{1}p_{2}p_{3})\Psi\rangle\\
		& =\langle\Psi,p_{1}p_{2}p_{3}h_{1}(q_{1}p_{2}p_{3}+p_{1}q_{2}p_{3}+p_{1}p_{2}q_{3}+p_{1}q_{2}q_{3}+q_{1}p_{2}q_{3}+q_{1}q_{2}p_{3}+q_{1}q_{2}q_{3})\Psi\rangle\\
		& =\langle\Psi,p_{1}p_{2}p_{3}h_{1}q_{1}p_{2}p_{3}\Psi\rangle\\
		& =\langle\Psi,q_{1}h_{1}p_{1}p_{2}p_{3}\Psi\rangle\\
		& =\langle\Psi,q_{1}\hat{m}(\tfrac12)^{-1/2}\hat{m}(\tfrac12)^{1/2}h_{1}p_{1}p_{2}p_{3}\Psi\rangle\\
		& =\langle\Psi,q_{1}\hat{m}(\tfrac12)^{-1/2}h_{1}\widehat{\tau_{1} m}(\tfrac12)^{1/2}h_{1}p_{1}p_{2}p_{3}\Psi\rangle.
	\end{align*}
	Thus,
	\begin{align*}
		& \left|\langle\Psi,p_{1}p_{2}p_{3}h_{1}(1-p_{1}p_{2}p_{3})\Psi\rangle\right|\\
		& \leq\sqrt{\langle\Psi,q_{1}\hat{m}(\tfrac12)^{-1}\Psi\rangle}\sqrt{\langle\Psi,p_{1}p_{2}p_{3}\widehat{\tau_{1} m}(\tfrac12)^{1/2}h_{1}^{2}\widehat{\tau_{1} m}(\tfrac12)^{1/2}p_{1}p_{2}p_{3}\Psi\rangle\rangle}\\
		& =\sqrt{\langle\Psi,\hat{m}(\tfrac12)\Psi\rangle}\sqrt{\langle\varphi,h^{2}\varphi\rangle}\sqrt{\langle\Psi,\widehat{\tau_{1} m}(\tfrac12)p_{1}p_{2}p_{3}\Psi\rangle\rangle}\\
		& \leq C\sqrt{\sqrt{\mathcal N_+}(t)}\left(\sqrt{\sqrt{\mathcal N_+}(t)}+\frac{1}{N^{1/4}}\right)\\
		& \leq C\left(\sqrt{\mathcal N_+}(t)+\frac{1}{\sqrt{N}}\right).
	\end{align*}

	\noindent{\it Line of expression}  \eqref{eq:v} provides
	\begin{align*}
		& \left|\langle\Psi,p_{1}p_{2}p_{3}vp_{1}p_{2}p_{3}\Psi\rangle-\langle\varphi,\tilde{v}\varphi\rangle\right|
		\leq\langle\varphi,\tilde{v}\varphi\rangle\left|\langle\Psi,p_{1}p_{2}p_{3}\Psi\rangle-1\right|
		\leq C\left|\langle\Psi,p_{1}p_{2}p_{3}\Psi\rangle-1\right|
		\leq C \sqrt{\mathcal N_+}(t)
	\end{align*}
	where we used the estimate \eqref{eq:N4} to get 
	\begin{align*}
		\langle\varphi,\tilde{v}\varphi\rangle&=\norm{\varphi (-\im\nabla\varphi)\cdot\bAR[|\varphi|^2]}_1
		\leq \norm{\nabla\varphi}_2\norm{\varphi}_4\norm{\bAR[|\varphi|^2]}_4
		\leq C.
	\end{align*}

	\noindent{\it Line of expression} \eqref{eq:vppp} is
	\begin{align*}
		& \left|\langle\Psi,p_{1}p_{2}p_{3}v(1-p_{1}p_{2}p_{3})\Psi\rangle\right|\\
		& =\left|\langle\Psi,p_{1}p_{2}p_{3}v(q_{1}p_{2}p_{3}+p_{1}q_{2}p_{3}+p_{1}p_{2}q_{3}+p_{1}q_{2}q_{3}+q_{1}p_{2}q_{3}+q_{1}q_{2}p_{3}+q_{1}q_{2}q_{3})\Psi\rangle\right|\\
		& \leq\left|\langle\Psi,p_{1}p_{2}p_{3}vq_{1}p_{2}p_{3}\Psi\rangle\right|+\left|\langle\Psi,p_{1}p_{2}p_{3}vp_{1}q_{2}p_{3}\Psi\rangle\right|+\left|\langle\Psi,p_{1}p_{2}p_{3}vp_{1}p_{2}q_{3}\Psi\rangle\right|\\
		& \quad+\left|\langle\Psi,p_{1}p_{2}p_{3}vp_{1}q_{2}q_{3}\Psi\rangle\right|+\left|\langle\Psi,p_{1}p_{2}p_{3}vq_{1}q_{2}p_{3}\Psi\rangle\right|\left|\langle\Psi,p_{1}p_{2}p_{3}vq_{1}p_{2}q_{3}\Psi\rangle\right|\\
		& \quad+\left|\langle\Psi,p_{1}p_{2}p_{3}vq_{1}q_{2}q_{3}\Psi\rangle\right|\\
		& =:\mathrm{(a)}+\mathrm{(b)}+\mathrm{(b')}+\mathrm{(c)}+\mathrm{(d)}+\mathrm{(d')}+\mathrm{(e)}.
	\end{align*}
	We have $\mathrm{(c)}=\mathrm{(e)}=0$ because $p_3q_3=0$. 

	Now, for $\mathrm{(a)}$, we have
	\begin{align*}
		& \left|\langle\Psi,p_{1}p_{2}p_{3}vq_{1}p_{2}p_{3}\Psi\rangle\right|\\
		& =\left|\langle\Psi,p_{1}p_{2}p_{3}\tilde{v}(x_1)q_{1}\Psi\rangle\right|\\
		& =\left|\langle\Psi,p_{1}p_{2}p_{3}\tilde{v}(x_1)\hat{m}(\tfrac12)^{1/2}\hat{m}(\tfrac12)^{-1/2}q_{1}\Psi\rangle\right|\\
		& =\left|\langle\Psi,p_{1}p_{2}p_{3}\widehat{\tau_{1} m}(\tfrac12)^{1/2}\tilde{v}(x_1)\hat{m}(\tfrac12)^{-1/2}q_{1}\Psi\rangle\right|\\
		& \leq\sqrt{\langle\varphi,\tilde{v}^{2}(x_1)\varphi\rangle}\sqrt{\langle\Psi,\widehat{\tau_{1} m}(\tfrac12)p_{1}p_{2}p_{3}\Psi\rangle\rangle}\sqrt{\langle\Psi,\hat{m}(\tfrac12)^{-1}q_{1}\Psi\rangle}\\
		& \leq C\left(\sqrt{\mathcal N_+}(t)+\frac{1}{\sqrt{N}}\right)
	\end{align*}
	by the estimate $\norm{\tilde{v}\varphi_t}_2\leq C\norm{\nabla\varphi\cdot \bAR[|\varphi|^2]}_2\leq C$ as in the proof of the estimate \eqref{def:v'}.
	For $\mathrm{(b)}$ and  $\mathrm{(b')}$, similarly we obtain
	\begin{align*}
		\left|\langle\Psi,p_{1}p_{2}p_{3}vp_{1}q_{2}p_{3}\Psi\rangle\right|+\left|\langle\Psi,p_{1}p_{2}p_{3}vp_{1}p_{2}q_{3}\Psi\rangle\right|
		\leq C\left(\sqrt{\mathcal N_+}(t)+\frac{1}{\sqrt{N}}\right).
	\end{align*}

	For $\mathrm{(d)}$,
	\begin{align*}
		& \left|\langle\Psi,p_{1}p_{2}p_{3}vq_{1}q_{2}p_{3}\Psi\rangle\right|\\
		& =\left|\langle\Psi,p_{1}p_{2}p_{3}v\hat{m}(\tfrac12)\hat{m}(\tfrac12)^{-1}q_{1}q_{2}\Psi\rangle\right|\\
		& =\left|\langle\Psi,p_{1}p_{2}p_{3}\widehat{\tau_{1} m}(\tfrac12) v\hat{m}(\tfrac12)^{-1}q_{1}q_{2}\Psi\rangle\right|\\
		& \leq\norm{v(x-y)\varphi(x)\varphi(y)}_{L^2(\R^4)}\sqrt{\langle\Psi,\widehat{\tau_{1} m}(\tfrac12)p_{1}p_{2}p_{3}\Psi\rangle\rangle}\sqrt{\langle\Psi,\hat{m}(\tfrac12)^{-1}q_{1}q_{2}\Psi\rangle}\\
		& \leq C|\log R|\left(\langle\Psi,\hat{m}(1)\Psi\rangle+\frac{1}{\sqrt{N}}\right)\\
		& \leq C|\log R| \left(\langle\Psi,\hat{m}(\tfrac12)\Psi\rangle+\frac{1}{\sqrt{N}}\right)
	\end{align*}
	where we used the estimate \eqref{ine:vphiphi}. The same holds true for $\mathrm{(d')}$.\newline
	\noindent{\it Line of expression} \eqref{eq:w}
	\begin{align*}
		\left|\langle\Psi,p_{1}p_{2}p_{3}wp_{1}p_{2}p_{3}\Psi\rangle-\langle\varphi,\tilde{w}\varphi\rangle\right|
		& \leq\langle\varphi,\tilde{w}\varphi\rangle\left|\langle\Psi,p_{1}p_{2}p_{3}\Psi\rangle-1\right|
		\leq C\norm{|\varphi|^2}_2\norm{\bAR[|\varphi|^2]}_4^2 \sqrt{\mathcal N_+}(t)
		\leq C \sqrt{\mathcal N_+}(t)
	\end{align*}
	using the estimate \eqref{eq:N4}.
	\noindent{\it Line of expression} \eqref{eq:wppp}
	\begin{align*}
		& \langle\Psi,p_{1}p_{2}p_{3}w(1-p_{1}p_{2}p_{3})\Psi\rangle\\
		& =\langle\Psi,p_{1}p_{2}p_{3}w(q_{1}p_{2}p_{3}+p_{1}q_{2}p_{3}+p_{1}p_{2}q_{3}+p_{1}q_{2}q_{3}+q_{1}p_{2}q_{3}+q_{1}q_{2}p_{3}+q_{1}q_{2}q_{3})\Psi\rangle\\
		& \leq\left|\langle\Psi,p_{1}p_{2}p_{3}wq_{1}p_{2}p_{3}\Psi\rangle\right|+
		2\left|\langle\Psi,p_{1}p_{2}p_{3}wp_{1}p_{2}q_{3}\Psi\rangle\right|+\left|\langle\Psi,p_{1}p_{2}p_{3}wp_{1}q_{2}q_{3}\Psi\rangle\right|\\
		&\quad+2\left|\langle\Psi,p_{1}p_{2}p_{3}wq_{1}q_{2}p_{3}\Psi\rangle\right|+\left|\langle\Psi,p_{1}p_{2}p_{3}wq_{1}q_{2}q_{3}\Psi\rangle\right|\\
		& =:\mathrm{(a)}+2\mathrm{(a)'}+\mathrm{(b)}+2\mathrm{(b')}+\mathrm{(c)}.
	\end{align*}

	For $\mathrm{(a)}$,
	\begin{align*}
		& \left|\langle\Psi,p_{1}p_{2}p_{3}wq_{1}p_{2}p_{3}\Psi\rangle\right|\\
		& =\left|\langle\Psi,p_{1}p_{2}p_{3}\tilde{w}(x_1)q_{1}\Psi\rangle\right|\\
		& =\left|\langle\Psi,p_{1}p_{2}p_{3}\tilde{w}(x_1)\hat{m}(\tfrac12)^{1/2}\hat{m}(\tfrac12)^{-1/2}q_{1}\Psi\rangle\right|\\
		& =\left|\langle\Psi,p_{1}p_{2}p_{3}\widehat{\tau_{1} m}(\tfrac12)^{1/2}\tilde{w}(x_1)\hat{m}(\tfrac12)^{-1/2}q_{1}\Psi\rangle\right|\\
		& \leq\sqrt{\langle\varphi,\tilde{w}^2\varphi\rangle}\sqrt{\langle\Psi,\widehat{\tau_{1} m}(\tfrac12)p_{1}p_{2}p_{3}\Psi\rangle\rangle}\sqrt{\langle\Psi,\hat{m}(\tfrac12)^{-1}q_{1}\Psi\rangle}\\
		& \leq C\left(\sqrt{\mathcal N_+}(t)+\frac{1}{\sqrt{N}}\right)
	\end{align*}
	using that $\langle\varphi,\tilde{w}^2\varphi\rangle \leq \norm{|\varphi|^2}_2\norm{\bAR[|\varphi|^2]}_8^4\leq C$ by Estimate \eqref{eq:N4}. We bound $\mathrm{(a')}$ the same way using that by \eqref{eq:WY} we have
	\begin{align*}
		\norm{(\nabla^{\perp}w_R\ast\bAR[|\varphi|^2]|\varphi|^2)^2|\varphi|^2}_1&\leq \norm{|\varphi|^2}_2\norm{\nabla^{\perp}w_R\ast\bAR[|\varphi|^2]|\varphi|^2}_4^2\\
		&\leq  \norm{|\varphi|^2}_2  \norm{\nabla^{\perp}w_R}_{2,w}\norm{\bAR[|\varphi|^2]|\varphi|^2}_{\frac43}^2
		\leq C.
	\end{align*}
	For $\mathrm{(b)}$ and $\mathrm{(b')}$, 
	\begin{align*}
		& \left|\langle\Psi,p_{1}p_{2}p_{3}wq_{1}q_{2}p_{3}\Psi\rangle\right|+ \left|\langle\Psi,p_{1}p_{2}p_{3}wp_{1}q_{2}q_{3}\Psi\rangle\right|\\
		& =\left|\langle\Psi,p_{1}p_{2}p_{3}\widehat{\tau_{1} m}(\tfrac12)^ {}w\hat{m}(\tfrac12)^{-1}q_{1}q_{2}p_3\Psi\rangle\right|+\left|\langle\Psi,p_{1}p_{2}p_{3}\widehat{\tau_{1} m}(\tfrac12)^ {}w\hat{m}(\tfrac12)^{-1}p_{1}q_{2}q_3\Psi\rangle\right|\\
		& \leq\sqrt{\langle\varphi^{\otimes 3},w^{2}\varphi^{\otimes 3}\rangle}\sqrt{\langle\Psi,\widehat{\tau_{1} m}(\tfrac12)p_{1}p_{2}p_{3}\Psi\rangle\rangle}(\sqrt{\langle\Psi,\hat{m}(\tfrac12)^{-1}q_{1}q_{2}\Psi\rangle}+\sqrt{\langle\Psi,\hat{m}(\tfrac12)^{-1}q_{2}q_{3}\Psi\rangle})\\
		& \leq C |\log R| \left(\langle\Psi,\hat{m}(1)\Psi\rangle+\frac{1}{\sqrt{N}}\right)\\
		& \leq C |\log R|\left(\langle\Psi,\hat{m}(\tfrac12)\Psi\rangle+\frac{1}{\sqrt{N}}\right)
	\end{align*}
	where we concluded using \eqref{ine:wphiphiphi}.\newline
	For $\mathrm{(c')}$, the same argument provides
	\begin{align*}
		& \left|\langle\Psi,p_{1}p_{2}p_{3}wq_{1}q_{2}q_{3}\Psi\rangle\right| \leq C|\log R|\left(\sqrt{\mathcal N_+}(t)+\frac{1}{\sqrt{N}}\right).
	\end{align*}
	\noindent{\it Line of expression} \eqref{eq:wrest} is bounded, using \eqref{ine:wphiphiphi} and that $\snorm{\Psi'_N}\leq 1$.
	We obtain via the Cauchy-Schwarz inequality on $\braket{wp_1p_2p_3\Psi_N}{\Psi_N'}$ that  \eqref{eq:wrest} is controlled by
	\begin{align}
		C\frac{\beta^2}{N}\norm{wp_1p_2p_3\Psi_N}&\leq C\frac{\beta^2}{N}\langle\varphi^{\otimes 3},w^{2}\varphi^{\otimes 3}\rangle^{1/2}
		\leq  C\frac{|\log R|}{N}.
	\end{align}

	\noindent{\it Line of expression} \eqref{eq:posterm}. This is one of the main technical points of the paper. Following \cite[Proof of Theorem 4.1]{KnoPic-10}, we would like to prove that
	\begin{equation}\label{eq:posit}
		0\leq (1-\kappa)(-\Delta_{x_1})+ \beta v(x_1,x_2) + \frac{(N-2)}{(N-1)} \beta^2 w(x_1-x_2,x_1-x_3)
	\end{equation}
	using that our $\beta$ is small.

	More precisely, we drop the $w$ term of above which is positive, see Lemma~\ref{lem:3body}, and use a weighted Cauchy-Schwarz for operators to get
	\begin{equation}\label{eq:CSv}
		\beta v(x_1,x_2)\geq -(-\frac{1}{2}\Delta_{x_1}+2\beta^{2}|\nabla^{\perp}w_R(x_1 -x_2)|^2)
	\end{equation}
	such that 
	\begin{align}
		\beta\braket{\Psi'_N}{v(x_1,x_2)\Psi'_N}
		&\geq -\frac{1}{2} \braket{\Psi'_N}{-\Delta_{x_1}\Psi'_N}-CR^{-2}\braket{\Psi'_N}{\Psi'_N}\nonumber\\
		&\geq  -\frac{1}{2} \braket{\Psi'_N}{-\Delta_{x_1}\Psi'_N}-CR^{-2}\sqrt{\mathcal N_+}(t).
	\end{align}
	We now choose $\kappa=\frac{1}{2}$ such that $1-\kappa-\frac{1}{2} \geq 0$ and include the $CR^{-2}\hat{m}(\tfrac12)$ in the final bound.
	All in all, we have proven that
	\begin{align*}
		\langle\Psi, & (1-p_{1}p_{2}p_{3})h_{1}(1-p_{1}p_{2}p_{3})\Psi\rangle\\
		& \leq |E[\Psi]-\mathcal{E}_{R}^{\mathrm{af}}[\varphi]|+R^{-2}\sqrt{\mathcal N_+}(t)+\frac{|\log R|}{\sqrt{N}}\\
		&\leq |E_N(t)-\mathcal{E}_{R}^{\mathrm{af}}[\varphi]|+CR^{-2}N^{-1}+R^{-2}\sqrt{\mathcal N_+}(t)+\frac{|\log R|}{\sqrt{N}}
	\end{align*}
	by \eqref{eq:EptoEN}.
	We conclude the proof by showing that
	\begin{align}
		\left| E_N(t)-\mathcal{E}_R^{\mathrm{af}}[\varphi_t] \right| &=\left| E_N(0)-\mathcal{E}_R^{\mathrm{af}}[\varphi_0] \right|\nonumber\\
		&\leq \frac{\beta^2}{N-1}\braket{\varphi_0^{\otimes 3}}{(|\nabla^{\perp}w_R(x_1-x_2)|^2 +\nabla^{\perp}w_R(x_1-x_2)\cdot\nabla^{\perp}w_R(x_1-x_3))\varphi_0^{\otimes 3}}|\nonumber\\
		&\leq CN^{-1}|\log R|^2\braket{\varphi_0}{(1-\Delta)\varphi_0}|\nonumber\\
		&\leq C|\log R|^2N^{-1}
	\end{align}
	where we used the conservation of energy Theorem~\ref{Thm:cons} together with the bounds \eqref{eq:sig2bd} and \eqref{eq:W3C}.
\end{proof}
\begin{remark}\label{rem:posproof}
		Note that the above proof, the operator acts on $\Psi_N'=(\one_{L^2(\R^6)}-p_{1}p_{2}p_{3})\Psi$ which is symmetric in the particles $(x_1,x_2,x_3)$ and in $(x_4,\cdots ,x_N)$ but not in $(x_1,\cdots,x_N)$. Because of this lack of symmetry, we do not manage to prove the above inequality using the completion of the square and are forced to use the rough bound \eqref{eq:sup-nabwR} providing a $R^{-2}$ instead of, ideally, a $|\log R|$. This strongly deteriorates the rate of the final Theorem~\ref{thm:main} as explained in Remark~\ref{rem:positivity}. 
	\end{remark}
	\begin{remark}\label{rem:posproof2}
		Note that a naive completion of the square in \eqref{eq:posit}, only using the symmetry of $\Psi_N'$ in the variables $(x_1,x_2,x_3)$ provides the same $|\nabla^{\perp}w_R(x_1 -x_2)|^2$ type of term as in \eqref{eq:CSv}. We would need to use the symmetry in all the particles to complete the square and be left with the usual $N^{-1}|\nabla^{\perp}w_R(x_1 -x_2)|^2$.
	\end{remark}

\subsection{Conclusion}\label{sec:con}
\begin{proof}[Proof of Theorem~\ref{thm:main}]
	We can now prove Theorem~\ref{thm:main}. We work as usual under the assumption that $\varphi_0\in H^2(\R^2)$, $|\beta|\leq c$, $R\leq c'$ and that $0\leq t\leq T$ to make use of Theorem~\ref{thm:K}. There exists a constant $C>0$ such that
	\begin{align}
		\mathcal N_+ (t)\leq \sqrt{\mathcal N}_+(t)\leq \frac{C}{\sqrt{N}R^2}e^{CT\frac{|\log R|}{R^2}}.\nonumber
	\end{align}
	We now split
	\begin{equation}
		\Tr\left|\gamma_N^{(k)}(t)-\sket{\varphi_t^{\otimes k}}\sbra{\varphi_t^{\otimes k}}\right|
		\leq 	
		\Tr\left|\sket{\varphi_t^{R}}\sbra{\varphi_t^{R}}^{\otimes k}-\sket{\varphi_t^{\otimes k}}\sbra{\varphi_t^{\otimes k}}\right|
		+
		\Tr\left|\gamma_N^{(k)}(t)-\sket{\varphi_t^{R}}\sbra{\varphi_t^{R}}^{\otimes k}\right|.\nonumber
	\end{equation}
	To bound the first term of above, we apply Lemma~\ref{lem:Conv_pilot}, which gives
	\begin{align}\label{ine:convpilo}
		\Tr\left|\sket{\varphi_{t}^{R}} \sbra{\varphi_{t}^{R}}^{\otimes k}- \sket{\varphi_{t}^{\otimes k}} \sbra{\varphi_{t}^{\otimes k}}\right|
		\leq
		2k\norm{\varphi_{t}^{R}-\varphi_{t}}_{2}
		\leq
		C Re^{CT} \,.
	\end{align}
	For the second, we use \eqref{ine:EAB} to obtain
	\[
	\Tr\left|\sket{\varphi_t^{R}}\sbra{\varphi_t^{R}}^{\otimes k}-\sket{\varphi_t^{\otimes k}}\sbra{\varphi_t^{\otimes k}}\right|=R_N^{(k)}(t) \leq \sqrt{8 E_N^{(k)}(t)} \leq  \frac{Ck}{RN^{1/4}}e^{CT\frac{|\log R|}{R^2}}.
	\]
	We then get the final result by choosing $R\geq (\log N)^{-\frac{1}{2}+\varepsilon}$ for any $\varepsilon >0$.
	We then have
	\begin{equation}
		\Tr\left|\gamma_N^{(k)}(t)-\sket{\varphi_t^{\otimes k}}\sbra{\varphi_t^{\otimes k}}\right|
		\leq
		\frac{Ce^{CT}}{(\log N)^{\frac{1}{2}-\varepsilon}}+ \frac{C}{N^{1/4}} N^{\frac{CT\log \log N}{(\log N)^{\varepsilon}}}\to 0\nonumber
	\end{equation}
	as $N\to \infty$.
\end{proof}

\vspace{2em}

\appendix
\newpage

\section{Useful Estimates}
Here we repeat some useful generic bounds well-known in the literature.
\begin{lemma}[\textbf{Bound on the magnetic term}]\label{lem:af_three_body}
	We have for any $u \in L^2(\R^2)$ and $R\geq 0$, that
	\[
	\int_{\R^2} \left| \bAR[|u|^2] \right|^2 |u|^2 
	\le \frac{3}{2} \|u\|_{2}^4 \int_{\R^2} \left| \nabla |u| \right|^2
	\le \frac{3}{2} \|u\|_{2}^4 \int_{\R^2} \left| \nabla |u| \right|^2.
	\]
\end{lemma}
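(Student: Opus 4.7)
The plan is to derive the inequality from a short chain of standard tools: the pointwise bound on the Biot--Savart kernel $\nabla^{\perp} w_R$, the weak Young inequality \eqref{eq:WY}, Ladyzhenskaya's inequality in two dimensions, and interpolation between $L^2$ and $L^4$. First I would observe that $\bAR[|u|^2] = \nabla^{\perp} w_R \ast |u|^2$ depends only on the modulus $|u|$, so we may replace $u$ by $|u|$ throughout and assume without loss of generality that $u \geq 0$; this automatically converts $\nabla u$ on the right-hand side into $\nabla|u|$. From the explicit form $\nabla^{\perp} w_R(x) = x^{\perp}/|x|_R^2$ one has the pointwise bound $|\nabla^{\perp} w_R(x)| \leq |x|^{-1}$ uniformly in $R \geq 0$, which yields
\begin{equation}
|\bAR[|u|^2](x)| \;\leq\; (|\cdot|^{-1} \ast |u|^2)(x).\nonumber
\end{equation}

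Second, I would split the integrand by H\"older as $|\bAR[|u|^2]|^2 \cdot |u|^2 \in L^2\cdot L^2$ and apply weak Young with $\||\cdot|^{-1}\|_{2,w} \leq C$:
\begin{align*}
\int_{\R^2} |\bAR[|u|^2]|^2 |u|^2
&\leq \snorm{\bAR[|u|^2]}_4^2 \,\snorm{u}_4^2\\
&\leq C\,\snorm{|\cdot|^{-1}\ast |u|^2}_4^2 \,\snorm{u}_4^2
\;\leq\; C\,\snorm{u}_{8/3}^4\,\snorm{u}_4^2.
\end{align*}
The two-dimensional Ladyzhenskaya inequality $\snorm{u}_4^2 \leq C\snorm{u}_2\snorm{\nabla u}_2$ together with the interpolation $\snorm{u}_{8/3} \leq \snorm{u}_2^{1/2}\snorm{u}_4^{1/2}$ then gives $\snorm{u}_{8/3}^4 \leq C\snorm{u}_2^3\snorm{\nabla u}_2$, and combining all factors yields
\begin{equation}
\int_{\R^2} |\bAR[|u|^2]|^2 |u|^2 \;\leq\; C\,\snorm{u}_2^4\,\snorm{\nabla u}_2^2, \nonumber
\end{equation}
with $\nabla u$ read as $\nabla|u|$ after the reduction above.

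The main obstacle is the \emph{sharp} constant $\tfrac{3}{2}$; the chain above only produces some universal $C$. To recover $\tfrac{3}{2}$ one presumably has to expand the square directly and symmetrize the three-body integrand, using the polarization identity
\begin{equation}
\frac{(x-y)\cdot(x-z)}{|x-y|^2|x-z|^2} = \frac{1}{2}\left(\frac{1}{|x-z|^2}+\frac{1}{|x-y|^2}-\frac{|y-z|^2}{|x-y|^2|x-z|^2}\right) \nonumber
\end{equation}
so that after symmetrization in the three variables the singular $1/|x-y|^2$ contributions either cancel or can be absorbed into a clean one-body Hardy term controlled by $\int |\nabla|u||^2$. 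Uniformity in $R$ costs nothing, since the pointwise bound $|\nabla^{\perp} w_R| \leq |\cdot|^{-1}$ is independent of $R$; if anything, using the sharper estimate $|\nabla^{\perp} w_R(x)| \leq \min(|x|^{-1}, |x|/R^2)$ could slightly improve the constant but is not needed for the stated inequality.
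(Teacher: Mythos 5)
Your first chain---H\"older, weak Young \eqref{eq:WY}, Ladyzhenskaya, interpolation---is valid and does prove
\[
\int_{\R^2} |\bAR[|u|^2]|^2 |u|^2 \leq C\,\|u\|_2^4\,\|\nabla|u|\|_2^2
\]
for some universal constant $C$, uniformly in $R\geq 0$. The reduction to $u\geq 0$ is legitimate since $\bAR[|u|^2]$ depends only on $|u|$ and $\|\nabla|u|\|_2\leq\|\nabla u\|_2$, and the steps all check out. This weaker form with unspecified $C$ is in fact all the paper uses downstream (in the proof of Theorem~\ref{Thm:cons}), so your argument would serve the paper's purposes.

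However, as you note yourself, it does not establish the stated lemma with the explicit constant $\tfrac{3}{2}$, and your sketch of how to recover it is not completed and does not match the paper's route. The paper writes the left-hand side as a triple integral over $\R^6$ with kernel $\frac{x-y}{|x-y|_R^2}\cdot\frac{x-z}{|x-z|_R^2}$, then uses the symmetry in $(x,y,z)$ of the weight $|u|^{\otimes 3}$ to replace this kernel by one third of its cyclic symmetrization. That symmetrized kernel is bounded pointwise by $C/\rho(x,y,z)^2$ with $\rho^2=|x-y|^2+|y-z|^2+|z-x|^2$, an inequality imported verbatim from \cite[Lemma~2.4]{LunRou}. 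Finally, Hardy's inequality in $\R^6$ converts $\int_{\R^6}\rho^{-2}||u|^{\otimes 3}|^2$ into $\tfrac{3}{2}\|u\|_2^4\int|\nabla|u||^2$. Your polarization identity $\frac{(x-y)\cdot(x-z)}{|x-y|^2|x-z|^2}=\tfrac12\left(\tfrac{1}{|x-z|^2}+\tfrac{1}{|x-y|^2}-\tfrac{|y-z|^2}{|x-y|^2|x-z|^2}\right)$ is correct for $R=0$ but is a different manipulation: you would still need to symmetrize, handle the $R>0$ denominators $|\cdot|_R^2$, show that the resulting expression is dominated by $C/\rho^2$, and then invoke six-dimensional Hardy. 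Those are precisely the steps of the cited LunRou inequality, so the sketch is pointing in a reasonable direction but leaves the genuine work undone. If the sharp constant is required, you must complete the symmetrization argument; if only $C\|u\|_2^4\|\nabla|u|\|_2^2$ is needed, your first chain suffices and is a clean alternative to the paper's three-body $+$ Hardy route.
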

\begin{proof}
	The proof follows the one of \cite[Lemma A.1]{LunRou} but with $\nabla^{\perp}w_R$ instead of $\nabla^{\perp}w_0$.
	We have
	\begin{align*}
		& \int_{\R^2} \left| \bAR[|u|^2](x) \right|^2 |u(x)|^2 \,\mathrm{d}x 
		= \iiint_{\R^6} \frac{x-y}{|x-y|_R^2} \cdot \frac{x-z}{|x-z|_R^2} |u(x)|^2 |u(y)|^2 |u(z)|^2 \,\mathrm{d}x\mathrm{d}y\mathrm{d}z \\
		&\leq \frac{1}{6} \int_{\R^6} \frac{1}{\rho(X)^2} \left| |u|^{\otimes 3} \right|^2 \mathrm{d}X 
		\le \frac{1}{2} \int_{\R^6} \left| \nabla_X |u|^{\otimes 3} \right|^2 \mathrm{d}X
		= \frac{3}{2} \int_{\R^2} \left| \nabla|u(x)| \right|^2 \mathrm{d}x
		\left( \int_{\R^2} |u(x)|^2 \mathrm{d}x \right)^2.
	\end{align*}
	Here, we used the symmetry and that 
	\begin{equation}
		0\leq\frac{x-y}{|x-y|_R^2} \cdot \frac{x-z}{|x-z|_R^2}+\frac{y-x}{|y-x|_R^2} \cdot \frac{y-z}{|y-z|_R^2}+\frac{z-y}{|z-y|_R^2} \cdot \frac{z-x}{|z-x|_R^2}\leq \frac{C}{\rho^2(x,y,z)}
	\end{equation}
	as shown in \cite[Proof of Lemma 2.4, inequality (2.13)]{LunRou} with $\rho^2(x,y,z)=|x-y|^2+|y-z|^2+|z-x|^2$ and where we concluded applying Hardy' inequality \cite[Lemma 2.5]{LunRou}.
\end{proof}

\begin{lemma}[\textbf{Diamagnetic inequality}]\label{lem:af_smooth_ineqs}
	We have for $u \in H^1(\R^2)$ and $R\geq 0$, that
	\begin{equation}\label{eq:af_diamagnetic}		
		\int_{\R^2} \left| (\nabla + \im\beta \bAR[|u|^2])u \right|^2 
		\ge \int_{\R^2} \left| \nabla |u| \right|^2.
	\end{equation}
\end{lemma}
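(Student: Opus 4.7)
This is the classical diamagnetic inequality, specialized to the vector potential $\mathcal{A} := \beta \bAR[|u|^2]$. The plan is to establish it as a consequence of the pointwise bound
\begin{equation*}
	\bigl|\nabla |u|(x)\bigr| \;\le\; \bigl|(\nabla + \im\mathcal{A}(x))u(x)\bigr| \qquad \text{for a.e. } x\in\R^2,
\end{equation*}
which holds whenever $\mathcal{A}$ is a real vector field and $u\in H^1(\R^2,\mathbb{C})$. Squaring and integrating immediately gives \eqref{eq:af_diamagnetic}.

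To prove the pointwise bound, I would first note that $\mathcal{A}$ is $\R^2$-valued: indeed $w_R$ is real-valued by \eqref{wr}, so $\nabla^\perp w_R$ is a real vector, and $|u|^2$ is a real scalar, hence the convolution $\bAR[|u|^2]=\nabla^\perp w_R\ast|u|^2$ is real. Next, to make the formal computation $\nabla|u| = \RePart(\bar\sigma\,\nabla u)$ with $\sigma=u/|u|$ rigorous, I would introduce the regularization $|u|_\varepsilon := \sqrt{|u|^2+\varepsilon^2}$ for $\varepsilon>0$. Then $|u|_\varepsilon\in H^1_{\mathrm{loc}}$ and a direct computation yields
\begin{equation*}
	\nabla|u|_\varepsilon \;=\; \frac{\RePart(\bar u\,\nabla u)}{|u|_\varepsilon} \;=\; \RePart\!\left(\frac{\bar u}{|u|_\varepsilon}\bigl[(\nabla + \im \mathcal{A})u\bigr]\right),
\end{equation*}
where the insertion of the magnetic term is free because $\RePart(\im\mathcal{A}\,\bar u u/|u|_\varepsilon)=0$ since $\mathcal{A}|u|^2/|u|_\varepsilon$ is real. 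Using $|\bar u/|u|_\varepsilon|\le 1$, the Cauchy–Schwarz inequality on $\R^2$ gives $|\nabla|u|_\varepsilon|\le|(\nabla+\im\mathcal{A})u|$ pointwise. Passing $\varepsilon\to 0$ (with $|u|_\varepsilon\to|u|$ in $L^2_{\mathrm{loc}}$ and uniform bound on $\nabla|u|_\varepsilon$) yields the pointwise diamagnetic bound for $|u|$.

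Finally, I would integrate the squared inequality over $\R^2$, which produces exactly \eqref{eq:af_diamagnetic}. I do not foresee any real obstacle here: the only subtle point is the harmless issue of differentiating $|u|$ at its zero set, which is precisely what the $\varepsilon$-regularization handles. No properties of the extended-anyon model beyond the realness of $\bAR[|u|^2]$ are used, so the proof is identical in form to the $R=0$ case treated in standard references (e.g.\ \cite{LieLos-01}).
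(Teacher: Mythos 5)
Your proof is correct and follows the standard regularization argument for the diamagnetic inequality, which is exactly the content of the cited reference \cite[Lemma~1.4]{LarLun-16}; the paper itself does not reprove it but simply points to that reference, noting (as you also observe) that the only ingredient needed is that $\bAR[|u|^2]$ is a real vector field.
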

This is the usual diamagnetic inequality, the proof can be founded in \cite[Lemma 1.4]{LarLun-16}.

\begin{proposition}\label{prop:conv}
	The functional $\mathcal{E}_R^{\mathrm{af}}[u_0]$ converges pointwise to $\mathcal{E}^{\mathrm{af}}[u_0]$ as $R\to 0$
	\begin{equation}
		\left|\mathcal{E}_R^{\mathrm{af}}[u_0]-\mathcal{E}^{\mathrm{af}}[u_0]\right|\leq C(1+\mathcal{E}^{\mathrm{af}}[u_0]^{\frac32})R.
	\end{equation}

\end{proposition}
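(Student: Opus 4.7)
The plan is to expand the square defining $\mathcal{E}_R^{\mathrm{af}}$ and isolate the $R$-dependent pieces. Writing
\[
\mathcal{E}_R^{\mathrm{af}}[u_0]-\mathcal{E}^{\mathrm{af}}[u_0]
= 2\beta\,\Re\!\braket{-\im\nabla u_0}{(\bAR-\bA)[|u_0|^2]\,u_0}
+\beta^2\bigl(\|\bAR[|u_0|^2]u_0\|_2^2-\|\bA[|u_0|^2]u_0\|_2^2\bigr),
\]
the task reduces to bounding two cross terms, both of which contain the factor $(\bAR-\bA)[|u_0|^2]$. For the latter I would use \eqref{eq:diff4} of Lemma~\ref{lem:estnew} at $s=0$ with $u=v=u_0$, giving
\[
\norm{(\bAR-\bA)[|u_0|^2]}_p \leq CR\,\norm{u_0}_{H^1}
\]
for any $2<p<\infty$ (e.g.\ $p=4$), with the $\snorm{u-v}_2$ contribution vanishing.

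For the mixed term I would apply Cauchy--Schwarz followed by Hölder to obtain
\[
\bigl|2\beta\,\Re\!\braket{-\im\nabla u_0}{(\bAR-\bA)[|u_0|^2]u_0}\bigr|
\leq C|\beta|\,\snorm{\nabla u_0}_2\,\snorm{(\bAR-\bA)[|u_0|^2]}_4\,\snorm{u_0}_4,
\]
then invoke Sobolev \eqref{eq:S} to control $\snorm{u_0}_4 \leq C\snorm{u_0}_{H^1}$. For the squared term, I would factor $a^2-b^2=(a+b)(a-b)$ and estimate
\[
\beta^2\bigl|\braket{(\bAR+\bA)[|u_0|^2]u_0}{(\bAR-\bA)[|u_0|^2]u_0}\bigr|
\leq C\beta^2\snorm{(\bAR+\bA)[|u_0|^2]u_0}_2\,\snorm{(\bAR-\bA)[|u_0|^2]}_4\,\snorm{u_0}_4,
\]
using Lemma~\ref{lem:af_three_body} applied separately to $\bAR$ and $\bA$ (the proof there works uniformly in $R\geq 0$) to get $\snorm{(\bAR+\bA)[|u_0|^2]u_0}_2\leq C\snorm{u_0}_2^2\,\snorm{\nabla|u_0|}_2$.

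The last step is to repackage everything in terms of $\mathcal{E}^{\mathrm{af}}[u_0]$. By the diamagnetic inequality (Lemma~\ref{lem:af_smooth_ineqs}) one has $\snorm{\nabla|u_0|}_2 \leq \mathcal{E}^{\mathrm{af}}[u_0]^{1/2}$, and by the triangle inequality combined with Lemma~\ref{lem:af_three_body},
\[
\snorm{\nabla u_0}_2 \leq \mathcal{E}^{\mathrm{af}}[u_0]^{1/2}+|\beta|\snorm{\bA[|u_0|^2]u_0}_2 \leq (1+C|\beta|)\,\mathcal{E}^{\mathrm{af}}[u_0]^{1/2},
\]
which also yields $\snorm{u_0}_{H^1}\leq 1+C\,\mathcal{E}^{\mathrm{af}}[u_0]^{1/2}$ (using $\snorm{u_0}_2=1$). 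Multiplying out, the mixed term contributes $CR(1+\mathcal{E}^{\mathrm{af}}[u_0])$ and the squared term contributes $CR(1+\mathcal{E}^{\mathrm{af}}[u_0]^{3/2})$; adding them gives the claimed bound. There is no real obstacle here beyond the book-keeping: the mild point is ensuring the $\bA$ (i.e.\ $R=0$) version of Lemma~\ref{lem:af_three_body} used above is legitimate, which follows from the same symmetrization argument recorded in the proof of that lemma since $|x|^{-2}\leq |x|_R^{-2}$.
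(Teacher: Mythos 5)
The paper does not actually prove Proposition~\ref{prop:conv}: it simply cites \cite[Proposition~A.6]{LunRou}. You are therefore supplying a self-contained derivation, and your strategy --- expand the square, isolate the factor $(\bAR-\bA)[|u_0|^2]$, control it through Lemma~\ref{lem:estnew}, and close via Lemma~\ref{lem:af_three_body} plus the diamagnetic bound and normalization $\snorm{u_0}_2=1$ --- is sound and uses only tools the paper already provides. This is a legitimate replacement for the missing proof.

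Two small points are worth flagging. First, you take Lemma~\ref{lem:estnew} at face value as giving $\snorm{(\bAR-\bA)[|u_0|^2]}_4\leq CR\snorm{u_0}_{H^1}$, but tracing the lemma's own proof the $R$-term is $\snorm{\nabla^\perp w_0-\nabla^\perp w_R}_1\snorm{|u_0|^2}_p$, and $\snorm{|u_0|^2}_p=\snorm{u_0}_{2p}^2\lesssim\snorm{u_0}_{H^1}^2$ --- the dependence is quadratic, not linear (the last line of that proof appears to drop a square). Pushing the corrected bound through your argument turns $\mathcal{E}^{\mathrm{af}}[u_0]^{3/2}$ into $\mathcal{E}^{\mathrm{af}}[u_0]^{2}$ in both the mixed and the squared term; to recover the stated $3/2$ power one would need a sharper interpolation (e.g.\ Gagliardo--Nirenberg $\snorm{u_0}_4^2\leq C\snorm{u_0}_2\snorm{\nabla u_0}_2$ rather than Sobolev $\snorm{u_0}_4\leq C\snorm{u_0}_{H^1}$). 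None of this affects the proposition's only use in the paper (Theorem~\ref{Thm:cons}), where merely pointwise convergence as $R\to 0$ is needed. Second, a bookkeeping slip in your last paragraph: the mixed term does not come out to $CR(1+\mathcal{E}^{\mathrm{af}}[u_0])$ but, like the squared one, to $CR\,\mathcal{E}^{\mathrm{af}}[u_0]^{1/2}(1+\mathcal{E}^{\mathrm{af}}[u_0])\leq CR(1+\mathcal{E}^{\mathrm{af}}[u_0]^{3/2})$; since that is still dominated by the claimed right-hand side, the conclusion is unchanged.
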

This proposition can be found in \cite[Proposition A.6]{LunRou}.

\section{BBGKY Hierarchy}\label{app:BBGKY}
Here we show how the effective equation \eqref{eq:pilotu} can be formally inferred from the BBGKY hierarchy. 
We start from \eqref{expanded_H} and express each term of \eqref{def:Schro2} in terms of the reduced density matrices to get
\begin{align}\label{eq:bbgky}
	\im \partial_{t}\gamma^{(k)}_{N}
	&=\sum_{j=1}^{k}\left [-\Delta_{j},\gamma^{(k)}_{N}\right ]\nonumber\\
	&+\alpha \sum_{\substack {j,l =1 \\ j \neq l }}^{k}\left[\left(-\im\nabla_{j}.\nabla^{\perp}w_{R}(x_{j}-x_{l})+\nabla^{\perp}w_{R}(x_{j}-x_{l}).-\im\nabla_{j}\right),\gamma^{(k)}_{N}\right]\nonumber\\
	&+\alpha(N-k) \sum_{j=1}^{k}\Tr_{k+1}\left[\left(-\im\nabla_{j}.\nabla^{\perp}w_{R}(x_{j}-x_{k+1})+\nabla^{\perp}w_{R}(x_{j}-x_{k+1}).-\im\nabla_{j}\right),\gamma^{(k+1)}_{N}\right]\nonumber\\
	&+\alpha(N-k) \sum_{l=1}^{k}\Tr_{k+1}\left[\left(-\im\nabla_{k+1}.\nabla^{\perp}w_{R}(x_{k+1}-x_{l})+\nabla^{\perp}w_{R}(x_{k+1}-x_{l}).-\im\nabla_{k+1}\right),\gamma^{(k+1)}_{N}\right]\nonumber\\
	&+2\alpha^{2} \sum_{\substack {j,l =1 \\ j < l }}^{k}\left[\left|\nabla^{\perp}w_{R}(x_{j}-x_{l})\right|^{2},\gamma^{(k)}_{N}\right]\nonumber\\
	&+2(N-k)\alpha^{2} \sum_{j=1}^{k}\Tr_{k+1}\left[\left|\nabla^{\perp}w_{R}(x_{j}-x_{k+1})\right|^{2},\gamma^{(k+1)}_{N}\right]\nonumber\\
	&+\alpha^{2} \sum_{\substack {j,l,m =1 \\ j \neq l\neq m }}^{k}\left[\nabla^{\perp}w_{R}(x_{j}-x_{l}).\nabla^{\perp}w_{R}(x_{j}-x_{m}),\gamma^{(k)}_{N}\right]\nonumber\\
	&+2(N-k)\alpha^{2} \sum_{\substack {j,m =1 \\ j \neq  m }}^{k}\Tr_{k+1}\left[\nabla^{\perp}w_{R}(x_{j}-x_{k+1}).\nabla^{\perp}w_{R}(x_{j}-x_{m}),\gamma^{(k+1)}_{N}\right]\nonumber\\
	&+(N-k)\alpha^{2} \sum_{\substack {l,m =1 \\  l\neq m }}^{k}\Tr_{k+1}\left[\nabla^{\perp}w_{R}(x_{k+1}-x_{l}).\nabla^{\perp}w_{R}(x_{k+1}-x_{m}),\gamma^{(k+1)}_{N}\right]\nonumber\\
	&+2(N-k)(N-k-1)\alpha^{2} \sum_{m=1}^{k}\Tr_{k+1,k+2}\left[\nabla^{\perp}w_{R}(x_{k+1}-x_{k+2}).\nabla^{\perp}w_{R}(x_{k+1}-x_{m}),\gamma^{(k+2)}_{N}\right]\nonumber\\
	&+2(N-k)(N-k-1)\alpha^{2} \sum_{j=1}^{k}\Tr_{k+1,k+2}\left[\nabla^{\perp}w_{R}(x_{j}-x_{k+1}).\nabla^{\perp}w_{R}(x_{j}-x_{k+2}),\gamma^{(k+2)}_{N}\right].
\end{align}

\newpage
The limit can be taken formally with $\alpha=\beta/(N-1)$ 
to obtain
\begin{align}
	\im \partial_{t}\gamma^{(k)}_{\infty}=&\sum_{j=1}^{k}\left [-\Delta_{j},\gamma^{(k)}_{\infty}\right ]\nonumber\\
	&+ \sum_{j=1}^{k}\Tr_{k+1}\left[\left(-\im\nabla_{j}.\nabla^{\perp}w_{R}(x_{j}-x_{k+1})+\nabla^{\perp}w_{R}(x_{j}-x_{k+1}).-\im\nabla_{j}\right),\gamma^{(k+1)}_{\infty}\right]\nonumber\\
	&+ \sum_{l=1}^{k}\Tr_{k+1}\left[\left(-\im\nabla_{k+1}.\nabla^{\perp}w_{R}(x_{k+1}-x_{l})+\nabla^{\perp}w_{R}(x_{k+1}-x_{l}).-\im\nabla_{k+1}\right),\gamma^{(k+1)}_{\infty}\right]\nonumber\\
	&+ 2\sum_{m=1}^{k}\Tr_{k+1,k+2}\left[\nabla^{\perp}w_{R}(x_{k+1}-x_{k+2}).\nabla^{\perp}w_{R}(x_{k+1}-x_{m}),\gamma^{(k+2)}_{\infty}\right]\nonumber\\
	&+ \sum_{j=1}^{k}\Tr_{k+1,k+2}\left[\nabla^{\perp}w_{R}(x_{j}-x_{k+1}).\nabla^{\perp}w_{R}(x_{j}-x_{k+2}),\gamma^{(k+2)}_{\infty}\right]\nonumber
\end{align}
under the formal assumption that $\gamma^{(k)}_{\infty}$ exists. If we also assume that it takes the form of a product state $\gamma^{(k)}=\sket{\varphi^{\otimes k}}\sbra{ \varphi^{\otimes k}} $ in the limit, we obtain
\begin{align}
	\im \partial_{t}\gamma^{(1)}=&\left [-\Delta_{1},\gamma^{(1)}\right ]\nonumber\\
	&+ \beta\Tr_{2}\left[\left(-\im\nabla_{1}.\nabla^{\perp}w_{R}(x_{1}-x_{2})+\nabla^{\perp}w_{R}(x_{1}-x_{2}).-\im\nabla_{1}\right),\gamma^{(2)}\right]\nonumber\\
	&+ \beta\Tr_{2}\left[\left(-\im\nabla_{2}.\nabla^{\perp}w_{R}(x_{2}-x_{1})+\nabla^{\perp}w_{R}(x_{2}-x_{1}).-\im\nabla_{2}\right),\gamma^{(2)}\right]\nonumber\\
	&+ 2\beta^{2}\Tr_{2,3}\left[\nabla^{\perp}w_{R}(x_{2}-x_{3}).\nabla^{\perp}w_{R}(x_{2}-x_{1}),\gamma^{(3)}\right]\nonumber\\
	&+ \beta^{2}\Tr_{2,3}\left[\nabla^{\perp}w_{R}(x_{1}-x_{2}).\nabla^{\perp}w_{R}(x_{1}-x_{3}),\gamma^{(3)}\right].
\end{align}
By taking the limit $R\to 0$, we get
\begin{align}
	\im \partial_{t}\gamma^{(1)}=&\left [\left (-\im\nabla +\beta \bA\left [| \varphi_t|^{2}\right ]\right )^{2},\gamma^{(1)}\right ]\nonumber\\
	&-\beta\left [\nabla^{\perp}w_{R}\ast\left (2\beta \bA\left [| \varphi_t|^{2}\right ]|\varphi_t|^{2}+\im\left (\varphi_t\nabla\bar{\varphi_t}-\bar{\varphi_t}\nabla\varphi_t\right ) \right ),\gamma^{(1)}\right ].\nonumber
\end{align}
The above is equivalent to the CSS effective equation
\begin{align}\label{eq:pilot}
	\im \partial_{t}\varphi_t=&\left (-\im\nabla +\beta \bA\left [|\varphi_t|^{2}\right ]\right )^{2}\varphi_t\nonumber\\
	&- \beta\left [\nabla^{\perp}w_{0}\ast\left (2\beta \bA\left [| \varphi_t|^{2}\right ]|\varphi_t|^{2}+\im\left (\varphi_t\nabla\overline{\varphi_t}-\overline{\varphi_t}\nabla\varphi_t \right ) \right )\right ]\varphi_t
\end{align}
in which
\begin{equation}\label{app:CSS}
	\bA\left [|\varphi_t|^{2}\right ]:=\bA^{\!R=0}\left [|\varphi_t|^{2}\right ]
	=\nabla^{\perp}w_0\ast|\varphi_t|^{2}.
\end{equation}

\section{Remark on the Dynamics with Two-Body Interaction}\label{sec:twobody}

By following \cite{Pickl2015RMP}, adding a two-body interaction $\mathsf{W}_{N,\delta}$ to the energy should be strait-forward. The new Hamiltonian would be
\begin{equation}\label{def:HN+2bdint}
	H_{N}=\sum_{j=1}^{N}\left(-\im\nabla_j+\alpha \bAR_j\right)^{2}-\frac{g}{N-1}\sum_{i\neq j=1}^{N}\mathsf{W}_{N,\nu}(x_i-x_j),
\end{equation}
where $\alpha = \beta/(N-1)$ and
\[
\mathsf{W}_{N,\nu}(x)=N^{2\nu}\mathsf{w}(N^\nu x) \quad\text{for some }\mathsf{w}\in C_c^\infty \text{ with } \|\mathsf{w}\|_1 = 1 \text{ and } \nu>0.
\]
Note that the ground state of such a Hamiltonian is studied in detail in \cite{Ngu_24, Ata_lun_Ngu_24} for many interesting regimes of $R$ and $\nu$.

The new CSS equation would include a nonlinear interaction with coupling constant $g$
\begin{align}
	\im \partial_{t}u =&\left (-\im\nabla +\beta \bA\left [| u|^{2}\right ]\right )^{2}u - g|u|^2 u\nonumber\\
	&- \beta\left [\nabla^{\perp}w_{R}\ast\left (2\beta \bA\left [| u|^{2}\right ]| u|^{2}+\im\left (u\nabla\overline{u}-\overline{u}\nabla u \right ) \right )\right ]u.
\end{align}
It is clear from \cite{Pickl2015RMP} that the derivation of this new effective equation holds with the interacting $H_N$.
We would then have, for $R = N^{-\eta}$, $0<\eta<c$ and some restricted $\nu>0$, that
\begin{equation}
	\Tr\left|\gamma_N^{(k)}(t)-\sket{\varphi_t^{\otimes k}}\sbra{\varphi_t^{\otimes k}}\right| \to 0
\end{equation}
as $N\to \infty$.
The only point that would need to be checked is the $R$-dependent well-posedness and the control of $\snorm{u}_{H^2}$ of Theorem~\ref{thm:wellpose}. Some work would be needed to show that similar results still hold when adding the term $g|u|^2 u$ to the effective equation. We refer to \cite[Theorem 2]{Ngu_24} for detail about the specific possible interesting regimes.

\section{Global Welposedness of $\mathrm{CSS}(R,\varphi_0)$}
We give here an other version of Theorem~\ref{thm:wellpose} in which we allow some divergence in the parameter $R$. It provides the global well-posedness for any $R>0$, $T(R)=+\infty$, and does not require the constraint $|\beta|\leq c$. The price to pay to relax this constraint on $\beta$ is the diverging control we obtain in \eqref{ine:H2norm2} compared to \eqref{ine:normH1u}.
\begin{theorem}[Global well-posedness in $H^2$]\label{thm:u-GWP}
	For any $0\leq R<1$ and any initial data $u_0\in H^2(\R^2)$ there exists a time $T(R)>0$ only depending on $R$ and on $\norm{u_0}_{H^2}$ such that $\mathrm{CSS}(R,u_0)$ has a unique solution
	\[
	u\in C([0,T(R)],H^2(\R^2)).
	\] 
	Moreover, when $R>0$ we have the control
	\begin{align}\label{ine:H2norm2}
		\norm{-\Delta u_t}_2&\leq  CR^{-Ct}
	\end{align}
	implying that $T(R)=+\infty$.
\end{theorem}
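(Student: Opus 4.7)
The plan is to follow the same skeleton as the proof of Theorem~\ref{thm:wellpose}, but to replace, everywhere, the one step that required $|\beta|\,\|u_t\|_{H^1}$ to be small by an argument that tolerates arbitrary $\beta$ at the price of $|\log R|$-type constants (obtained via Lemma~\ref{lem:A-infty} and Corollary~\ref{cor:nabwRLp}). Local wellposedness on some $R$-dependent interval $[0,T_0(R)]$ in $H^2(\R^2)$ comes for free from the Picard iteration of \cite[Theorem~2.1]{Ber_Bou_Sau_95}: all the nonlinear quantities in the fixed-point map are finite for $R>0$ with no smallness required on $\beta$. Global wellposedness will follow from an a priori $H^2$-bound on every finite interval combined with the standard continuation principle.

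The first key step is to replace the absorption inequality that was used in \eqref{ine:H2ofu1}. We expand
\begin{equation*}
-\Delta u_t = \bigl(-\im\nabla + \beta\bAR[|u_t|^2]\bigr)^2 u_t + 2\im\beta\,\bAR[|u_t|^2]\cdot\nabla u_t - \beta^2 \bigl|\bAR[|u_t|^2]\bigr|^2 u_t,
\end{equation*}
and we bound the two lower-order terms by Lemma~\ref{lem:A-infty}, i.e.\ $\|\bAR[|u_t|^2]\|_\infty \le C\sqrt{|\log R|}\,\|u_t\|_{H^1}^2$, rather than by Sobolev in $L^4$ as before. Using the conserved bound $\|u_t\|_{H^1}\le C$ from Theorem~\ref{Thm:cons} and $\|\mathcal{A}_R^0 u_t\|_2 \le C\|u_t\|_{H^1}^3$ (which holds for any $\beta$), this yields
\begin{equation*}
\|-\Delta u_t\|_2 \;\le\; \|\partial_t u_t\|_2 + C\,|\log R|,
\end{equation*}
valid without any smallness assumption on $\beta$.

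The second step is a linear Gronwall estimate for $\|\partial_t u_t\|_2^2$. Writing $\im\partial_t u_t = \mathcal{A} u_t$ with $\mathcal{A}=(\nabla-\im\mathcal{A}_R)^2+\mathcal{A}_R^0$ self-adjoint, one has $\partial_t \|\partial_t u_t\|_2^2 = 2\Im\langle \partial_t u_t,(\partial_t\mathcal{A})u_t\rangle$, and I decompose $(\partial_t\mathcal{A})u_t$ into its $E^{(1)}$ (from $\partial_t(-\im\nabla+\beta\bAR)^2 u_t$) and $E^{(2)}$ (from $\partial_t\mathcal{A}_R^0 u_t$) pieces, exactly as in the proof of Theorem~\ref{thm:wellpose}. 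Crucially, each convolution against $\nabla^\perp w_R$ that appeared in the original proof estimated in $L^4$ is now estimated in $L^\infty$ by running the argument of Lemma~\ref{lem:A-infty} with $p=2+|\log R|^{-1}$ (so that $\|\nabla^\perp w_R\|_p\le C\sqrt{|\log R|}$ by Corollary~\ref{cor:nabwRLp}) and the Sobolev bound \eqref{ine:Cq} applied to $u_t$ at an exponent of order $|\log R|$ to handle the dual factor. This gives
\begin{equation*}
\|\nabla^\perp w_R * \Re(\overline{u_t}\,\partial_t u_t)\|_\infty \;\le\; C\,|\log R|\,\|\partial_t u_t\|_2,
\end{equation*}
and an analogous bound for $\partial_t\mathbf{j}_R$. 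Pairing these $L^\infty$ factors with the $L^2$-control $\|(-\im\nabla+\beta\bAR)u_t\|_2 \le C$ furnished by energy conservation (instead of using Sobolev in $L^4$, which previously re-introduced $\|\partial_t u_t\|_2$) eliminates the cubic term $\|\partial_t u_t\|_2^3$ and yields $|E^{(1)}|+|E^{(2)}| \le C\,|\log R|\,\|\partial_t u_t\|_2^2$.

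The conclusion is then mechanical: Gronwall's Lemma~\ref{lem:Gron} gives $\|\partial_t u_t\|_2^2 \le \|\mathcal{A}u_0\|_2^2\, e^{C|\log R|\,t}$, hence $\|\partial_t u_t\|_2 \le C\,R^{-Ct}$ (with $C$ depending on $\|u_0\|_{H^2}$ and $\beta$ but not on $t$ or $R$), and combining with the first step yields $\|{-\Delta}u_t\|_2 \le C\,R^{-Ct}$. Since this quantity is finite on every bounded interval, the local solution of the first step can be extended indefinitely by the standard continuation criterion, proving $T(R)=+\infty$. The main obstacle in this plan is the second step: producing a linear (rather than superlinear) Gronwall inequality without the smallness $|\beta|\|u_t\|_{H^1}<1$. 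That is precisely where the original Theorem~\ref{thm:wellpose} had to restrict $\beta$, and the trade-off here is that the $\sqrt{|\log R|}$-based $L^\infty$-estimate on $\bAR$ does the job for any $\beta$, but at the cost of the $R^{-Ct}$ divergence as $R\to 0$, which is the very reason this result cannot replace Theorem~\ref{thm:wellpose} in the derivation of Theorem~\ref{thm:main}.
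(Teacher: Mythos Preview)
Your proposal is correct and follows essentially the same route as the paper's proof: both replace the absorption step in \eqref{ine:H2ofu1} by the $L^\infty$-bound of Lemma~\ref{lem:A-infty}, and both avoid the cubic term in the Gr\"onwall estimate for $\|\partial_t u_t\|_2^2$ by estimating the convolution $\nabla^\perp w_R * \Re(\overline{u_t}\,\partial_t u_t)$ in $L^\infty$ (via Corollary~\ref{cor:nabwRLp} with $p=2+|\log R|^{-1}$) and pairing against the energy-controlled $L^2$-norm of the covariant gradient rather than its $L^4$-norm. The only cosmetic difference is that the paper treats $E^{(2)}$ with the weak-$L^2$ norm of $\nabla^\perp w_R$ (no $|\log R|$ factor there), whereas you sketch an $L^\infty$ estimate for it as well; either way the dominant $C|\log R|$ coefficient comes from $E^{(1)}$ and the conclusion is identical.
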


\begin{proof}
	Note that the abstract well-posedness is provided in the proof of Theorem~\ref{thm:wellpose}.
	Thus, it is enough to bound $\norm{-\Delta u_t}_2$ globally in time.
	\begin{align}
		\norm{-\Delta u_t}_2&=\norm{(\nabla  -\im \mathcal{A}_R +\im \mathcal{A}_R)^2 u_t}_2\nonumber\\
		&\leq \norm{(\nabla  -\im \mathcal{A}_R )^2u_t}_2+\norm{\beta\bAR[|u_t|^2](-\im\nabla+\beta\bAR[|u_t|^2])u_t}_2+|\beta|^2\norm{\left|\bAR[|u_t|^2]\right|^2u_t}_2\nonumber\\
		&\leq \norm{(\im\partial_t -\mathcal{A}_R^0)u_t}_2 +|\beta|\norm{\bAR[|u_t|^2]}_{\infty}(\mathcal{E}_R^{\mathrm{af}}[u_0]+|\beta|\norm{\bAR[|u_t|^2]u_t}_2)\nonumber\\
		&\leq   \norm{\partial_t u_t}_2+ \norm{\mathcal{A}_R^0u_t}_2 + C\sqrt{|\log R|}\,\norm{u_0}^3_{H^1}\nonumber\\
		&\leq   \norm{\partial_t u_t}_2+ C\sqrt{|\log R|}\,\norm{u_t}^3_{H^1}.\label{ine:H2ofu}
	\end{align}
	Here we replaced the constraint on $\beta$ of Theorem~\ref{thm:wellpose} by a $|\log R|$ divergence.
	We now use \eqref{ine:magtopartu1} to compute
	\begin{align}
		\partial_t \norm{\partial_t u_t}_2^2=\partial_t \norm{\mathcal{A} u_t}_2^2=2\im\Im\braket{(\im\partial_t \mathcal{A})u_t}{\mathcal{A}u_t}+2\im\Im\braket{\mathcal{A}^2u_t}{\mathcal{A}u_t}\nonumber.
	\end{align}	
	The second term in the above is the imaginary part of a real number and cancels. We then have to bound
	\begin{align}
		2\im\Im\braket{(\im\partial_t \mathcal{A})u_t}{\mathcal{A}u_t}&=2\im \Im \braket{\im\partial_t (-\im\nabla+\beta\bAR[|u_t|^2])^2u_t+(\im\partial_t\mathcal{A}_R^0)u_t}{\mathcal{A}u_t}
		:=E^{(1)}+E^{(2)}.
	\end{align}
	To treat $E^{(1)}$, we first calculate
	\begin{align}
		\im\partial_t \bAR[|u_t|^2]&=2\nabla^{\perp}w_R\ast\Re[\overline{u}_t\im\partial_t u_t]
	\end{align}	
	and get
	\begin{align}
		|E^{(1)}|&=8\left|\Im \braket{\beta\nabla^{\perp}w_R\ast\Re[\overline{u}_t\im\partial_t u_t]\cdot(-\im\nabla+\beta\bAR[|u_t|^2]) u_t}{\partial_tu_t}\right|
	\end{align}
	with
	\begin{align}
		|E^{(1)}|&\leq C\snorm{\nabla^{\perp}w_R\ast\Re[\overline{u}_t\im\partial_t u_t]}_{\infty}\snorm{(-\im\nabla+\beta\bAR[|u_t|^2])u_t}_2\norm{\partial_tu_t}_2\nonumber\\
		&\leq C\snorm{\partial_tu_t}_2\norm{u_0}_{H^1}\snorm{\nabla^{\perp}w_R}_{2+\varepsilon}\norm{\overline{u}_t\im\partial_t u_t}_{\frac{2+\varepsilon}{1+\varepsilon}}\nonumber\\
		&\leq C\sqrt{|\log R|}\snorm{\partial_tu_t}_2\norm{u_0}_{H^1}\norm{\im\partial_t u_t}_{2}\snorm{u_t}_{\frac{2+2\varepsilon}{\varepsilon}}\nonumber\\
		&\leq C|\log R|\snorm{\partial_tu_t}^2_2\norm{u_0}^2_{H^1}\label{ine:EAR1}
	\end{align}
	where we used the inequality \eqref{ine:magtopartu1} to get the last line but also Young inequality for the convolution, the conservation of energy \eqref{eq:conservation} and the estimates \eqref{ine:normH1u} with $\varepsilon=|\log R|^{-1}$. 
	We now treat
	\begin{align}
		|E^{(2)}|&=2|\im \Im \braket{(\im\partial_t\mathcal{A}_R^0)u_t}{\mathcal{A}u_t}|\nonumber\\
		&\leq C\snorm{\partial_tu_t}_2\norm{u_t}_4\snorm{\im\partial_t\mathcal{A}_R^0}_4\nonumber\\
		&\leq C\snorm{\partial_tu_t}_2\snorm{u_t}_4\snorm{\nabla^{\perp}w_R}_{2,w}\norm{4\beta\bAR[\Re(\overline{u_t}\im\partial_t u_t)]|u_t|^2+4\beta\bAR[|u_t|^2]\Re(\overline{u_t}\im\partial_t u_t)}_{\frac43}\nonumber\\
		&\qquad+ C\snorm{\partial_tu_t}_2\snorm{u_t}_4\snorm{\nabla^{\perp}w_R\ast \mathbf{J}[\im\partial_t u_t]}_4.\nonumber
	\end{align}
	Here we have two $\frac43$-norms to treat and the last term. The first gives 
	\begin{align}
		\snorm{4\beta\bAR[\Re(\overline{u_t}\im\partial_t u_t)]|u_t|^2}_{\frac43}&\leq C\snorm{|u_t|^2}_2\snorm{\nabla^{\perp}w_R\ast \Re(\overline{u_t}\im\partial_t u_t)}_4\nonumber\\
		&\leq  C\snorm{|u_t|^2}_2\snorm{\nabla^{\perp}w_R}_{2,w}\snorm{\overline{u_t}\im\partial_t u_t}_{\frac43}\nonumber\\
		&\leq Cs\norm{|u_t|^2}_2\snorm{\nabla^{\perp}w_R}_{2,w}\snorm{\im\partial_t u_t}_{2}\snorm{u_t}_4.
	\end{align}
	Here we used \eqref{eq:WY}.
	The second provides
	\begin{align}
		\norm{4\beta\bAR[|u_t|^2]\Re(\overline{u_t}\im\partial_t u_t)}_{\frac43}&\leq C\norm{\partial_tu_t}_2\norm{\bAR[|u_t|^2]}_8\norm{u_t}_8
		\leq  C\norm{\partial_tu_t}_2\norm{u_0}^2_{H^1}.
	\end{align}
	The last term to treat is
	\begin{align}
		\norm{\nabla^{\perp}w_R\ast \mathbf{J}[\im\partial_t u_t]}_4&= \norm{\nabla^{\perp}w_R\ast \im[(\im\partial_tu_t)\nabla\overline{u}_t+u_t\nabla(\im\partial_t\overline{u}_t)-(\im\partial_t\overline{u}_t)\nabla u_t-\overline{u}_t\nabla (\im\partial_tu_t)]}_4\nonumber\\
		&\leq  4\norm{\nabla^{\perp}w_R\ast (\im\partial_tu_t)\nabla\overline{u}_t}_4\nonumber\\
		&\leq C\snorm{\nabla^{\perp}w_R}_{2,w}\norm{ (\im\partial_tu_t)\nabla\overline{u}_t}_{\frac43}\nonumber\\
		&\leq C\snorm{\nabla^{\perp}w_R}_{2,w}\norm{ \partial_tu_t}_{2}\norm{u_t}_4
	\end{align}
	where we first used that $\nabla^{\perp}\cdot\nabla =0$ to integrate by parts in the convolution product the two $u_t\nabla(\im\partial_t\overline{u}_t)$ terms and then \eqref{eq:WY}.
	We then conclude that
	\begin{align}\label{ine:E2}
		|E^{(2)}|
		&\leq C\norm{\partial_tu_t}^2_2(\norm{u_t}^4_{H^1}+\norm{u_t}^3_{H^1}+\norm{u_t}^2_{H^1}).
	\end{align}
	If we gather \eqref{ine:EAR1} and \eqref{ine:E2} we get 
	\begin{align}\label{ine:dtu}
		\partial_t \norm{\partial_t u_t}_2^2&\leq  C|\log R|\norm{\partial_t u_t}_2^2.
	\end{align}
	We can apply Gr\"onwall's lemma~\ref{lem:Gron} to the above line and obtain
	\begin{align}
		\norm{\partial_t u_t}_2^2&\leq C\norm{\mathcal{A}u_0}_2^2 \, e^{C|\log R|t}\nonumber\\
		&\leq C(\norm{u_0}^2_{H^2}+\norm{u_0}^6_{H^1}) \, e^{C|\log R|t}
		\leq CR^{-Ct}.
	\end{align}
	This last inequality plugged into \eqref{ine:H2ofu}  allows us to state that $T(R)=+\infty$ and concludes the proof.
\end{proof}

\section{Sharpness of Lemma \ref{lem:sing2bd}}\label{app:counterexample}

	The logarithmic loss in Lemma \ref{lem:sing2bd} is optimal. The factor $|\log R|^2$ cannot be improved to $|\log R|$. Here we show a counter example example:
	For $0<R<1$, set $L:=|\log R|,$ and define the radial function
	\[
	f_R(x)=
	\begin{cases}
		\sqrt L, & |x|\le R,\\[1mm]
		\dfrac{\log(1/|x|)}{\sqrt L}, & R<|x|<1,\\[2mm]
		0, & |x|\ge1.
	\end{cases}
	\]
	Then $f_R\in H^1(\R^2)$ and
	\[
	\|\nabla f_R\|_{L^2}^2
	=
	2\pi\int_R^1
	\frac1{r^2L}\,r\,\dd r
	=
	2\pi.
	\]
	Moreover,
	\[
	\|f_R\|_{L^2}^2
	=
	\pi R^2L
	+
	\frac{2\pi}{L}
	\int_R^1
	\log^2(1/r)\,r\,\dd r
	\le C,
	\]
	so that
	\[
	\|f_R\|_{H^1(\R^2)}
	\le C
	\]
	uniformly in $R$.
	On the other hand,
	\[
	\begin{aligned}
		\int_{\R^2}
		\frac{|f_R(x)|^2}{\max(|x|,R)^2}\,\dd x
		&\ge
		\frac{2\pi}{L}
		\int_R^1
		\log^2(1/r)\,\frac{\dd r}{r}.
	\end{aligned}
	\]
	After the change of variables $t=\log(1/r)$,
	\[
	\frac{2\pi}{L}
	\int_R^1
	\log^2(1/r)\,\frac{\dd r}{r}
	=
	\frac{2\pi}{L}
	\int_0^L t^2\,\dd t
	=
	\frac{2\pi}{3}L^2.
	\]
	Consequently,
	\[
	\int_{\R^2}
	\frac{|f_R(x)|^2}{\max(|x|,R)^2}\,\dd x
	\gtrsim
	\log^2\frac1R\,
	\|f_R\|_{H^1(\R^2)}^2.
	\]
	Therefore the estimate
	\[
	\int_{\R^2}
	\frac{|f(x)|^2}{\max(|x|,R)^2}\,\dd  x
	\le
	C\left(1+\log^2\frac1R\right)
	\|f\|_{H^1(\R^2)}^2
	\]
	is sharp in the scale of $H^1(\R^2)$.
\bigskip

\section*{Acknowledgment}
We thank Paolo Antonelli, Kihyun Kim, and Sung-Jin Oh for helpful discussion about Chern--Simons--Schr\"odinger equation.
We also appreciate Serena Cenatiempo, Younghun Hong, Peter Pickl, and Nicolas Rougerie for insightful discussion.
T.G. is supported by the mathematics area of GSSI, L'Aquila, Italy.
J.L. is supported by the Swiss National Science Foundation through the NCCR SwissMAP, the SNSF Eccellenza project PCEFP2\_181153, 
by the Swiss State Secretariat for Research and Innovation through the project P.530.1016 (AEQUA), and
Basic Science Research Program through the National Research Foundation of Korea (NRF) funded by the Ministry of Education (RS-2024-00411072).
Part of this work was carried out during a visit to GSSI. We would like to express our gratitude to GSSI for hosting and funding J.L.'s visit to L'Aquila.

\bibliographystyle{siam}
\bibliography{biblio_Dec2024}

\begin{thebibliography}{10}

\bibitem{Am_Cam_Bak_95}
{\sc G.~Amelino-Camelia and D.~Bak}, {\em Schr{\"o}dinger self-adjoint
  extension and quantum field theory}, Physics Letters B, 343 (1995),
  pp.~231--238.

\bibitem{Am_Ni_08}
{\sc Z.~Ammari and F.~Nier}, {\em Mean field limit for bosons and infinite
  dimensional phase-sace analysis}, Annales Henri Poincar\'e, 9 (2008),
  pp.~1503--1574.

\bibitem{Am_Ni_09}
\leavevmode\vrule height 2pt depth -1.6pt width 23pt, {\em Mean field limit for
  bosons and propagation of wigner measures}, Journal of Mathematical Physics,
  50 (2009), p.~042107.

\bibitem{Am_N_11}
\leavevmode\vrule height 2pt depth -1.6pt width 23pt, {\em Mean field
  propagation of wigner measures and bbgky hierarchies for general bosonic
  states}, Journal de Math\'ematiques Pures et Appliqu\'ees, 95 (2011),
  pp.~585--626.

\bibitem{Ata_24}
{\sc A.~Ataei}, {\em A sharp condition on global wellposedness of
  {C}hern--{S}imons--{S}chr\"odinger equation}, 2024.
\newblock arXiv:2405.07315.

\bibitem{AEGGLN_26}
{\sc A.~Ataei, A.~Ellingsen, F.~Getzner, T.~Girardot, D.~Lundholm, and D.-T.
  Nguyen}, {\em Nonlinear landau levels in the almost-bosonic anyon gas}, 2026.

\bibitem{Ata_Lun_Gir_25}
{\sc A.~Ataei, D.~Lundholm, and T.~Girardot}, {\em Microscopic derivation of
  the stationary chern-simons-schr\"odinger equation for almost-bosonic
  anyons}, 2025.

\bibitem{Ata_lun_Ngu_24}
{\sc A.~Ataei, D.~Lundholm, and D.-T. Nguyen}, {\em A generalized liouville
  equation and magnetic stability}, 2024.
\newblock arXiv:2404.09332.

\bibitem{AvrHerSim}
{\sc I.~Avron, J.~Herbst and B.~Simon}, {\em {S}chr\"{o}dinger operators with
  magnetic fields. 1. general interactions}, Duke Math. J., 45 N 4 (1978).

\bibitem{Benedikter2015}
{\sc N.~Benedikter, G.~de~Oliveira, and B.~Schlein}, {\em Quantitative
  derivation of the {G}ross--{P}itaevskii equation}, Comm. Pure Appl. Math., 68
  (2015), pp.~1399--1482.

\bibitem{Ber_Bou_Sau_95}
{\sc L.~Berge, A.~de~Bouard, and J.~C. Saut}, {\em Blowing up time-dependent
  solutions of the planar, {C}hern--{S}imons gauged nonlinear schrodinger
  equation}, Nonlinearity, 8 (1995), p.~235.

\bibitem{Boccato2018CompleteBEC}
{\sc C.~Boccato, C.~Brennecke, S.~Cenatiempo, and B.~Schlein}, {\em Complete
  {B}ose--{E}instein condensation in the {G}ross--{P}itaevskii regime}, Comm.
  Math. Phys., 359 (2018), pp.~975--1026.

\bibitem{BocCenSch-15}
{\sc C.~Boccato, S.~Cenatiempo, and B.~Schlein}, {\em Quantum many-body
  fluctuations around nonlinear {S}chr\"odinger dynamics}, 18 (2017),
  pp.~113--191.

\bibitem{ChenLeeLee2018}
{\sc L.~Chen, J.~O. Lee, and J.~Lee}, {\em Rate of convergence toward {H}artree
  dynamics with singular interaction potential}, J. Math. Phys., 59 (2018),
  pp.~031902, 20.

\bibitem{ChenLeeSchlein2011}
{\sc L.~Chen, J.~O. Lee, and B.~Schlein}, {\em Rate of convergence towards
  {H}artree dynamics}, J. Stat. Phys., 144 (2011), pp.~872--903.

\bibitem{ChenPavlovic2011JFA}
{\sc T.~Chen and N.~Pavlovi{\'c}}, {\em The quintic {NLS} as the mean field
  limit of a boson gas with three-body interactions}, J. Funct. Anal., 260
  (2011), pp.~959--997.

\bibitem{XChen201ARMA}
{\sc X.~Chen}, {\em Second order corrections to mean field evolution for weakly
  interacting bosons in the case of three-body interactions}, Arch. Ration.
  Mech. Anal., 203 (2012), pp.~455--497.

\bibitem{CorDubLunRou-19}
{\sc M.~Correggi, R.~Duboscq, D.~Lundholm, and N.~Rougerie}, {\em Vortex
  patterns in the almost-bosonic anyon gas}, Europhysics Letters, 126 (2019),
  p.~20005.

\bibitem{Co_Fer_20}
{\sc M.~Correggi and D.~Fermi}, {\em Magnetic perturbations of anyonic and
  aharonov--bohm schr{\"o}dinger operators}, Journal of Mathematical Physics,
  62 (2021), p.~091901.

\bibitem{CorLunRou-16}
{\sc M.~Correggi, D.~Lundholm, and N.~Rougerie}, {\em Local density
  approximation for the almost-bosonic anyon gas}, Anal. PDE., 10 (2017),
  pp.~1169--1200.

\bibitem{Erdos2010Derivation-of-the-GP}
{\sc L.~Erd\H{o}s, B.~Schlein, and H.-T. Yau}, {\em Derivation of the
  {G}ross--{P}itaevskii equation for the dynamics of {B}ose--{E}instein
  condensate}, Ann. of Math. (2), 172 (2010), pp.~291--370.

\bibitem{ErdYau-01}
{\sc L.~Erd{\"o}s and H.-T. Yau}, {\em Derivation of the nonlinear
  {S}chr\"odinger equation from a many body {C}oulomb system}, Adv. Theor.
  Math. Phys., 5 (2001), pp.~1169--1205.

\bibitem{Esp_18}
{\sc J.~Esposito}, {\em An indefinite system of equations governing the
  fractional quantum {H}all effect}, 2018.
\newblock arXiv:1810.07723.

\bibitem{GinVel-79}
{\sc J.~Ginibre and G.~Velo}, {\em The classical field limit of scattering
  theory for nonrelativistic many-boson systems. {I}}, Commun. Math. Phys., 66
  (1979), pp.~37--76.

\bibitem{GinVel-79b}
\leavevmode\vrule height 2pt depth -1.6pt width 23pt, {\em The classical field
  limit of scattering theory for nonrelativistic many-boson systems. {II}},
  Commun. Math. Phys., 68 (1979), pp.~45--68.

\bibitem{Gir_20}
{\sc T.~Girardot}, {\em Average field approximation for almost bosonic anyons
  in a magnetic field}, J. Math. Phys., 61 (2020), p.~071901.

\bibitem{Rou_Gir_21}
{\sc T.~Girardot and N.~Rougerie}, {\em Semiclassical limit for almost
  fermionic anyons}, Communications in Mathematical Physics, 387 (2021),
  pp.~427--480.

\bibitem{Gol_Mou_Paul_16}
{\sc F.~Golse, C.~Mouhot, and T.~Paul}, {\em On the mean field and classical
  limits of quantum mechanics}, Communications in Mathematical Physics, 343
  (2016), pp.~165--205.

\bibitem{Go_Pa_19}
{\sc F.~Golse and T.~Paul}, {\em Empirical measures and quantum mechanics:
  Applications to the mean-field limit}, Communications in Mathematical
  Physics, 369 (2019), pp.~1021--1053.

\bibitem{Gol_Paul_22}
\leavevmode\vrule height 2pt depth -1.6pt width 23pt, {\em Mean-field and
  classical limit for the n-body quantum dynamics with coulomb interaction},
  Communications on Pure and Applied Mathematics, 75 (2022), pp.~1332--1376.

\bibitem{Hepp1974}
{\sc K.~Hepp}, {\em The classical limit for quantum mechanical correlation
  functions}, Comm. Math. Phys., 35 (1974), pp.~265--277.

\bibitem{IL1992PR}
{\sc R.~Iengo and K.~Lechne}, {\em Anyon quantum mechanics and
  {C}hern--{S}imons theory}, Phys. Rep., 213 (1992), pp.~179--269.

\bibitem{JP1990PRD}
{\sc R.~Jackiw and S.-Y. Pi}, {\em Classical and quantal nonrelativistic
  {C}hern--{S}imons theory}, Phys. Rev. D, 42 (1990), p.~3500.

\bibitem{JP1990PRL}
\leavevmode\vrule height 2pt depth -1.6pt width 23pt, {\em Soliton solutions to
  the gauged nonlinear {S}chr{\"o}dinger equation on the plane}, Phys. Rev.
  Lett., 64 (1990), p.~2969.

\bibitem{KnoPic-10}
{\sc A.~Knowles and P.~Pickl}, {\em Mean-field dynamics: singular potentials
  and rate of convergence}, Commun. Math. Phys., 298 (2010), pp.~101--138.

\bibitem{Lam_Lun_Rou_23}
{\sc G.~Lambert, D.~Lundholm, and N.~Rougerie}, {\em Quantum statistics
  transmutation via magnetic flux attachment}, Probability and Mathematical
  Physics, 4 (2023), pp.~803--848.

\bibitem{LarLun-16}
{\sc S.~Larson and D.~Lundholm}, {\em Exclusion bounds for extended anyons},
  2018.

\bibitem{Lee_21}
{\sc J.~Lee}, {\em Rate of convergence towards mean-field evolution for weakly
  interacting bosons with singular three-body interactions}, 2021.
\newblock arXiv:2006.13040.

\bibitem{Rou_Lun_Lev_26}
{\sc A.~{Levitt}, D.~{Lundholm}, and N.~{Rougerie}}, {\em {Magnetic
  Thomas-Fermi theory for 2D abelian anyons}}, SciPost Physics Core, 9 (2026),
  p.~018.

\bibitem{LieLos-01}
{\sc E.~H. Lieb and M.~Loss}, {\em Analysis}, vol.~14 of Graduate Studies in
  Mathematics, American Mathematical Society, Providence, RI, 2nd~ed., 2001.

\bibitem{Lim_Min_17}
{\sc Z.~M. Lim}, {\em Well-posedness and scattering of the
  Chern-Simons-Schrödinger system}, PhD thesis, 2017.

\bibitem{Liu_Smi_13}
{\sc B.~Liu and P.~Smith}, {\em Global wellposedness of the equivariant
  {C}hern--{S}imons--{S}chr\"odinger equation}, Rev. Mat. Iberoam., 32 (2016),
  pp.~751--794.

\bibitem{Liu_Smi_Tat_14}
{\sc B.~Liu, P.~Smith, and D.~Tataru}, {\em {Local Wellposedness of
  {C}hern–{S}imons–{S}chr\"odinger}}, Int. Math. Res. Not., 2014 (2013),
  pp.~6341--6398.

\bibitem{Lun_17}
{\sc D.~Lundholm}, {\em Many-anyon trial states}, Physical Review A, 96 (2017),
  p.~012116.

\bibitem{Lun_24}
{\sc D.~Lundholm}, {\em Properties of 2D anyon gas}, Elsevier, 2024,
  p.~450–484.

\bibitem{LunRou}
{\sc D.~Lundholm and N.~Rougerie}, {\em {The average field approximation for
  almost bosonic extended anyons}}, J. Stat. Phys., 161 (2015), pp.~1236--1267.

\bibitem{Lun_Rou_16}
{\sc D.~Lundholm and N.~Rougerie}, {\em Emergence of fractional statistics for
  tracer particles in a laughlin liquid}, Phys. Rev. Lett., 116 (2016),
  p.~170401.

\bibitem{MitrouskasPetratPickl2019}
{\sc D.~Mitrouskas, S.~Petrat, and P.~Pickl}, {\em Bogoliubov corrections and
  trace norm convergence for the {H}artree dynamics}, Rev. Math. Phys., 31
  (2019), pp.~1950024, 36.

\bibitem{NS2020CMP}
{\sc P.~T. Nam and R.~Salzmann}, {\em Derivation of 3d energy-critical
  nonlinear schr{\"o}dinger equation and bogoliubov excitations for bose
  gases}, Commun. Math. Phys., 375 (2020), pp.~495--571.

\bibitem{Ngu_24}
{\sc D.-T. Nguyen}, {\em 2d attractive almost-bosonic anyon gases}, 2024.
\newblock arXiv:2409.00409.

\bibitem{Oh_Pus_13}
{\sc S.-J. Oh and F.~Pusateri}, {\em Decay and scattering for the
  {C}hern--{S}imons--{S}chr\"odinger equations}, Int. Math. Res. Not., 2015
  (2015), pp.~13122--13147.

\bibitem{Pickl2011}
{\sc P.~Pickl}, {\em A simple derivation of mean field limits for quantum
  systems}, Lett. Math. Phys., 97 (2011), pp.~151--164.

\bibitem{Pickl2015RMP}
\leavevmode\vrule height 2pt depth -1.6pt width 23pt, {\em Derivation of the
  time dependent {G}ross-–{P}itaevskii equation with external fields}, Rev.
  Math. Phys., 27 (2015), p.~1550003.

\bibitem{Raj_Sig_20}
{\sc K.~Rajaratnam and I.~Sigal}, {\em Vortex lattice solutions of the zhk
  chern-simons equations}, 33 (2020), pp.~5246--5271.
\newblock Publisher Copyright: {\textcopyright} 2020 IOP Publishing Ltd \&
  London Mathematical Society.

\bibitem{ReedSimonII1975}
{\sc M.~Reed and B.~Simon}, {\em Methods of modern mathematical physics. {II}.
  {F}ourier analysis, self-adjointness}, Academic Press, New York-London, 1975.

\bibitem{AF2003Sobolev}
{\sc J.~J.~F. Robert A.~Adams}, {\em Sobolev spaces}, Pure and Applied
  Mathematics, Academic Press, 2~ed., 2003.

\bibitem{Rodnianski2009}
{\sc I.~Rodnianski and B.~Schlein}, {\em Quantum fluctuations and rate of
  convergence towards mean field dynamics}, Comm. Math. Phys., 291 (2009),
  pp.~31--61.

\bibitem{Spohn1980}
{\sc H.~Spohn}, {\em Kinetic equations from {H}amiltonian dynamics: {M}arkovian
  limits}, Rev. Modern Phys., 52 (1980), pp.~569--615.

\bibitem{Vis_25}
{\sc F.~L.~A. Visconti}, {\em Average-field approximation for very dilute
  almost-bosonic anyon gases}, 2025.

\end{thebibliography}
\end{document}